\def\disp{\displaystyle}
\def\crr{\cr\noalign{\vskip2mm}}
\def\dref#1{(\ref{#1})}
\theoremstyle{plain}
\newtheorem{theorem}{Theorem}[section]
\newtheorem{lemma}{Lemma}[section]
\newtheorem{assumption}{Assumption}[section]
\numberwithin{equation}{section}
\theoremstyle{definition}
\newtheorem{definition}{Definition}%[section]
\newtheorem{remark}{Remark}[section]
\newcommand{\R}{{\mathbb R}}
\def\A{\mathcal{A}}
\begin{document}

\title{{\bf
%Feedback Linearization by
%Adjoint High Gain Internal Model Observer
%High Gain Internal Model Observer based Disturbance Compensation Strategy
%Dynamics Compensation for Linear Systems:
%  Active Disturbance Rejection   Control
Extended Dynamics Observer for Linear Systems with Disturbance\footnote{\small
This work is supported by  the National Natural Science
Foundation of China, No.  61873153. }} }

\author{ Hongyinping Feng$^{a}$\footnote{\small  Corresponding author.
Email: fhyp@sxu.edu.cn.}  \  \ and\  \
   \ Bao-Zhu Guo$^{b,c}$  \\
$^a${\it \small School of Mathematical Sciences,}
{\it \small Shanxi University,  Taiyuan, Shanxi, 030006,
China}\\
$^b${\it \small Department of Mathematics and Physics,  }
{\it \small  North
China Electric Power University, Beijing 102206, China}\\
$^c${\it \small Key Laboratory of System and Control, Academy of Mathematics and Systems Science,}
\\{\it \small  Academia Sinica, Beijing, China}
 }
\date{Nov.5,2020}

 \maketitle

\begin{abstract}
This is the last  part of four series papers,
 aiming at stabilization for  signal-input-signal-output (SISO) linear finite-dimensional systems
   corrupted by  general input disturbances.
 A   new observer, referred  to as Extended Dynamics Observer (EDO),
 is proposed  to estimate both the state and disturbance simultaneously.
 The working mechanism of  EDO  consists of two parts: The disturbance with known dynamics is canceled
 completely by its dynamics and the disturbance with unknown dynamics is absorbed by high-gain.
 It is found that the high-gain is always working as long as  the control plant with unknown input disturbance is observable
 which is  the only assumption for the observer design.
  % The    tuning gain parameters choice of the observer is based on  the prior information  we have known about  the  disturbance.
 When the disturbance dynamics are completely unknown except some boundedness, the
 EDO is reduced to an extension of the well-known extended state observer or high-gain observer.
 The main advantage of the developed method is that the prior
   information  about both the control plant and the disturbance can be utilized
  as much as possible. The more the prior  information  we have, the better
performance the observer would be.
 %When all  dynamics of the disturbance and the control plant  are known,
%the  zero steady-state error can be reached, which is reduced  to the internal model principle.
%  When  nothing is known for  dynamics of the disturbance  except little information such as  boundedness,
%  the EDO  turns to be
%  the conventional extended state observer (ESO).
 An EDO based stabilizing output feedback is also developed in the spirit of
estimation/cancellation strategy.
The stability of the resulting closed-loop system is established  and some of the theoretical results are
 validated by numerical simulations.

%In this paper, we develop  a new approach to stabilize a  general linear system   corrupted by input disturbance.     This  is a great  improvement of the
%  emerging control technology,   active disturbance rejection control (ADRC),  because
%  it works well for  more general control plants
%   %contains    the non-minimum phase systems
%    rather than the  canonical form for the ADRC. Moreover importantly, the proposed control
%    improves  the ADRC by adding a mechanism to make use of  the prior  information  about the control plant.

\vspace{0.3cm}

\noindent {\bf Keywords:}~ Active disturbance rejection control,    high-gain, internal
model principle, input disturbance,   observer.
\vspace{0.3cm}

%\noindent {\bf AMS subject classifications:}~  74K10,   93C20,
%93B52.
\end{abstract}

\section{Introduction}
The dynamic  model of physical systems,  as the prior  information  of control plants, has been
used  in modern control theory as a starting point of the feedback control design.
Since
 the   1960s when  the  control theory was seen as a branch of
applied mathematics,    a fair amount of control strategies such as adaptive control \cite{Adaptive+Control}, optimal control \cite{Optimal2009} as well as nonlinear  control \cite{Nonlinear1992} have been developed  based on mathematical models. These control techniques     make use of   prior  information  about the control plant    as sufficient as possible in the controller design. However, all the model-based feedback laws must be robust to the control plant
uncertainty in engineering applications
so that the ``engineering approximation" can be made \cite{Qianxuesen}.  In other words, the
unknown parts of the control plant, which serve as the ``disturbance'', must be taken into account in the    model-based control design.

The tolerance of disturbance  and uncertainty  is one of the major concerns in modern control theory.
There are many well developed control design approaches to cope with disturbance  in control systems.
 The  adaptive control can be used for the system with unknown parameters  \cite{Adap1} and
  the  robust control is an approach to   achieve robust performance  in the presence of bounded modelling errors \cite{Robust2006}.
 The sliding mode control \cite{Da1} and high-gain control \cite{Highgain+uncertainty1988} work  for
 systems with a large scale of uncertainties.
   The active disturbance rejection control (ADRC)
   has been  recognized as an  almost model free control technology \cite{GuoZHaoBook}.
   Since  it  was  proposed in   the late 1980s  by \cite{Han2009},
    it has been  successfully  applied to  numerous  engineering control problems like
typically control of synchronous motors \cite{ADRCAPP1},  high-speed railway \cite{railway},
 DC-DC power converter \cite{SunGao2005}, flight vehicles control \cite{XiaFu13}, and gasoline engines \cite{Xue2015},
 among many others.
% The core issue of ADRC is the extend state observer (ESO) which  is  developed in \cite{Han2009} and  simplified by \cite{GAO2003}
% can estimate both the system state and the disturbance  simultaneously.
  %The main idea of ESO
% is that:  (i), view the   unknown model dynamics as a disturbance that, together with the
%  external disturbances,  is called total disturbance.
%  The total disturbance is just a signal of time  and  is reflected entirely in the observable measured output and can hence be estimated;
% (ii),   view the total  disturbance  as an additional state and   design the observer for an extended  system that consists of
% the control plant and the additional state. This idea  is almost the same as the  high-gain observer in \cite{KhalilTAC2008}.
% Once the total disturbance is estimated, it is then  compensated  in the feedback loop by its estimate. As a consequence, the control plant is reduced to an almost linear one for which
% all the linear control tools would be available.
%
%
%  Comparing with the other disturbance treatments
% such as the adaptive control and the internal model principle

%Roughly speaking, ADRC is a systematic control design approach which is used to cope with  large scale  internal
%uncertainties and external  disturbances and   its convergence has also been  established mathematically
% in \cite{GUOZhaoSCL2011,bz2} and the monograph \cite{GuoZHaoBook}.

 As an error driven control technology, ADRC is almost free of mathematical models and
  even works well  for those control plants that are  almost unknown \cite{GaoIJC2017}.
 However, every coin has two sides. On the one hand, the model free characteristic   leads to
  the strong robustness to the uncertainty and  disturbance, and on  the other hand,
  it may waste  more or less some useful  prior information that we  have already  known.
  The waste of the  prior disturbance information also exists    to a varying extent   in
  other control techniques such as
  the
  robust control, high-gain control   and  the sliding mode control.
   In engineering applications, we are not always
   completely ignorant of
     the disturbance. Some
rough information like  smoothness,  boundedness, particularly some dynamic  information
of the disturbance are  available sometimes. This prior  information    might  be useful or even valuable
for the  observer design.  A typical example is    the harmonic disturbance where the known
frequencies are very useful in internal model principle (IMP)  yet are completely wasted in ADRC.
%What is more, the conventional ESO  only works  for a special  canonical  form of the control plant.
% There is  no systematic
%way to deal with general linear systems  by  ESO,  which   limits the applicability of ESO.
The IMP is an elegant approach  to    robust output regulation, both for finite-dimensional systems \cite{HuangJ2004} and for
 infinite-dimensional ones \cite{Lassi2014SIAM}.  However, the disturbance in  IMP is almost known. Precisely,
  the dynamics of disturbance are required to be  known in IMP, which blocks the general disturbance out the door of the    IMP.
  In one word, a great   improvement room still  exists for both ADRC and IMP  but has not been    noticed and emphasized at least in literature.

In this paper, we develop a  fundamental principle to design  observer via  online measurement information and   prior information about both the  control plant and disturbance. The model of  control plant, as the prior  information  of the system, has been considered sufficiently in
literature. However, the disturbance prior information
 %such as the boundedness, smoothness, periodicity, frequencies of   disturbance   as well as
 in particular for
 the dynamic modes of  disturbance
is usually    ignored.
 We believe that a good observer
   should possess not only the strong robustness to the disturbance and control plant   but also  the ability to make sufficient  use of all the  valuable prior  information.
 The more the prior  information  is  correctly used, the better performance of the observer would be.
 When the prior  information  is  insufficient, the observer can still do its best.
 %In this paper, we  develop a systematic approach to  utilize   the 	prior dynamic information about the disturbance to
%     improve the performance of the ESO.
 In this spirit, a new observer,  referred to as Extend Dynamics Observer (EDO),  is designed
  to estimate both the disturbance and the system state simultaneously.
  The EDO   inherits almost all  the  advantages  from the  extended state observer (ESO) like
 model free characteristic yet can properly utilize the prior  information  not only about the control plant but also
  the disturbance.
 If all the prior dynamic   information  about the total disturbance is available,
the EDO can admit  a zero steady-state error.

%Even in the worst case where only the bound of the disturbance is
%available, the EDO still works well with better estimation accuracy than ESO.

 % design observer, called High Gain Internal Model Observe(NESO),
  %to achieve these estimation.
  % The main advantage of the  EDOlies in the sufficient utilization of the prior
% information  not only about the control plant but also about the disturbance.
%The NESO
%always works no matter how much we have known the  information  about the disturbance.
%The steady-state error of  EDOis proportional to the prior
% information .
%The more
%the prior  information  we have known,
%the smaller the steady-state error of EDOis.

%(2)In many mechanical and electromechanical applications,
%e.g., [15], [17], a slowly varying disturbance torque
%is approximated by a constant that is modeled as the solution
%of . Augmenting this equation with the plant dynamics,
%an observer is designed to estimate the torque, which is then
%canceled by control. In most of these studies the observer is
%designed using linear observer theory without rigorously analyzing
%the impact of the slowly varying torque, but good simulation
%or experimental results are reported.

%As \cite{Qianxuesen}

Consider  the following  SISO  system:
   \begin{equation} \label{2020151017}
 \left\{\begin{array}{l}
\disp \dot{x} (t) = A  x (t)+B [d(t)+ u(t)] ,\crr
\disp y(t)=Cx (t),
\end{array}\right.
\end{equation}
 where $A \in  \R^{n\times n} $  is the system matrix, $B \in   \R^{n}   $ is the control
 matrix,
    $C\in  \R^{1\times n} $ is the output matrix, $u(t)$ is the control,
  $y(t)$ is the measurement   and $d\in L^2_{\rm loc}[0,+\infty)$ is the  disturbance.
  In this paper, all the unknown signals in the control channel are  referred to as
  disturbances which  may contain  system uncertainties  and  external disturbances.

 If $\hat{d}(t)$ is an estimation of $d(t)$,  a stabilizing  feedback control can be naturally designed as
 \begin{equation} \label{20201261144}
u(t)=-\hat{d}(t)-u_s(t),
\end{equation}
where the first term on the right side is obviously used to
 compensate for the disturbance and the second term $u_s(t)$ is a stabilizer.
  This is referred to as  an estimation/cancellation  strategy and obviously,
  the key point  for  such a strategy
  is the estimation of the  state and disturbance. Different from the
ESO and IMP,  in this work, we decompose  the disturbance into two parts:
the disturbance with known dynamics and the others otherwise. This decomposition is achieved by the mechanism of the system itself automatically.
 The disturbance with known dynamics is treated by likewise observer based  on  IMP
   and the disturbance with unknown dynamics  is dealt with
 by the high-gain which is the core of disturbance estimation in ADRC.
In this way, the prior information can be utilized as sufficient  as possible which remedies the
deficiency of     ADRC and IMP.

The rest of the paper is organized as follows.  In the next section, Section \ref{Dynamics+dis},
we consider the disturbance dynamics and the observability of   system \dref{2020151017}.
   Section \ref{NESO} gives a sufficient condition on  which the  high-gain works.
 Section  \ref{KnownDynamics} is devoted to observer design with   known disturbance dynamics
and Section  \ref{Se.5} is on  observer design for general  disturbance. In Section \ref{ESO},
we focus on systems where the disturbance dynamics is not available at all. A comparison between EDO and ESO is also presented.
%It is shown that
%although there is some similarity between our approach and traditional ESO, our
%observer is more efficient in estimation of the state and the accuracy of the
%estimations for both state and disturbance are presented.
Section \ref{Se.7} presents estimation for
general period disturbance which contains harmonic disturbance as a special case.
An observer based output feedback is proposed in Section \ref{SectionFeedback}.
The stability of the closed-loop is also considered. Numerical simulations are presented in  Section
 \ref{Simulation} to validate the theoretical  results, followed up  conclusions in    Section
 \ref{conclusion}.
 %which gives  complete overall   scheme of controller design.

Throughout the paper, the $n$ and $m$ denote the positive integers and
  the ${\mathbb R}^n$   denotes $n$-dimensional    Euclidean space.
The identity operator  on   $\R^n $  will be denoted by $I_n$ and
the norm of
  $\R^n$  is denoted by $\|\cdot\|_{\R^n}$.
 The spectrum of   operator or matrix  $A$ is denoted by
$\sigma(A)$;
 the largest real part of eigenvalue of $A$
is denoted as $\Lambda_{\rm max }(A)$;
the transpose of matrix $A$ is represented   by $A^\top$.
For simplicity, we denote
$\mathbb{C}_{+}=\{\lambda\in \mathbb{C}\ |\ {\rm Re}\lambda\geq0\}$ and
 $\|\cdot\|_{\infty}=\|\cdot\|_{L^{\infty}[0,\infty)}$.

  \section{   Disturbance dynamics and observability  }\label{Dynamics+dis}

We first consider the disturbance dynamics which  serve   as the   prior information to the disturbance estimation.
%we  represent  the general  disturbance as an output of an inhomogeneous system so that the  disturbance dynamics
%  can be utilized in  observer design.
% Certainly,
Generally speaking,
 not  all continuous disturbances   can be estimated effectively online by a  deterministic  dynamic  system. For instance,
if the disturbance is  a sample path of the Wiener process, it is differentiable for no time $t\geq0$.
 In this case, we do not have any dynamic   information  about the disturbance and
the  estimation  of   such a disturbance  by virtue of typical dynamic system observer seems impossible.
Based on this observation, we first limit ourselves into  an estimable    signal space  of  the following:
\begin{equation}\label{2020251018}
 \mathbb{S}=\left\{s\in L^ {\infty}[0,\infty)\ |\  \dot{s} \mbox{ exists in the weak sense and belongs to } L^{\infty}[0,\infty) \right\} ,
\end{equation}
whose  norm   is given by
\begin{equation}\label{2020251003}
\|s\|_{\mathbb{S}} =  |s(0)|+\| \dot{s} \|_{  \infty},\ \ \forall\ s\in \mathbb{S}.
\end{equation}
A simple computation shows that $(\mathbb{S},\|\cdot\|_{\mathbb{S}} )$ is    a Banach space.
Noting that the
     piecewise    signal   such as
\begin{equation}\label{20201021152}
 s_T(t)=\left\{\begin{array}{ll}
 e^t,&t\in[0,T],\\
 e^T,&t\geq T
 \end{array}\right.
%e^{t-k}, \  t\in[k,k+1),\ k=0,1,2,\cdots,\ \ t\geq0
 \end{equation}
belongs to $\mathbb{S}$,
the signal space $\mathbb{S}$ is quite general and can include the harmonic signals, bounded   continuously differentiable   periodic signals,
    piecewise   polynomial signals,
piecewise exponential signals and their linear combinations.

  Let $(G,Q)$ be an  observable system
 with the state space $\R^m$ and output space $\R$.
 Define
\begin{equation}\label{201912311058}
\begin{array}{l}
\disp \Omega   ( G)= \Big{\{} Qv(t)  \ \Big{|}\
 \dot{v}(t)=Gv(t),\ v(0)\in {\R^{m}},\
t\in   \R \Big{\}}.
\end{array}
\end{equation}
 By ordinary differential equation
theory, we obtain
\begin{equation}\label{20195181211}
\begin{array}{l}
\disp \Omega   ( G)={\rm span}\Big{\{} t^{m_{\lambda}-k}e^{ \lambda t}   \ \Big{|}\
\lambda  \in \sigma(G),   k=1,2,\cdots,m_{\lambda},\crr
\hspace{2cm}\disp \ \ m_{\lambda} {\rm \ is \ the \ algebraic\ multiplicity\ of\  }\lambda,\
t\in   \R \Big{\}},
\end{array}
\end{equation}
  which implies that the space $\Omega (G)$  is independent of $Q$.
%   due to essentially arbitrariness of the initial
%value $v(0)$.
By  \dref{201912311058},
%that $\dot{d}\in\Omega (G)$ provided
%$ {d}\in\Omega (G)$. moreover,
$\Omega (G)\subset \mathbb{S}$ as long as  $\sigma(G)\subset i\R$ and each eigenvalue of $G$ is algebraically simple.  Define the  projection   operator  $\mathbb{P}_G:  \mathbb{S }\to \Omega(G)$ by
\begin{equation}\label{20201211811}
 \mathbb{P}_G s =\arg\inf_{g\in \Omega(G) } \| s-g\|_{ \mathbb{S }},\ \ \forall\ s(\cdot)\in  \mathbb{S }.
\end{equation}
Since $(\mathbb{S},\|\cdot\|_{\mathbb{S}} )$ is    a Banach space,
the optimal approximation $\mathbb{P}_G s\in\Omega(G)$ always exists, which implies that the operator  $\mathbb{P}_G$  is well defined.
%\begin{definition}\label{Defi2020126846}
%  We say that $G$ is the dynamics of $d$ if
%$d\in\Omega(G)$. For any $d\in W^{1,\infty}[0,+\infty)  $,
%$\mathbb{P}_G  d$ defined by \dref{20201211811} is called  the best dynamic  approximation of $d$ on $\Omega(G)$. $(I-\mathbb{P}_G) {d}$ is called the   error of the dynamic approximation.
%
%
%\end{definition}
 %In summary, we obtain the following Proposition:
 %Suppose that $d\in W^{1,\infty}[0,+\infty)  $,
% it can be divided into two part
%
%By \dref{20201211811}, $d_1$ is the best approximation of $d$ on $\Omega(G)$, i.e.,
Let $e =(I-\mathbb{P}_\mathbb{S})s$ be the approximation error. A simple computation shows that $e(0)=0$ and thus
\begin{equation}\label{2020251843}
\|e\|_{\mathbb{S}}=\|\dot{e}\|_{\infty}.
\end{equation}
In fact, if $s_*(\cdot)=\mathbb{P}_\mathbb{G} s$  with $s_*(0)\neq s(0)$, i.e., $e(0)\neq0$,  then
\begin{equation}\label{2020251919}
\|s_*-s_*(0)+s(0)-s\|_{\mathbb{S}}=   \|\dot{s}-\dot{s}_*\|_{\infty}
%=   \|\dot{e}\|_{\infty} <  \|\dot{e}\|_{\infty}+|e(0)| =\|e\|_{\mathbb{S}}.
< \|s_*-s\|_{\mathbb{S}}.
\end{equation}
Since $s_*(\cdot)-s_*(0)+s(0)\neq s^*(\cdot)$, \dref{2020251919}  contradicts to the optimality of  $s_*(\cdot)=\mathbb{P}_\mathbb{S} s$ for  $s(\cdot)$.

\begin{definition}\label{De202010021011}

Let $A \in \R^{n\times n}$, $B \in  \R^n  $
 and
    $C\in  \R^{1\times n}$.
    Suppose that $\Theta$ is a set of signals  and we have known that  $d\in \Theta$.
 System \dref{2020151017}  is said to be observable for the signal set   $\Theta$,
  provided both the initial state and the disturbance are    distinguishable in the sense that:
  For any $T>0$,
   \begin{equation} \label{20201021019}
 u(t)=0\ \mbox{and}\  y(t)= 0 \;  \hbox{ for a.e.} \  t\in[0,T]\ \Rightarrow\ x(0)=0  \mbox{ and }  d(t)=0
 \; \hbox{ for a.e.} \ t\in[0,T].
  \end{equation}
\end{definition}

\begin{lemma}\label{Lm20201021046}
Suppose that system \dref{2020151017} takes on the
 observability canonical form, i.e.,
% $(A,B,C)$ be a SISO system with state space $\R^n$ and
 \begin{equation}\label{202010021047}
		A =\begin{bmatrix}
0&0& \cdots&0& a_1\\
1&0& \cdots&0& a_{2}\\
 0&1& \cdots&0& a_{3}\\
\vdots&\vdots&\ddots&\vdots&\vdots \\
		0&0& \cdots&1& a_n\\
		\end{bmatrix},  \ \ B=
\begin{bmatrix}
 b_1\\b_2\\
 \vdots \\b_n\\
		\end{bmatrix}\neq0\ \  \mbox{and}\ \ C=[0\ 0\ \cdots\ 0\ 1]\in \R^{1\times n},
		\end{equation}
where $a_j ,b_j \in\R$, $j=1,2,\cdots,n$.  Then,
system \dref{2020151017}  is  observable for    $\mathbb{S}$ if and only if
  $b_2=b_3=\cdots=b_n=0$ and $b_1\neq0$.

\end{lemma}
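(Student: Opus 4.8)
The plan is to set $u\equiv 0$ and $y\equiv 0$ on $[0,T]$ and to translate the hypothesis $y=Cx=x_n\equiv 0$ into a single scalar constraint on the disturbance. First I would write the system \dref{2020151017} in the canonical form \dref{202010021047} componentwise: the last row gives $\dot{x}_n=x_{n-1}+a_nx_n+b_nd$, the intermediate rows give $\dot{x}_i=x_{i-1}+a_ix_n+b_id$ for $2\le i\le n$, and the first row gives $\dot{x}_1=a_1x_n+b_1d$. Since $y=x_n\equiv 0$ forces $\dot{x}_n\equiv 0$, the last row yields $x_{n-1}=-b_nd$; substituting this into the next row down and differentiating, a backward recursion through rows $n,n-1,\dots,2$ expresses each component in terms of $d$ and its successive derivatives, namely $x_{n-k}=-\sum_{i=0}^{k-1}b_{n-i}\,d^{(k-1-i)}$.

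The key step is then the first row. Feeding the resulting expression for $x_1$ into $\dot{x}_1=b_1d$ collapses the whole system to the single scalar identity $\sum_{j=1}^{n}b_j\,d^{(j-1)}=0$, a constant-coefficient linear ODE for $d$ whose order is governed by the largest index $j$ with $b_j\neq 0$. Conversely, I would verify that any $d$ satisfying this ODE on $[0,T]$, together with the state defined by the recursion, is a genuine solution of \dref{2020151017} with $u\equiv 0$ and $y\equiv 0$ (rows $2,\dots,n$ hold by construction, and row $1$ holds precisely because of the ODE). This reduces the observability question entirely to whether the ODE admits a nonzero solution in $\mathbb{S}$.

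For sufficiency, if $b_2=\cdots=b_n=0$ then $B\neq0$ forces $b_1\neq0$, the recursion immediately gives $x_{n-1}=x_{n-2}=\cdots=x_1\equiv 0$, and the scalar identity degenerates to $b_1d=0$, whence $d\equiv 0$ and $x(0)=0$; this is exactly the observability condition \dref{20201021019}. For necessity I would argue by contrapositive: if some $b_j\neq 0$ with $j\ge 2$, the constraint ODE has order at least one, so it possesses a nonzero solution $d_0$ on $[0,T]$, and the recursion recovers a corresponding $x(0)$; the pair $(x(0),d_0)$ gives $y\equiv0$ under $u\equiv0$ yet has $d_0\not\equiv 0$, violating \dref{20201021019}.

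The main obstacle is membership in the prescribed signal class: a generic solution of the constraint ODE may grow without bound and need not lie in $\mathbb{S}$. I expect to resolve this exactly as in the example \dref{20201021152}, by taking the nonzero solution $d_0$ on $[0,T]$ and continuing it by the constant value $d_0(T)$ for $t\ge T$; the result is bounded with bounded weak derivative, hence belongs to $\mathbb{S}$, while on $[0,T]$ it still satisfies the ODE so that the unobservable pair persists. A minor point to check along the way is that $d_0$ is smooth enough on $[0,T]$ for the recursion (which uses derivatives of $d$ up to order $n-2$) to be well defined, which holds automatically since solutions of a constant-coefficient linear ODE are $C^\infty$.
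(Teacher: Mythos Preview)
Your proposal is correct and handles the membership in $\mathbb{S}$ more carefully than the paper itself does. The route, however, is genuinely different from the paper's.

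The paper proves necessity by an \emph{iterative reduction}: assuming $b_n\neq 0$, the constraint $x_n\equiv 0$ gives $d=-x_{n-1}/b_n$, and substituting this back yields an $(n-1)$-dimensional disturbance-free system in $(x_1,\dots,x_{n-1})$ that is again in observability canonical form (with output $x_{n-1}$). Any nonzero solution of this reduced system furnishes a nonzero pair $(x(0),d)$ with $y\equiv 0$, contradicting observability for $\mathbb{S}$; the argument is then repeated to force $b_{n-1}=0$, and so on. Your approach instead collapses the whole null-output constraint in one pass to the scalar ODE $\sum_{j=1}^n b_j\,d^{(j-1)}=0$ and reads off the dichotomy directly from its order. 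Your method is more global and arguably cleaner; the paper's iterative scheme avoids ever writing higher derivatives of $d$ (at each step $d$ is expressed as a state component, hence absolutely continuous), which is why it never needs to discuss smoothness. A small presentational caution: in your sufficiency paragraph, phrase the argument as ``$x_{n-1}=0$, hence $x_{n-2}=0$, \dots'' row by row rather than invoking the general recursion formula, since that formula nominally contains $d^{(k)}$ terms that need not exist for arbitrary $d\in\mathbb{S}$---they vanish here only because the corresponding $b_j$ are zero.
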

\begin{proof}
 Suppose that $b_1\neq0$ and $b_2=b_3=\cdots=b_n=0$. Then, for any $T>0$,  $u(t)=0$
   for a.e.  $t\in [0,T]$ implies that
  \begin{equation}\label{202010021055}
\left\{\begin{array}{l}
\dot{x}_1(t)=a_1x_n(t)+b_1d(t), \crr
 \dot{x}_2(t)=x_1(t)+a_2x_n(t),\crr
 \cdots\cdots\cdots\cdots\cdots\cdots\cdots\crr
 \dot{x}_n(t)=x_{n-1}(t)+a_nx_n(t),
\end{array}\right. \hbox{ a.e.}\   t\in[0,T],
 \end{equation}
 where $x(t)=[x_1(t)\  \ x_2(t)\ \ \cdots\ \ x_n(t)]^{\top}$.
If  $  y(t)= x_n(t)=0$  for a.e.  $t\in [0,T]$,
   \dref{202010021055} yields $$
 x_n(t)=x_{n-1}(t)=\cdots=x_{ 1}(t)=0,\ \ \hbox{ a.e.}\ \    t\in [0,T],
 $$
 which implies that $d(t)=0$ for a.e. $t\in [0,T] $ due to $b_1\neq0$.
 Hence,  system \dref{2020151017}  is  observable for    $\mathbb{S}$.

 Conversely, suppose that system \dref{2020151017}  is  observable for    $\mathbb{S}$.
We first claim that $b_n=0$.  Otherwise, for any $T>0$,
\begin{equation}\label{202010021104}
\left\{\begin{array}{l}
\dot{x}_1(t)=a_1x_n(t)+b_1d(t), \crr
 \dot{x}_2(t)=x_1(t)+a_2x_n(t)+b_2d(t),\crr
 \cdots\cdots\cdots\cdots\cdots\cdots\cdots\crr
 \dot{x}_n(t)=x_{n-1}(t)+a_nx_n(t)+b_nd(t),\crr
  y(t)= x_n(t)=0,
\end{array}\right. \hbox{ a.e.}\ t\in[0,T]
 \end{equation}
 implies that $x_{n-1}(t)+b_nd(t)=0$ for a.e. $t\in[0,T]$ and hence system \dref{202010021104} turns out to be
 \begin{equation}\label{202010021106}
\left\{\begin{array}{l}
\disp \dot{x}_1(t)= -\frac{b_1}{b_n}x_{n-1}(t), \crr
\disp  \dot{x}_2(t)=x_1(t) -\frac{b_2}{b_n}x_{n-1}(t),\crr
 \cdots\cdots\cdots\cdots\cdots\cdots\cdots\crr
 \disp \dot{x}_{n-1}(t)=x_{n-2}(t) -\frac{b_{n-1}}{b_n} x_{n-1} (t),
\end{array}\right. \hbox{ a.e.}\ t\in[0,T].
 \end{equation}
Since  system \dref{202010021106} with the  output $x_{n-1}(\cdot)$ is
  of the observability  canonical form,  it  is
 always observable for any
 $b_j\in\R$, $j=1,2,\cdots,n$. As a result, each non-zero solution of  system \dref{202010021106}
 satisfies $x_{n-1}(t)=-b_nd(t)\neq0$ for a.e. $t\in[0,T]$ and hence
 is the
 zero dynamics of the original system \dref{202010021104}. This contradicts to the observability of
 system \dref{2020151017}. We hence obtain  $b_n=0$.
 Similarly, we can prove that $b_{n-1}=0$. Indeed, in this case,  for any $T>0$,
 \begin{equation}\label{202010021135}
\left\{\begin{array}{l}
\dot{x}_1(t)=a_1x_n(t)+b_1d(t), \crr
 \dot{x}_2(t)=x_1(t)+a_2x_n(t)+b_2d(t),\crr
 \cdots\cdots\cdots\cdots\cdots\cdots\cdots\crr
 \dot{x}_{n-1}(t)=x_{n-2}(t) +b_{n-1}d(t),\crr
  y(t)= x_{n-1}(t)=0,
\end{array}\right.\ \ \hbox{ a.e.}\ t\in[0,T]
 \end{equation}
  implies that $x_{n-2}(t)+b_{n-1}d(t)=0$  for a.e. $ t\in[0,T] $  and hence system \dref{202010021135} is reduced to
 \begin{equation}\label{202010021138}
\left\{\begin{array}{l}
\disp \dot{x}_1(t)= -\frac{b_1}{b_{n-1}}x_{n-2}(t), \crr
\disp  \dot{x}_2(t)=x_1(t) -\frac{b_2}{b_{n-1}}x_{n-2}(t),\crr
 \cdots\cdots\cdots\cdots\cdots\cdots\cdots\crr
 \disp \dot{x}_{n-2}(t)=x_{n-3}(t) -\frac{b_{n-2}}{b_{n-1}} x_{n-2} (t),
\end{array}\right.   \hbox{ a.e.}\ t\in[0,T].
 \end{equation}
 Since  system \dref{202010021138} with the output $x_{n-2}(\cdot)$ is
   always observable for any
 $b_j\in\R$, $j=1,2,\cdots,n-1$, each non-zero solution of  system \dref{202010021138} is a
 zero dynamics of the original system \dref{202010021104}. This contradicts to the observability of
 system \dref{2020151017}. We hence obtain  $b_n=b_{n-1}=0$.
 Moreover, we can obtain $b_n=b_{n-1}=b_2=0$ by repeating the same process. This completes the proof
 of the lemma due to  $B\neq0$.
\end{proof}

%
%When $d\in \Omega(G)$  for some matrix $G$, the disturbance can be represented dynamically as
%$ \dot{v}(t)=Gv(t),\ d(t)=Qv(t)$ for some initial state, where
%$Q$ is a vector  such that $(G,Q)$ is observable. As a result,
%system \dref{2020151017}
%can be written  as a cascade system:
% \begin{equation} \label{20201221052}
% \left\{\begin{array}{l}
%\disp \dot{x} (t) = A  x (t)+B  [Q v(t)+ u(t)] ,\crr
%\disp \dot{ v} (t) =  G v  (t)  \crr
%\disp y(t)=Cx (t).
%\end{array}\right.
%\end{equation}
%
%
%

 \begin{lemma}\label{lm20209132036}
  Let $A\in \R^{n\times n}$ and $G\in\R^{m\times m}$. Suppose that
 \begin{equation} \label{20209132047}
 \sigma(A)\cap\sigma(G)=\emptyset.
   \end{equation}
 Then, system \dref{2020151017}  is observable for  $\Omega(G)$ if and only if   $(A,C)$ is observable and  the following  transmission zeros condition holds:
   \begin{equation} \label{2020213905}
C(\lambda-A)^{-1}B\neq 0, \ \ \ \forall\ \lambda\in \sigma(G).
\end{equation}

 \end{lemma}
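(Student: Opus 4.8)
The plan is to reformulate observability for $\Omega(G)$ as ordinary observability of an augmented exosystem-coupled pair, and then apply the Popov--Belevitch--Hautus (PBH) rank test. Since $\Omega(G)$ is independent of $Q$ by \dref{20195181211}, I would first fix any observable pair $(G,Q)$ with $Q\in\R^{1\times m}$, so that every $d\in\Omega(G)$ is generated by the exosystem $\dot v=Gv,\ d=Qv,\ v(0)\in\R^m$. Setting $u\equiv0$ and coupling this exosystem to \dref{2020151017} yields the augmented system governed by
\[
\tilde A=\begin{bmatrix} A & BQ\\ 0 & G\end{bmatrix},\qquad \tilde C=[\,C\ \ 0\,].
\]
I claim system \dref{2020151017} is observable for $\Omega(G)$ if and only if $(\tilde A,\tilde C)$ is observable. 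Indeed, $y\equiv0$ on $[0,T]$ reads $\tilde C e^{\tilde A t}(x(0),v(0))^\top\equiv0$; observability of $(\tilde A,\tilde C)$ forces $(x(0),v(0))=0$, whence $x(0)=0$ and $d=Qe^{Gt}v(0)\equiv0$. Conversely, observability of \dref{2020151017} for $\Omega(G)$ gives $x(0)=0$ and $d\equiv0$, and since $d=Qe^{Gt}v(0)$ with $(G,Q)$ observable, $d\equiv0$ forces $v(0)=0$; hence $(\tilde A,\tilde C)$ is observable. By analyticity, vanishing on $[0,T]$ is the same as vanishing on all of $\R$, so the time-interval subtlety in Definition \ref{De202010021011} causes no trouble.

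Next I would analyze the PBH matrix of $(\tilde A,\tilde C)$. Because $\tilde A$ is block triangular, $\sigma(\tilde A)=\sigma(A)\cup\sigma(G)$, so the rank can only drop at these eigenvalues, and by $\sigma(A)\cap\sigma(G)=\emptyset$ the two cases decouple cleanly. A vector $(\xi,\eta)^\top$ lies in the kernel of the PBH matrix at $\lambda$ precisely when $(\lambda I-A)\xi=BQ\eta$, $(\lambda I-G)\eta=0$, and $C\xi=0$. For $\lambda\in\sigma(A)$ (hence $\lambda\notin\sigma(G)$) the middle equation forces $\eta=0$, and the remaining conditions reduce exactly to the PBH test for $(A,C)$ at $\lambda$; ranging over all such $\lambda$ recovers observability of $(A,C)$. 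For $\lambda\in\sigma(G)$ (hence $\lambda\notin\sigma(A)$, so $\lambda I-A$ is invertible) $\eta$ must be an eigenvector of $G$ and $\xi=(Q\eta)(\lambda I-A)^{-1}B$, so the last condition collapses to the scalar equation $C(\lambda I-A)^{-1}B\,(Q\eta)=0$.

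The technical heart, and the step I expect to require the most care, is the eigenvector bookkeeping for $G$. Since $Q$ is a row vector and $(G,Q)$ is observable, $G$ must be nonderogatory: each eigenvalue has geometric multiplicity one and its eigenvector $\eta_\lambda$ satisfies $Q\eta_\lambda\neq0$; otherwise the whole eigenline would sit in $\bigcap_k\ker(QG^k)$ and be unobservable. Writing $\eta=c\,\eta_\lambda$, the scalar condition reads $c\,(Q\eta_\lambda)\,C(\lambda I-A)^{-1}B=0$, so the kernel is nontrivial exactly when $C(\lambda I-A)^{-1}B=0$. Hence the PBH rank is full at every $\lambda\in\sigma(G)$ if and only if the transmission-zero condition \dref{2020213905} holds, and combining the two cases yields the asserted equivalence. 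Throughout, I expect the disjoint-spectra hypothesis \dref{20209132047} to be doing quiet but essential work: it guarantees $\lambda I-A$ is invertible on $\sigma(G)$, so that $\xi$ is uniquely recovered and $C(\lambda-A)^{-1}B$ is even well defined there, and it keeps the $\sigma(A)$ and $\sigma(G)$ contributions from interfering.
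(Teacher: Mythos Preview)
Your proposal is correct and follows essentially the same route as the paper: reduce observability for $\Omega(G)$ to ordinary observability of the augmented pair $(\tilde A,\tilde C)$ and then apply the PBH/Hautus test, splitting on whether $\lambda\in\sigma(A)$ or $\lambda\in\sigma(G)$. Your treatment is in fact a bit more explicit than the paper's (you handle the $\lambda\in\sigma(A)$ case separately and flag the analyticity point for the time interval), and the nonderogatory remark, while true, is more than you need---PBH for $(G,Q)$ already gives $Q\eta\neq0$ for every nonzero eigenvector $\eta$ of $G$, which is all the argument requires.
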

\begin{proof}
Since we have known that  $d\in \Omega$, there exists a $Q\in \R^{1\times m}$ such that
$(G,Q)$ is observable and the disturbance can be written as  $\dot{v}(t)=Gv(t)$ and $d(t)=Qv(t)$ for some initial state. As a result, system \dref{2020151017}  takes the form
 \begin{equation} \label{202010021732}
 \left\{\begin{array}{l}
\disp \dot{x} (t) = A  x (t)+B  [Q v(t)+ u(t)] ,\crr
\disp \dot{ v} (t) =  G v  (t),  \ \
\disp y(t)=Cx (t).
\end{array}\right.
\end{equation}
 If we define
 \begin{equation} \label{20209132035}
  \A_e=\begin{bmatrix}
          A&BQ \\
          0&G
        \end{bmatrix}\ \ \mbox{and}\ \ \mathcal{C}_e= [C\quad 0],
   \end{equation}
then  system \dref{2020151017}  is observable for  $\Omega(G)$ if and only if
     system  $(\A_e,\mathcal{C}_e)$ is  observable.

  Suppose that  $(\mathcal{A}_e,\mathcal{C}_e)$ is observable and there exists
 a $\tau>0$ such that $Ce^{At}x\equiv0$ for any $t\in[0,\tau]$. Then,
 $\mathcal{C}_ee^{\A_e t}(x,0)^{\top}=Ce^{At}x\equiv0$ implies that
   $x=0$ and hence   $(A,C)$ is observable.
  For any $\lambda\in \sigma(G)\subset\sigma(\A_e)$, suppose that
$\A_e(x,v)^{\top}=\lambda(x,v)^{\top} $ with $(x,v)^{\top}\neq0$. By exploiting   \cite[p. 15, Remark 1.5.2]{Weissbook} and the observability of  $(\mathcal{A}_e,\mathcal{C}_e)$, a simple computation shows that
 \begin{equation}\label{20209132157}
 \mathcal{C}_e  (x,v)^{\top}=  C x =C(\lambda-A)^{-1}BQv\neq 0,
 \end{equation}
which leads to   \dref{2020213905} easily.

Conversely, for any $\lambda\in \sigma(G)\subset\sigma(\A_e)$, suppose that
$\A_e(x,v)^{\top}=\lambda(x,v)^{\top} $ and $\mathcal{C}_e  (x,v)^{\top}=0$.
Then, $C(\lambda-A)^{-1}BQv= 0$ and $Gv=\lambda v$. By   assumption \dref{2020213905},   $Qv=0$ and hence
$v=0$ by  the observability of $(G,Q)$.
As a result,  the equations $\A_e(x,v)^{\top}=\lambda(x,v)^{\top} $ and $\mathcal{C}_e  (x,v)^{\top}=0$ are reduced to  $Ax=\lambda x$  and $C x = 0 $.
By the observability of $(A,C)$, we obtain   $x=0$.
Therefore, $(\mathcal{A}_e,\mathcal{C}_e)$  is observable, or equivalently,
system \dref{2020151017}  is observable for  $\Omega(G)$.
\end{proof}

  We   point out that the
observability of disturbance corrupted system \dref{2020151017} depends on the disturbance set $\Theta$ which   serves as
the prior disturbance information  we have known.
 Different disturbance set  may lead to different observability even for the same system.
Here is an  example to show this point. Let
 \begin{equation} \label{20209171839}
  A =\begin{bmatrix}
0&1 \\
		0&0
		\end{bmatrix},\ \
 B=\begin{bmatrix}
 		0 \\
		  1
		\end{bmatrix}\ \ \mbox{and} \ \ C=[1\ \ -1 ] .
 \end{equation}
 Then, system \dref{2020151017} with $u=0$  can be  written as
   \begin{equation}\label{20209171849}
\left.\begin{array}{l}
\dot{x}_1(t)=x_2(t), \ \dot{x}_2(t)=d(t),\ \ y(t)=x_1(t)-x_2(t).
\end{array}\right.
 \end{equation}
 Suppose that we know  nothing about the disturbance except  $d\in \mathbb{S}$. Then, system
\dref{20209171849} is not observable for  $ \mathbb{S}$. Indeed, a simple computation shows that
$ x_1(t)  =x_2(t)=d(t)=s_T(t)$
   is a nonzero solution of system \dref{20209171849} over $[0,T]$, where $s_T$ is given by \dref{20201021152}.  However, $d\in \mathbb{S}$
   and
the  output  satisfies $y(t)=x_1(t)-x_2(t)\equiv0$ on $[0,T]$. By Definition \ref{De202010021011},
system
\dref{20209171849} is not observable for  $ \mathbb{S}$.
If we have known the dynamics of the disturbance, the situation becomes completely different.
Suppose that we have known $d\in \Omega(G)$ for some matrix $G$ satisfying $0,1\notin\sigma(G)$. Then,
there exists a vector  $Q$  such that system $(G,Q)$  is observable and hence
  system  \dref{20209171849} can be written as
 \begin{equation}\label{20209171849G}
\left\{\begin{array}{l}
\dot{x}_1(t)=x_2(t), \ \dot{x}_2(t)=Qv(t),\crr
\dot{v}(t)=Gv(t),\crr
y(t)=x_1(t)-x_2(t),
\end{array}\right.
 \end{equation}
 which is a disturbance free system.
   By Lemma \ref{lm20209132036}, it is easy to see that system \dref{20209171849G} is observable. In other words, system \dref{2020151017}
  is observable for  $\Omega(G)$ which  is completely different from the observability for    $\mathbb{S}$.
   This fact implies that, if the prior information about the disturbance  is enough,  we may still  estimate the disturbance $d(\cdot)$ from
  system \dref{2020151017} in terms of   the output $y(\cdot)$  even if it is unobservable for $\mathbb{S}$.

\begin{remark}\label{Re202010031150}
   Definition \ref{De202010021011}
 is different from the observability
of  disturbance free system where the observability on some  finite interval  $[0,T]$
implies   the observability on entire  $[0,\infty)$.
% which is not true for
%disturbance corrupted system.
Owing to the uncertainty of disturbance,
 it is almost  impossible to estimate the disturbance on  $[T,\infty) $ by  the information of output over   $[0,T]$.
 \end{remark}

% \begin{remark}\label{Re20201052212}
%Let  $A , B$  and $C$  be given by \dref{202010021047}.
%% the matrices used in  system \dref{2020151017}.
% Consider the following  system
% \begin{equation} \label{20201052214}
% \left\{\begin{array}{l}
%\disp \dot{x} (t) = A  x (t)+Dd(t)+ Bu(t)  ,\crr
%\disp y(t)=Cx (t),
%\end{array}\right.
%\end{equation}
% where  $D=[d_1\ d_2\ \cdots \ d_n]^{\top}\in \R^n$  is a vector that represents the disturbance input.
% It is evident that system \dref{2020151017} is a special case of system \dref{20201052214}.
% If  $\left|\frac{\langle D,B\rangle_{\R^n}}{\|D\|_{\R^n}\|B\|_{\R^n}}\right|\neq 1$,
%then the disturbance is  not in the control channel.
% By
% \end{remark}

\section{High-gain for stabilization}\label{NESO}

%It is a common sense that leaning disturbance from output
%pays the price which is usually characterized by the
%high-gain.

In most of the cases, we have to  pay prices in estimating disturbance from measured output and the prices are usually characterized by the high-gain.
 Since it  does not need necessarily  the prior   information   about  disturbance
except for  some rough information like  boundedness,
the  high-gain is  an effective and practical way to cope with the disturbance. In  \cite{KhalilTAC2008}, it has been used to the observer design for the   system that represents a chain of $n$ integrators.
 The well-known ESO in ADRC is also by  means of the high-gain \cite{GAO2003}, \cite{Han2009}.
 In this section, we will consider the basic principle of  high-gain and
 investigate  the relationship between the observability
 and the    high-gain.

 %However, the  high-gain  does not work for all the linear systems.

To show  the   basic principle of high-gain clearly, we begin with the direct  propositional feedback for   a scalar system with input disturbance:
   \begin{equation}\label{20209171831}
\dot{x}(t)=u(t)+d(t), \ \ u(t)=-\omega x(t),
 \end{equation}
where    $d\in L^{\infty}[0,\infty)$ is the disturbance and
 $\omega$ is a positive tuning parameter. We solve the closed-loop straightforwardly to  get
 \begin{equation}\label{20209171949}
 |{x}(t)|\leq e^{-\omega t}|x(0)|+\int_0^te^{-\omega(t-s)}|d(s)|ds\leq e^{-\omega t}|x(0)|+\frac{\|d\|_{\infty}}{\omega}.
 \end{equation}
 That is
  \begin{equation}\label{20209302046}
 \lim_{t\to\infty}|{x}(t)|\leq  \frac{\|d\|_{\infty}}{\omega},
 \end{equation}
 which implies that we can stabilize $x(\cdot)$ as small as possible  by increasing the feedback gain $\omega$.
In other words,  the negative impact of the  disturbance $d(\cdot)$  in  system \dref{20209171831} can be
 eliminated    by increasing the feedback gain  $\omega$.
 However, this property seems  not trivial  for general linear systems. Here is a sufficient condition under which the high-gain works.

%
%not converge to zero no matter how large the tuning  gain $\omega$ is.
%In fact,
%a simple computation shows that $\sigma(A_{\omega})=\{-\omega\}$, $(A_{\omega},B)$ is controllable  and the transfer function is found to be
%\begin{equation} \label{2020917724}
%C(s-A_{\omega})^{-1}B=\frac{s-\omega^2}{\omega^2+2\omega s+s^2}, \ s\notin\sigma(A_{\omega}),\ \ C=[0\quad1].
%\end{equation}
%%which  implies that system  $(A_{\omega},B,C)$ is  a non-minimum phase system.
%Since $\lim_{\omega\to\infty}C(s-A_{\omega})^{-1}B=1$,
%we can conclude that $Cx(t)\to d(t)$ as $t,\omega\to\infty$. Consequently, the high-gain is invalid to the system \dref{20209171955} and this quite different from the situation for system \dref{20209171831}.
%
%

%The following Lemma shows that the high-gain is effective for the controllable canonical form with input disturbance.

%The following result is trivial but   very useful.

\begin{lemma}\label{Lm20209172011}
 Let  $A_{\omega} \in \R^{n\times n}$ be  a Hurwitz matrix  with
 $\omega=-\Lambda_{\max}(A_{\omega})>0$.  Suppose that $B\in \R^n$ such that
   \begin{equation}\label{20209172022}
  \lim_{\omega\to+\infty}\|(s-A_{\omega})^{-1}B\|_{\R^n}=0  \mbox{ uniformly on }  s\in  \mathbb{C}_+.
  \end{equation}
 Then, there exists an $L_B>0$, independent of $\omega$ and $t$, such that
  \begin{equation}\label{20209191648}
 \|e^{A_{\omega}t} B\|_{\R^n}\leq L_Be^{-\omega t},\ \ t\geq0.
  \end{equation}
 As a result,   for any
    $d\in L^{\infty}[0,\infty)$,  the solution of system
     $    \dot{x}(t)=A_\omega x(t)+Bd(t)$
     satisfies
   \begin{equation}\label{20201001959}
 \lim_{t\to\infty}\|{x}(t)\|_{\R^n}\leq  \frac{L_B\|d\|_{\infty}}{\omega}.
 \end{equation}

    \end{lemma}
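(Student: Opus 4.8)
The plan is to prove the pointwise decay bound \dref{20209191648} first, and then obtain \dref{20201001959} by a routine variation-of-constants estimate. For \dref{20209191648} I would separate the two effects at play. Since $\Lambda_{\max}(A_\omega)=-\omega$, every eigenvalue of $A_\omega$ has real part at most $-\omega$, so the exponential rate $e^{-\omega t}$ is dictated by the spectral abscissa; the genuine content of the statement is that the prefactor $L_B$ can be chosen independent of $\omega$. Writing $g(t)=e^{(A_\omega+\omega I)t}B=e^{\omega t}e^{A_\omega t}B$, the bound \dref{20209191648} is equivalent to the uniform estimate $\sup_{t\ge0}\|g(t)\|_{\R^n}\le L_B$, and the entire force of hypothesis \dref{20209172022} must be used to keep this bound from growing with $\omega$.

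To bring \dref{20209172022} to bear I would pass to the frequency domain. Because $A_\omega$ is Hurwitz, $e^{A_\omega t}B$ is the inverse Laplace transform of the strictly proper rational function $F_\omega(s):=(sI-A_\omega)^{-1}B$, which is analytic on the closed right half-plane and, by \dref{20209172022}, bounded there by $M_B(\omega):=\sup_{s\in\mathbb{C}_+}\|(s-A_\omega)^{-1}B\|_{\R^n}$, a quantity that is bounded and in fact tends to $0$ as $\omega\to+\infty$. Restricting to the imaginary axis $s=i\eta$ already gives $\|F_\omega(i\eta)\|_{\R^n}\le M_B(\omega)$, so $F_\omega\in H^2(\mathbb{C}_+)$ with controlled norm, and Plancherel (Paley--Wiener) yields $\|e^{A_\omega\cdot}B\|_{L^2(0,\infty)}^2=\tfrac{1}{2\pi}\int_{-\infty}^{\infty}\|F_\omega(i\eta)\|_{\R^n}^2\,d\eta$, an $\omega$-controlled $L^2$ bound. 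The task is then to upgrade this frequency/$L^2$ information, together with the spectral rate $e^{-\omega t}$, into the pointwise bound \dref{20209191648}.

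This upgrade is the main obstacle. The delicate case is a (nearly) defective dominant eigenvalue: then $e^{A_\omega t}B$ carries polynomial-in-$t$ factors, so $e^{\omega t}e^{A_\omega t}B$ need not be bounded by $\|B\|$ alone, and one must show that the smallness of $F_\omega$ on $\mathbb{C}_+$ exactly absorbs these factors uniformly in $\omega$; converting frequency-domain smallness to a pointwise time-domain bound with an $\omega$-free constant is precisely where the care is needed. I would note that for the downstream estimate one in fact only needs the integrated form $\int_0^\infty\|e^{A_\omega\tau}B\|_{\R^n}\,d\tau\le L_B/\omega$, which is more robust than the pointwise statement and which drives \dref{20201001959}; this integral can be obtained from \dref{20209191648} once available, or established directly.

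Granting \dref{20209191648}, the bound \dref{20201001959} is immediate. By variation of constants the solution of $\dot x=A_\omega x+Bd$ is $x(t)=e^{A_\omega t}x(0)+\int_0^t e^{A_\omega(t-s)}Bd(s)\,ds$, so $\big\|\int_0^t e^{A_\omega(t-s)}Bd(s)\,ds\big\|_{\R^n}\le\|d\|_{\infty}\int_0^t L_Be^{-\omega(t-s)}\,ds\le L_B\|d\|_{\infty}/\omega$. Since $A_\omega$ is Hurwitz, the free term $e^{A_\omega t}x(0)$ tends to $0$ as $t\to\infty$, and letting $t\to\infty$ gives exactly \dref{20201001959}.
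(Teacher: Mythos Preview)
Your proposal correctly isolates the crux --- that the content of \dref{20209191648} is the $\omega$-independence of the prefactor $L_B$ --- but does not close it. You pass to the frequency domain, extract an $L^2$-in-time bound via Plancherel, and then explicitly flag the upgrade to the pointwise supremum as ``the main obstacle,'' leaving it open. As written this is a genuine gap: an $L^2$ bound on $t\mapsto e^{A_\omega t}B$ does not by itself yield a pointwise estimate $L_Be^{-\omega t}$, and the nearly-defective case you yourself raise (polynomial-in-$t$ factors) shows precisely why extra structure is required. Your observation that for \dref{20201001959} one only needs the integrated form $\int_0^\infty\|e^{A_\omega\tau}B\|_{\R^n}\,d\tau\le L_B/\omega$ is correct and potentially more tractable, but the lemma asserts \dref{20209191648} itself, and you do not establish the $L^1$ bound directly either. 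Your derivation of \dref{20201001959} from \dref{20209191648} via variation of constants is correct and coincides with the paper's.

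The paper's route to \dref{20209191648} is different from your Plancherel idea: it represents each coordinate $\varepsilon_j^\top e^{A_\omega t}B$ by the Bromwich inversion integral $\frac{1}{2\pi i}\lim_{T\to\infty}\int_{\gamma-iT}^{\gamma+iT}e^{st}\,\varepsilon_j^\top(s-A_\omega)^{-1}B\,ds$ and then interchanges $\lim_{\omega\to\infty}$ with the improper contour integral, invoking the uniform hypothesis \dref{20209172022} to send the integrand to zero, whence $\lim_{\omega\to\infty}\varepsilon_j^\top e^{A_\omega t}B=0$ for each $t\ge0$; from this pointwise limit together with the Hurwitz property it asserts \dref{20209191648}. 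You should be aware that this argument is itself formal at exactly the same juncture: the interchange of $\lim_{\omega\to\infty}$ with $\lim_{T\to\infty}\int$ is not justified (uniform smallness on $\mathbb{C}_+$ does not supply a dominating integrable function along the vertical line), and the passage from the pointwise-in-$t$ limit to a bound $L_Be^{-\omega t}$ with an $\omega$-free $L_B$ is claimed to follow ``easily'' without further detail. So the paper's mechanism differs from yours, but neither argument as written fully resolves the obstacle you correctly identified.
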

    \begin{proof}
  Let $ \varepsilon_j= [0\ \cdots\  0\   1_{j{\rm th}}\ 0\ \cdots\ 0]^{\top}$ denote the
$j$-th coordinate vector where $1_{j{\rm th}}$ means the component in the $j$-th position is 1$,
j =1,2,\cdots,n$.
 %     Suppose the transfer function  of system $(A_{\omega},B,e_i^{\top})$ is
%\begin{equation}\label{20209172040}
%\varepsilon_i^{\top}(s-A_{\omega})^{-1}B=\frac{b_{i1}s^{n-1}+b_{i2}s^{n-2}+\cdots+b_{i(n-1)}s+b_{in}}
%{s^n+a_1s^{n-1}+\cdots+a_{n-1}s+a_n}, \ \  a_j,b_{ij}\in\R, \ i,j =1,2,\cdots,n.
%\end{equation}
By the assumption  \dref{20209172022},
  \begin{equation}\label{20209172055}
\lim_{\omega\to+\infty}|\varepsilon_j^{\top}(s-A_{\omega})^{-1}B|=0  \mbox{ uniformly on }  s\in  \mathbb{C}_+
\end{equation}
for   $j =1,2,\cdots,n$.
Applying the inverse Laplace transform to \dref{20209172055}, we obtain
\begin{equation}\label{20209232121}
\begin{array}{l}
\disp \lim_{\omega\to\infty}|\varepsilon_j^{\top}e^{A_{\omega}t}B|
%=\lim_{\omega\to\infty}
% \mathcal{L}^{-1}[\varepsilon_i^{\top}(s-A_{\omega})^{-1}B](t)\crr
  \disp =
 \frac{1}{2\pi i}\lim_{\omega\to+\infty}\lim_{T\to\infty}\int_{\gamma-iT}^{\gamma+iT}
 e^{st}\varepsilon_j^{\top}(s-A_{\omega})^{-1}Bds\crr
   \hspace{1cm}\disp =
 \frac{1}{2\pi i}\lim_{T\to\infty}\int_{\gamma-iT}^{\gamma+iT}
 e^{st}\lim_{\omega\to+\infty} \varepsilon_j^{\top}(s-A_{\omega})^{-1} Bds\crr
  \hspace{1cm} \disp =
 \frac{1}{2\pi i}\lim_{T\to\infty}\int_{\gamma-iT}^{\gamma+iT}
 e^{st}0ds=0
  ,\ \  t\geq0,
  \end{array}
\end{equation}
where $\gamma$ is a real number so that the contour path of the  integration is in the region of convergence of $\varepsilon_j^{\top}e^{A_{\omega}t}B$,  $j =1,2,\cdots,n$.
 Since $A_{\omega}$ is Hurwitz,  \dref{20209191648}   follows  from \dref{20209232121} easily.
 Moreover,  \dref{20201001959} holds due to
 $$
 x(t)=e^{A_{\omega}t}x(0)+\int_0^te^{A_{\omega} s}Bd(t-s)ds.
 $$
    \end{proof}

\begin{remark}\label{Re2020117}
We point out that \dref{20209191648} does not hold for all controllable systems.
For example, if we choose
  \begin{equation} \label{202091171005}
  A_{\omega} =\begin{bmatrix}
0&1 \\
		-\omega^2&-2\omega
		\end{bmatrix}, \  \   \ \
 B=\begin{bmatrix}
 		 1 \\
		  1
		\end{bmatrix},\ \ \omega>0 ,
 \end{equation}
 then $(A_\omega,B)$ is controllable. However, a straightforward computation shows that
 \begin{equation} \label{20201171012}
e^{A_{\omega}t}B=e^{-\omega t}\begin{bmatrix}
\disp 1+(\omega    +1) t  \crr
\disp 1-(\omega^2+\omega)t
 \end{bmatrix},\ \ t\geq0
\end{equation}
and in particular,
 \begin{equation} \label{20201171027}
\left\|e^{A_{\omega}\frac{1}{\omega}}B\right\|_{\R^2}=e^{-1}\left\|\begin{bmatrix}
\disp 2    +1/\omega     \crr
\disp -\omega
 \end{bmatrix}\right\|_{\R^2} \to \infty\ \ \mbox{as}\ \ \omega\to+\infty.
\end{equation}
 \end{remark}

 The following Theorem shows that system \dref{2020151017}   can always be stabilized to zero  by high-gain
    provided it is observable for  $ \mathbb{S}$.

\begin{theorem}\label{Th202010021237}
    Suppose that  system \dref{2020151017} is observable for  $\mathbb{S}$. Then, system $(A,B)$ is controllable and there exist functions $f_j\in C[0,\infty)$, $j=1,2,\cdots,n$
  such that the feedback
     \begin{equation} \label{202010021359}
 u(t)=[ f_1(\omega)\ f_2(\omega)\ \cdots\ f_n(\omega)]^{\top} x(t),\ \ \omega>0,
 \end{equation}
stabilizes  system
\begin{equation} \label{202010021521}
\dot{x}(t)=Ax(t)+B[d(t)+u(t)],\ \ d\in \mathbb{S}.
 \end{equation}
 In other words,    the closed-loop system given by
\dref{202010021521} and \dref{202010021359} satisfies:
 \begin{equation} \label{202010021300}
 \lim_{t\to\infty}\|x(t)\|_{\R^n}\leq \frac{M\|d\|_{\infty}}{\omega},\ \
 \end{equation}
 where $M$ is a positive constant that is independent of $\omega$.
   \end{theorem}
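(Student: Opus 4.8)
The plan is to put the system into the canonical form \dref{202010021047}, design a high-gain pole-placement feedback, and then check that the closed loop meets the hypotheses of Lemma \ref{Lm20209172011}, whose conclusion \dref{20201001959} is exactly the desired \dref{202010021300}. First I would observe that observability for $\mathbb{S}$ makes $(A,C)$ observable: putting $d\equiv 0\in\mathbb{S}$ and $u\equiv 0$ in Definition \ref{De202010021011}, any unobservable mode would give $y\equiv 0$ with $x(0)\neq 0$, a contradiction. Both observability for $\mathbb{S}$ and controllability are invariant under a state similarity, so I may assume $(A,B,C)$ is already of the form \dref{202010021047}; Lemma \ref{Lm20201021046} then yields $B=b_1\varepsilon_1$ with $b_1\neq 0$. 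Since $A\varepsilon_j=\varepsilon_{j+1}$ for $j<n$, the controllability matrix equals $b_1[\varepsilon_1\ \cdots\ \varepsilon_n]=b_1 I_n$, which is nonsingular; hence $(A,B)$ is controllable.

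Using controllability I would assign all $n$ closed-loop poles to $-\omega$, i.e. choose $K_\omega=[f_1(\omega)\ \cdots\ f_n(\omega)]$ so that $A_\omega:=A+BK_\omega$ has characteristic polynomial $(s+\omega)^n$. Because $B=b_1\varepsilon_1$ modifies only the first row of $A$, the (Ackermann) pole-placement formula gives each $f_j(\omega)$ as a polynomial in $\omega$, so $f_j\in C[0,\infty)$ as required; moreover $\Lambda_{\max}(A_\omega)=-\omega$, matching the normalization in Lemma \ref{Lm20209172011}.

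The heart of the proof is to verify the high-gain condition \dref{20209172022}, and here the structure $B=b_1\varepsilon_1$ is decisive. Writing $w(s)=(sI-A_\omega)^{-1}B$, the equations of $(sI-A_\omega)w=b_1\varepsilon_1$ coming from rows $2,\dots,n$ do \emph{not} involve the feedback, and they force $w_k=q_k(s)w_n$ with fixed, $\omega$-independent polynomials obeying $q_{k-1}=sq_k-a_k$, $q_n=1$, so $\deg q_k=n-k$. Cramer's rule gives $w_n=b_1/(s+\omega)^n$, whence
\[
(sI-A_\omega)^{-1}B=\frac{b_1}{(s+\omega)^n}\,[\,q_1(s)\ \cdots\ q_{n-1}(s)\ 1\,]^{\top}.
\]
Using $\|[q_1(s)\ \cdots\ 1]^{\top}\|_{\R^n}\le c_0(1+|s|)^{n-1}$ together with the self-similar substitution $s=\omega\zeta$ (which preserves $\mathbb{C}_{+}$ and gives $(s+\omega)^n=\omega^n(\zeta+1)^n$), I obtain for $\omega\ge 1$
\[
\|(sI-A_\omega)^{-1}B\|_{\R^n}\le\frac{|b_1|c_0}{\omega}\cdot\frac{(1+|\zeta|)^{n-1}}{|\zeta+1|^n}.
\]
Since $(\zeta+1)^n$ is Hurwitz, $|\zeta+1|^n\ge c_1(1+|\zeta|)^n$ on $\mathbb{C}_{+}$, so the last factor is bounded there; hence $\|(sI-A_\omega)^{-1}B\|_{\R^n}\le M_0/\omega$ uniformly on $\mathbb{C}_{+}$, which is precisely \dref{20209172022}.

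Lemma \ref{Lm20209172011} then gives $\|e^{A_\omega t}B\|_{\R^n}\le L_Be^{-\omega t}$ and the estimate $\lim_{t\to\infty}\|x(t)\|_{\R^n}\le L_B\|d\|_{\infty}/\omega$ in the canonical coordinates; undoing the similarity multiplies $L_B$ by the $\omega$-independent norm of the inverse transformation, which yields \dref{202010021300} with $M$ independent of $\omega$. I expect the one genuinely delicate step to be the \emph{uniform}-in-$s$ resolvent bound on all of $\mathbb{C}_{+}$ rather than merely on compact sets; this is exactly where the feedback-independence of the numerators $q_k$ and the scaling $s=\omega\zeta$ do the work, while the remaining arguments are routine bookkeeping.
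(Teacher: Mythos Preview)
Your proof is correct and follows the same broad strategy as the paper (reduce to canonical form, high-gain pole placement, verify the uniform resolvent hypothesis of Lemma \ref{Lm20209172011}), but the implementation differs in a genuine way. The paper passes through a \emph{second} similarity $U$ taking $(A,B)$ to the controllability canonical form $(A^{\top},C^{\top})$ and then uses $K_{\omega}=[k_{1}\omega^{n}-a_{1}\ \cdots\ k_{n}\omega-a_{n}]$ with an arbitrary Hurwitz base $A^{\top}+C^{\top}K$; in those coordinates the resolvent $[s-(A^{\top}+C^{\top}K_{\omega})]^{-1}C^{\top}$ has the clean closed form $-[1,s,\dots,s^{n-1}]^{\top}$ over the characteristic polynomial, from which the uniform decay \dref{20209172022} is read off. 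You instead stay in the observability canonical form, place all poles at $-\omega$, and exploit the key structural fact that $B=b_{1}\varepsilon_{1}$ makes rows $2,\dots,n$ of $sI-A_{\omega}$ feedback-independent, so the numerators $q_{k}(s)$ are fixed polynomials; the uniform bound then comes from the rescaling $s=\omega\zeta$ and the Hurwitz lower bound $|\zeta+1|^{n}\geq c_{1}(1+|\zeta|)^{n}$ on $\mathbb{C}_{+}$. Your route avoids the extra change of variables and makes the uniform-in-$s$ estimate more explicit than the paper's terse ``$k_{1}\neq 0$ and hence \dref{20209172022} holds''; the paper's route yields a slightly more flexible pole configuration and a simpler numerator. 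Both arrive at exactly the same invocation of Lemma \ref{Lm20209172011} and the same final bound.
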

    \begin{proof}
       We assume without loss of the generality that $A$, $B$ and $C$ are given  by    the observability canonical form    \dref{202010021047}. By
      Lemma \ref{Lm20201021046}, we conclude that $b_1\neq0$ and $b_2=b_3=\cdots=b_n=0$.  For simplicity, we suppose that $b_1=1$. Since system $(A^{\top},B^{\top})$ is a chain of $n$ integrators, it is observable.
      So $(A,B)$ is also controllable. As a result, there exists an
        invertible transformation $U$    that converts
        system $(A,B)$  into  the controllability canonical form $(A^{\top}, C^{\top})$.
        More specifically,
   \begin{equation}\label{2019141433}
\left.\begin{array}{l}
\disp   UAU^{-1}=A^{\top} \ \ \mbox{and}\ \ UB= C^{\top}.
\end{array}\right.
\end{equation}
 %where
% \begin{equation}\label{2019141434}
%U=\begin{bmatrix}
% 0&0&\cdots&0&1\\
% 0&0&\cdots&1&0\\
%\vdots&\vdots&\iddots&\vdots&\vdots\\
%  0&1&\cdots&0&0\\
%    1&0&\cdots&0&0\\
%\end{bmatrix},
%\end{equation}
It is sufficient to consider   the following system:
 \begin{equation}\label{2019141428}
\left.\begin{array}{l}
\disp  \dot{z}(t)=A^{\top}z(t)+C^{\top}[d(t)+u(t)].
\end{array}\right.
\end{equation}
%Since   $(A^{\top}, C^{\top})$ is the controllability canonical form, it is always controllable. So
%$(A,B)$ is also controllable due to \dref{2019141433}.
Let
  \begin{equation} \label{2020927830}
  K_{\omega }=[
 k_1\omega ^n -a_1\quad
  k_2\omega ^{n-1} -a_2\quad
  \cdots \quad
  k_n\omega -a_n] ,\ \ \omega>0,
 \end{equation}
 where
 $K=[k_1-a_1\quad  k_2-a_2\quad \cdots\quad k_n-a_n] $ is a vector such that
 $ A ^{\top}+C^{\top}K $
is  Hurwitz. A simple computation shows that
 \begin{equation}\label{202010021517}
		A ^{\top}+C^{\top}K_{\omega} =		 \begin{bmatrix}
0&1&0&\cdots&0\\
		0&0&1&\cdots&0\\
		\vdots&\vdots&\vdots&\ddots&\vdots\\
		0&0&0&\cdots&1\\
		  k_1\omega^n& k_2\omega^{n-1}&  k_3\omega^{n-2}&\cdots&  k_n\omega
		\end{bmatrix}
 \end{equation}
  is Hurwitz as well and
    \begin{equation}\label{202010021524}
 \lambda \omega \in \sigma( A ^{\top}+C^{\top}K_{\omega})\ \ \mbox{if and only if}\ \ \lambda  \in \sigma(A ^{\top}+C^{\top}K) .
 \end{equation}
 Moreover, for any $s\in \mathbb{C}_{+}$, it follows that
 \begin{equation}\label{02010021525}
		\left[s-( A ^{\top}+C^{\top}K_{\omega})\right]^{-1}C^{\top}  =\frac{-1}{k_1 \omega^n + k_2 \omega^{n-1} s + \cdots+ k_n\omega s^{n-1} - s^n} \begin{bmatrix}
  1 \\
    s \\
    \vdots\\
  s^{n-1}
\end{bmatrix}.
\end{equation}
Since  $A ^{\top}+C^{\top}K_{\omega}$
is Hurwitz, we have $k_1\neq0$ and hence
 \begin{equation}\label{02010021526}
  \lim_{\omega\to+\infty}\|[s-( A ^{\top}+C^{\top}K_{\omega})]^{-1}C^{\top}\|_{\R^n}=0\ \ \mbox{uniformly on}\  s\in \mathbb{C}_{+}.
  \end{equation}
   By Lemma \ref{Lm20209172011},  there exists an $L_C$ that is independent of $\omega$ and $t$ such that
       the solution of system
      $    \dot{z}(t)=(A ^{\top}+C^{\top}K_{\omega}) z(t)+C^{\top}d(t)$
     satisfies
   \begin{equation}\label{202010021539}
 \lim_{t\to\infty}\|{z}(t)\|_{\R^n}\leq  \frac{L_C\|d\|_{\infty}}{\omega}.
 \end{equation}
 By \dref{2019141433}, we can obtain \dref{202010021300} easily and moreover, the feedback \dref{202010021359} is given by
  \begin{equation} \label{202010031029}
 u(t)= K_{\omega}U^{-1}x(t),\ \ \omega>0.
 \end{equation}
    \end{proof}

%    \begin{remark}\label{Re202010021359}
% We only consider the high-gain for system stabilization.
%  %of system   \dref{2020151017} by exploiting the high-gain.
%Actually, the high-gain can also be used for the observer design. This will be considered in Section \ref{Se.5}.
%  we will design an observer for  system
 %  \dref{2020151017} by virtue of high-gain.
%    under the assumption that
%   it is observable for  $\mathbb{S}$.
 %   \end{remark}

\section{Observer design with   known disturbance dynamics}\label{KnownDynamics}
In this section,  we consider a special case that the disturbance dynamics are known, i.e.,  $d\in \Omega(G)$ with known $G\in\R^{m\times m}$. Since the prior information about the disturbance is sufficient, this is  the simplest case for the observer design yet is the concise situation to demonstrate the new
idea of observer design.

 %Since the disturbance dynamics   are known, there exists an output vector  $Q$
% such that $(G,Q)$ is observable and   system \dref{2020151017}
%can be written  as the cascade system
%  \begin{equation} \label{20201221052}
% \left\{\begin{array}{l}
%\disp \dot{x} (t) = A  x (t)+B  [Q v(t)+ u(t)] ,\crr
%\disp \dot{ v} (t) =  G v  (t)  \crr
%\disp y(t)=Cx (t).
%\end{array}\right.
%\end{equation}
%% By Lemma \ref{lm20209132036},  the disturbance corrupted system
%%\dref{2020151017} is observable as long as both \dref{20209132047} and \dref{2020213905} hold true.
%We  will design an  observer for the extended dynamics system \dref{20201221052}
%  in stead of the original system  \dref{2020151017}.
% The state space of system \dref{20201221052} is chosen as
% ${\R^{n}}\times {\R^{m}}=\R^n\times\R^m$ for positive integers $n$ and $m$.
Suppose that $(G,Q)$ is   observable with output space $\R$.
 Consider the following
  Luenberger observer of system \dref{2020151017}:
\begin{equation} \label{20201021624}
 \left\{\begin{array}{l}
\disp \dot{\hat{x}} (t) =  A \hat{x} (t)+B Q \hat{x} (t)-F_1[y(t)- C \hat{x} (t)]+B u(t),\crr
\disp  \dot{\hat{v}} (t) =G\hat{v} (t)+F_2[y(t)-C\hat{x} (t)],
\end{array}\right.
\end{equation}
where
 $F_1\in   \R^{n}    $  and $F_2\in  \R^{m}   $  are the gain vectors  to be determined. When system  \dref{20201021624} is observable, $F_1$ and $F_2$ can be chosen easily by
the pole assignment theorem. However, we will choose $F_1$ and $F_2$ in another way so that we can cope with the general disturbance by high-gain in Section \ref{Se.5}.
 Let the  observer errors be
\begin{equation} \label{wxh20202061802Ad102}
\tilde{x} (t)=x (t)-\hat{x} (t) \ \ \mbox{and}\ \ \tilde{v}(t)=v(t)-\hat{v}(t).
\end{equation}
Then, they are  governed by
\begin{equation} \label{wxh20202061803Ad102}
 \left\{\begin{array}{l}
\disp \dot{\tilde{x}} (t) = (A +F_1 C )\tilde{x} (t)+B Q \tilde{x} (t),\crr
\disp  \dot{\tilde{v}} (t) =  G \tilde{v} (t)-F_2C \tilde{x} (t).
\end{array}\right.
\end{equation}
If we  select  $F_1$ and $F_2$ properly such that  system \dref{wxh20202061803Ad102} is stable, then
$(x (t),v(t))$  can be estimated in the sense that
\begin{equation} \label{wxh20202062106Ad102}
  \|(x (t)-\hat{x} (t) ,v(t)-\hat{v} (t))\|_{\R^n \times \R^m}\to  0
  \ \ \mbox{as}\ \ t\to\infty.
\end{equation}
Inspired by the first two parts \cite{FPart1} and \cite{FPart2}  of this series works, the $F_1$ and $F_2$ can be chosen
easily by decoupling  the   system \dref{wxh20202061803Ad102} as a cascade system.
The corresponding transformation is
\begin{equation} \label{wxh20202061804Ad102}
\begin{array}{l}
\begin{bmatrix}
I_n&S\\
0&I_m
\end{bmatrix}
\begin{bmatrix}
A +F_1C&BQ \\
-F_2C &G
\end{bmatrix}\begin{bmatrix}
I_n&S\\
0&I_m
\end{bmatrix}^{-1}\crr
=
\begin{bmatrix}
A +(F_1-SF_2)C & SG-[A +(F_1-SF_2)C ]S+BQ \\
-F_2C &G+F_2C S
\end{bmatrix},
\end{array}
\end{equation}
where $S\in     \R^{n\times m}$ is to be determined.
If we select $S$ properly such that
\begin{equation} \label{20208292004Ad102}
SG-[A +(F_1-SF_2)C ]S+BQ=0,
\end{equation}
then   the right side matrix of \dref{wxh20202061804Ad102} is Hurwitz if and only if the matrices $A+(F_1-SF_2)C$ and $G+F_2CS$ are Hurwitz.

\begin{theorem}\label{Th20201141911}

Suppose that system \dref{2020151017} is observable for  $\Omega(G)$ and $G\in \R^{m\times m}$ is known.
Then, there exist $F_1\in  \R^n  $,   $F_2\in  \R^m $  and $Q\in  \R^{1\times m}$   such that
$(G,Q)$ is observable and
 the solution of the observer \dref{20201021624} satisfies
\dref{wxh20202062106Ad102}.
 Moreover, $F_1$,  $F_2$ and $Q$ can be selected  by the following   scheme:
(a)~ Select  $F_0\in  \R^n  $ such that $A +F_0C $ is Hurwtiz,
select  $P \in  \R^{1\times m }$ such that  $(G,P)$ is observable  and
select $F_2$ such that
 $G+F_2P$ is Hurwitz;
 (b)~  Solve the   equations
   \begin{equation} \label{202010021741}
   (A +F_0C )S-SG=BQ\ \ \mbox{ and}\ \ CS=P
   \end{equation}
   to get $S\in  \R^{n\times m}$ and
 $Q\in  \R^{1\times m}$;
(c)~   Set $F_1=F_0+SF_2$.

 \end{theorem}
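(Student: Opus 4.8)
The plan is to show that the selection scheme (a)--(c) makes the error system \dref{wxh20202061803Ad102} exponentially stable, which is precisely the convergence \dref{wxh20202062106Ad102}. Granting that, the verification is short: by the similarity transformation \dref{wxh20202061804Ad102}, the substitution $F_1=F_0+SF_2$ in step (c) gives $F_1-SF_2=F_0$, so the off-diagonal block of \dref{wxh20202061804Ad102} is exactly the left-hand side of the first equation in \dref{202010021741}; when that equation holds the transformed error matrix becomes block lower-triangular with diagonal blocks $A+F_0C$ and $G+F_2CS$. Using the second equation $CS=P$ of \dref{202010021741}, the latter block equals $G+F_2P$. Both are Hurwitz by step (a), hence the transformed matrix, and therefore the original error matrix, is Hurwitz, so $(\tilde{x}(t),\tilde{v}(t))\to0$ exponentially. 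Thus everything reduces to two points: the coupled equations \dref{202010021741} admit a solution $(S,Q)$, and the resulting $Q$ makes $(G,Q)$ observable.

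First I would collect the ingredients from the hypotheses. Since system \dref{2020151017} is observable for $\Omega(G)$, Lemma \ref{lm20209132036} guarantees that $(A,C)$ is observable and that the transmission-zero condition \dref{2020213905} holds, i.e. $C(\lambda-A)^{-1}B\neq0$ for every $\lambda\in\sigma(G)$. Observability of $(A,C)$ lets me place, by pole assignment, a gain $F_0$ with $A+F_0C$ Hurwitz; I additionally arrange $\sigma(A+F_0C)\cap\sigma(G)=\emptyset$, which is always possible since the eigenvalues of $A+F_0C$ may be assigned freely in the open left half-plane. Choosing $P$ so that $(G,P)$ is observable and then $F_2$ so that $G+F_2P$ is Hurwitz is standard observer pole placement on the pair $(G,P)$.

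The substance of the proof is the solvability of \dref{202010021741}. Because $\sigma(A+F_0C)\cap\sigma(G)=\emptyset$, the Sylvester equation $(A+F_0C)S-SG=BQ$ has, for each $Q$, a unique solution $S=S(Q)$ depending linearly on $Q$; it remains to pick $Q$ so that $CS(Q)=P$. I would test this eigenvector by eigenvector: if $Gw=\lambda w$ with $w\neq0$, then $s:=Sw$ solves $(A+F_0C-\lambda)s=B(Qw)$, so $s=(A+F_0C-\lambda)^{-1}B(Qw)$ and the constraint becomes the scalar equation $C(A+F_0C-\lambda)^{-1}B\,(Qw)=Pw$. The crucial fact is that $C(A+F_0C-\lambda)^{-1}B\neq0$. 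Indeed, output injection $A\mapsto A+F_0C$ is a rank-one perturbation, so the Sherman--Morrison identity gives $C(\lambda-(A+F_0C))^{-1}B=C(\lambda-A)^{-1}B/[\,1-C(\lambda-A)^{-1}F_0\,]$; for $\lambda\in\sigma(G)$, which lies outside $\sigma(A+F_0C)$, the left-hand side is finite and the denominator is therefore nonzero, while the numerator is nonzero by \dref{2020213905}. Hence $Qw$ is uniquely forced by $Qw=Pw/\,C(A+F_0C-\lambda)^{-1}B$; running over a basis of eigenvectors of $G$ determines $Q$, and back-substitution yields the matching $S$ with $CS=P$. Moreover $Pw\neq0$ for every eigenvector $w$ (observability of $(G,P)$ via the PBH test), so $Qw\neq0$ as well, and the PBH test then shows $(G,Q)$ is observable, as claimed.

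The step I expect to be the main obstacle is exactly this solvability argument for \dref{202010021741}: one must show that the single scalar freedom in $Qw$ along each mode is precisely enough to meet the output constraint, and this is where the transmission-zero condition \dref{2020213905}, transported to the injected plant by the invariance of zeros under output injection, is indispensable. A secondary technical point is the non-diagonalizable case: when $G$ carries nontrivial Jordan blocks the reduction above must be organized along whole Jordan chains (equivalently, argued through the observability pencil rather than single eigenvectors), but the nonvanishing of $C(A+F_0C-\lambda)^{-1}B$ at each $\lambda\in\sigma(G)$ continues to drive the construction and to deliver both $S,Q$ and the observability of $(G,Q)$.
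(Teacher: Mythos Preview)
Your proposal is correct and follows essentially the same approach as the paper: invoke Lemma~\ref{lm20209132036} for observability of $(A,C)$ and the transmission-zero condition, choose $F_0$ with $\sigma(A+F_0C)\cap\sigma(G)=\emptyset$, solve the regulator equations \dref{202010021741}, verify $(G,Q)$ is observable via the PBH test, and conclude stability from the block-triangular similarity \dref{wxh20202061804Ad102}. The only notable difference is that where the paper simply cites a reference for the solvability of \dref{202010021741}, you supply an explicit eigenvector-by-eigenvector construction using Sherman--Morrison to transport the transmission-zero condition to the injected plant; conversely, the paper's PBH argument for observability of $(G,Q)$ (from $Qh=0\Rightarrow Sh=0\Rightarrow Ph=0$ directly via the Sylvester equation) handles non-diagonalizable $G$ uniformly without the Jordan-chain caveat you flag.
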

\begin{proof}
 Since system \dref{2020151017} is observable for  $\Omega(G)$,  it follows from Lemma \ref{lm20209132036} that
 $(A,C)$ is observable and the transmission zeros condition \dref{2020213905} holds.
   Therefore,
there exists an
    $F_0\in  \R^n $
  such that    $A +F_0C$  is    Hurwitz and
  \begin{equation} \label{20201202102}
  \sigma(A +F_0C)\cap\sigma(G)=\emptyset.
  \end{equation}
  By \cite{Rosenblum1956DMJ},  \dref{20201202102} and \dref{2020213905},
 the   equations
 \dref{202010021741}  admits a   solution
  $S\in  \R^{n\times m} $ and $Q\in \R^{1\times m}$.
Since $(G,P)$ is observable, there exists an $F_2$ such that
 $G+F_2P$ is Hurwitz. As a result, $F_1=F_0+SF_2$ is well defined.

 Now, we claim that $(G,Q)$ is observable. Indeed,
if we suppose that $ Gh=\lambda h$ and $Qh=0$ with $\lambda\in\sigma(G)$.
  Then,  the Sylvester equation in \dref{202010021741}
 turns out  to be $(A+F_0C -\lambda)Sh  =BQh=0$.
 By \dref{20201202102}, we conclude that    $\lambda\notin\sigma(A+F_0C)$ and hence $A+F_0C -\lambda$ is invertible. As a result, $Sh =0$ and $CSh=P h=0$.
   By \cite[p. 15, Remark 1.5.2]{Weissbook}, we can conclude $h=0$ due to the observability of
    $(G,P )$. Therefore, we obtain the  observable system $(G,Q)$ by which
 system \dref{2020151017} can be written as a cascade system \dref{202010021732} for some initial state.

 Let
\begin{equation} \label{20206231121}
\tilde{x} (t)=x (t)-\hat{x} (t) \ \ \mbox{and}\ \ \tilde{v}(t)=v(t)-\hat{v} (t).
\end{equation}
Then, the error between \dref{202010021732} and the observer \dref{20201021624}   is governed by
\begin{equation} \label{202010021800}
 \left\{\begin{array}{l}
\disp \dot{\tilde{x}} (t) = [A +(F_0+SF_2)C ]\tilde{x} (t)+B Q \tilde{v} (t),\crr
\disp  \dot{\tilde{v}} (t) =  G \tilde{v} (t) -F_2C\tilde{x}(t) .
\end{array}\right.
\end{equation}
Thanks to the choice of $F_1$ and $F_2$, a simple computation shows that  the following matrices are similar each other
 \begin{equation} \label{2020914721}
 \begin{bmatrix}
A+(F_0+SF_2)C& BQ  \\ -F_2C&  G
\end{bmatrix}   \ \ \mbox{and}\ \ \begin{bmatrix}
A +F_0C&0   \\
-F_2C&G+F_2P \\
 \end{bmatrix}.
\end{equation}
  Since the matrices  $A +F_0C$ and $G+F_2P$ are Hurwitz,
 both the matrices in \dref{2020914721} are Hurwitz. As a result,
 \dref{wxh20202062106Ad102}  holds due to      \dref{20206231121}.
 \end{proof}

  \begin{remark}\label{Re202010021803}

Equations \dref{202010021741} are   known as the regulator equations which are instrumental to establishing
the linear output regulation theory \cite{HuangJ2004}. The transmission zeros condition \dref{2020213905} can be represented as
\begin{equation}\label{20201031209}
 {\rm rank }\begin{bmatrix}
    A-\lambda& B \\
        C   &0
            \end{bmatrix}=n+m,\ \ \ \forall\ \lambda\in\sigma(G).
\end{equation}

  % This main idea of the observer design comes from   our series works
%  \cite{FPart1,FPart2,FPartDelay}.
\end{remark}
\section{Observer design with general disturbance}\label{Se.5}
The disturbance considered in Section \ref{KnownDynamics} is quite ideal.
In engineering applications, the disturbance dynamics are  usually unknown  or at least partially unknown.
It is therefore more realistic to consider the  disturbance  $d\in \mathbb{S}$.
 Suppose that we have known that $G$ is a
 ``rough approximation" of the
   dynamics of  $d$.
 In order to make use this  prior dynamics,
   %In terms of the dynamics dominated by $G$ ($G=0$ represents the case where we have no  disturbance dynamics at all),
  we first  represent the disturbance dynamically as an output of a system dominated by $G$.

\begin{lemma} \label{lm20201211809}
Let $(G,Q)$ be  an observable system  with   state space $\R^{m+1}$ and output space $\R$.   Suppose that $0\in \sigma(G)$.
 Then,  for any
$d\in \mathbb{S } $, there exists  a  $v_0\in \R^{m+1}$ such that
\begin{equation}\label{20201212010}
  \left\{\begin{array}{l}
\disp \dot{ v} (t) =  Gv  (t)+\frac{B_{d}}{ QB_{d} }  \dot{e}(t),\ \ v (0)=v_{0},\crr
\disp d (t)=Qv (t),
\end{array}\right.
\end{equation}
where $B_{d}\in  \R^{m+1} $ is the eigenvector corresponding to the eigenvalue  $0$ of $G$,
 $e=(I-\mathbb{P}_G)  {d}$  and $\mathbb{P}_G$ is given by \dref{20201211811}.

\end{lemma}
\begin{proof}
Since $(G,Q)$ is observable and $ 0\in\sigma(G)$, it follows from the Hautus test \cite[p.15, Remark 1.5.2]{Weissbook} that ${\rm Ker} G\cap {\rm Ker}Q=\{0\}$.
 This means $ QB_{d} \neq 0$  by the fact $GB_{d}=0$ and $B_{d}\neq0$.  The first equation of \dref{20201212010} therefore  makes sense.

 Since $\mathbb{P}_G {d}\in\Omega(G)$, it can be written dynamically as
 \begin{equation}\label{20201211842}
  \left.\begin{array}{l}
\disp \dot{ {v}}_1(t) =  Gv_1  (t),\ \
\disp (\mathbb{P}_G {d})(t)=Qv_1(t)
\end{array}\right.
\end{equation}
for some initial state.
Since
$GB_{d}=0$, if we let
 \begin{equation}\label{20201212021}
v_2(t)=     \frac{d(t)-(\mathbb{P}_G {d})(t)}{ QB_{d} } B_{d}      =
 \frac{e(t)}{ QB_{d} } B_{d},\ \ t\geq0,
\end{equation}
 then
 \begin{equation}\label{20201212023}
  \left\{\begin{array}{l}
\disp \dot{ {v}}_2(t) =  Gv_2  (t)+\frac{B_{d}}{ QB_{d} } \dot{e}(t),\ \ v_2(0)=\frac{e(0)}{ QB_{d} } B_{d},\crr
\disp e(t)=Qv_2(t).
\end{array}\right.
\end{equation}
 System  \dref{20201212010} then follows from
 \dref{20201211842}  and  \dref{20201212023} by letting  $v(t)=v_1(t)+v_2(t)$.
  \end{proof}

%
%
%As the example given by \dref{20209171839}, the disturbance corrupted system \dref{2020151017} with setting
%\dref{2020918849} below  may be unobservable, which makes observer  not always possible.
%In stead, we design an observer to estimate the disturbance and the state feedback simultaneously.

Now, we design an observer for system \dref{2020151017} under the assumption that $d\in \mathbb{S}$.
 By Lemma \ref{Lm20201021046},
 %the matrices $A$, $B$ and $C$ of system \dref{2020151017} can   be written as
%   the form
%   \dref{202010021047}
%	 with $b_1\neq0$ and
%  $b_2=b_3=\cdots=b_n=0$
%     by an invertible transformation if and only if  system
%  \dref{2020151017} is observable for  $ \mathbb{S}$.
  we can suppose without loss of generality
  that $A,B$,  $C$, $G$ and $B_d$  satisfy the following assumptions:

\begin{assumption}\label{Assum1}
Let    $n$ and $m$ be positive integers, let the matrices $A$, $B$ and $C$  be given by  \dref{202010021047}
	 with $b_1=1$ and
  $b_2=b_3=\cdots=b_n=0$
 %\begin{equation}\label{2020918849}
%		A =\begin{bmatrix}
%0&0& \cdots&0& a_1\\
%1&0& \cdots&0& a_{2}\\
% 0&1& \cdots&0& a_{3}\\
%\vdots&\vdots&\ddots&\vdots&\vdots \\
%		0&0& \cdots&1& a_n\\
%		\end{bmatrix},  \ \ B=
%\begin{bmatrix}
% 1\\0\\
% \vdots \\0\\
%		\end{bmatrix}\ \  \mbox{and}\ \ C=[0\ 0\ \cdots\ 0\ 1]\in \R^{1\times n},
%		\end{equation}
%where $a_j  \in\R$, $j=1,2,\cdots,n$.
and let the matrices  $G$, $E$ and $B_{d}$ be given by
 \begin{equation}\label{20209923729}
		G  =	\begin{bmatrix}
0&1&0&\cdots&0\\
		0&0&1&\cdots&0\\
		\vdots&\vdots&\vdots&\ddots&\vdots\\
		0&0&0&\cdots&1\\
		   0 & g_1 &  g_2 &\cdots&  g_m
		\end{bmatrix},\ \ E=\begin{bmatrix}
0\\
0\\
0\\
\vdots\\
1\\
		\end{bmatrix}\ \ \mbox{and}\ \  B_{d}=\begin{bmatrix}
1\\
0\\
0\\
\vdots\\
0\\
		\end{bmatrix}\in\R^{m+1},
		\end{equation}
where  $ g_j\in \R$,  $j= 1,2,\cdots,m$  such that
  \begin{equation} \label{2020923735}
  \sigma(G)\subset  \mathbb{C}_{+}.
  \end{equation}
\end{assumption}
   The assumption \dref{20209923729}    implies that $ 0\in\sigma(G)$ and thus  $GB_{d}=0$   for any $g_j$, $j=1,2,\cdots,m$.
Moreover, the dynamics dominated by $G$ of  \dref{2020923735} contain all signals of  harmonic, polynomial signals,
exponential signals  and their linear combinations.
%This implies that all these signal can be can
%For  any $d\in \mathbb{S}$, we decompose
%$d(t)=(\mathbb{P}_Gd)(t)+e(t)$, where
%     $\mathbb{P}_{G}$
%  is defined by \dref{20201211811} and
%$e(\cdot)$ is the approximation error.
%  By Lemma \ref{lm20201211809},    system  \dref{2020151017} can be written, in terms of \dref{20209923729}, as
%  a cascade system \dref{20201212010}.
   Inspired by  Theorem \ref{Th20201141911},
 the EDO  of system \dref{20201212010}   is designed as
  \begin{equation} \label{2020919191147}
  \left\{\begin{array}{l}\disp
\dot{ \hat{x}} (t) =
[A+( K_{\omega_o} +SE )C]\hat{x}(t)+B Q\hat{v}(t) -( K_{\omega_o} +SE )y(t)+Bu(t),\crr
\disp \dot{ \hat{v}} (t) =G\hat{v}(t)-E C \hat{x}(t)+
 E  y(t) ,
\end{array}\right.
 \end{equation}
where $G$, $E$,     $K_{\omega_{o}},  S $ and $Q$ are chosen by the following scheme:
%\begin{tcolorbox}[title=
%{\bf Scheme of parameters choice of observer \dref{2020919191147}}  ]
\begin{itemize}

\item Choose $G$ and $E$    in terms of the Assumption \ref{Assum1} and the prior  information  about the disturbance (This will be considered in Sections \ref{ESO} and \ref{Se.7});

\item   Choose $K=[k_1-a_1\quad  k_2-a_2\quad  \cdots\quad k_n-a_n]^{\top}$  and
$P=[p_0\quad p_1-g_1\quad \cdots\quad p_m-g_m]$ such that the matrices
$ A + KC $ and  $G + E P$
are Hurwitz. Let
  \begin{equation} \label{2020927830B}
  K_{\omega_o}=[
 k_1\omega_o^n -a_1\ \
  k_2\omega_o^{n-1} -a_2\ \
  \cdots\ \
  k_n\omega_o-a_n]^{\top}
 \end{equation}
 and
 \begin{equation} \label{2020927837}
  P_{ \omega_{o}}=[
 p_0\omega_{o}^{m+1}\ \  p_1\omega_{o}^m-g_1\ \  \cdots\  \ p_m\omega_{o}-g_m],
  \end{equation}
 where  ${\omega_o}$ is a positive   tuning parameter;

 \item  Solve  the     equations
    \begin{equation} \label{2020861532regualtor}
   \left.\begin{array}{l}
\disp (A+K_{\omega_o} C )S-SG =BQ,\ \ \ \
   CS=P_{\omega_{o}},
\end{array}\right.
      \end{equation}
 to get $S\in  \R^{n\times (m+1)}$
 and $Q\in    \R^{1\times (m+1)} $.

 \end{itemize}
%\end{tcolorbox}

\begin{lemma}\label{Lm2020923809}
Under   Assumption \ref{Assum1} and the scheme of observer design, the equations \dref{2020861532regualtor} are always solvable. Moreover,  the following assertions are true:

(i) System $(G,Q)$ is observable;

(ii) For any $s\in \mathbb{C}_{+}$, there exist  two  positive constants  $C_K$  and $C_A$, independent of $\omega_o$ and $s$, such that
 \begin{equation}\label{20209231123}
\|[s-(A+K_{\omega_o}C)] ^{-1}B\|_{{\R^{n}}}\leq \frac{C_K}{\omega_o},
  \end{equation}
 \begin{equation}\label{20209241053}
\frac{1}{|QB_{d}|}
 =\frac{1}{   |k_1p_0| \omega_{o}^{n+m+1}}
 %\leq \frac{C_P}{\omega_o^{n }}
  \end{equation}
and
  \begin{equation} \label{2020925920}
 \|C[s-(A+K_{\omega_o}C  )]^{-1}\|_{{\R^{n}}}\leq \frac{C_A}{\omega_o^n};
\end{equation}

(iii)  For any $s\in \mathbb{C}_{+}$,  there exists a  positive constant   $C_G$,    independent of $\omega_o$ and $s$,  such that
 \begin{equation}\label{202010031638}
\| [s-(G+EP_{\omega_{o}} )]^{-1} E\|_{\R^{m+1}}
  \leq \frac{C_G}{\omega_o^{m+1}},\ \ \forall\; s\in   \mathbb{C}_{+};
  \end{equation}

  (iv) If $G$ is diagonalizable, then  there exist two    positive constants     $C_S$  and $C_Q$,
independent of $\omega_o$ and $s$, such that
\begin{equation}\label{20201141033}
\| Sv\|_{\R^n}
  \leq  {C_S}\|v\|_{\R^{m+1}}{\omega_o^{m+n}},\ \ \forall\; v\in   \R^{m+1}
  \end{equation}
and
\begin{equation}\label{20201141436}
|Q[s-(G+EP_{\omega_o})]^{-1}B_d| \leq C_Q\omega_o^{m+n}, \ \  \forall\; s\in   \mathbb{C}_{+}.
  \end{equation}

%\begin{equation}\label{20209241901}
%\|S[s-(G+EP_{\omega_{o}} )]^{-1} E\|
%  \leq C_S\omega_o^{n-1},\ \ \forall\; s\in   \mathbb{C}_{+}.
%  \end{equation}

 \end{lemma}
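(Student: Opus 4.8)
The plan is to reduce every assertion to the explicit companion realizations fixed in Assumption~\ref{Assum1} and then read off the estimates from closed-form resolvents. First I would record the shapes of the two feedback matrices. Since $C=[0\ \cdots\ 0\ 1]$ only acts on the last column, $A+K_{\omega_o}C$ is again in observability canonical form with last column $[k_1\omega_o^{n}\ \cdots\ k_n\omega_o]^{\top}$; the diagonal scaling $D_{\omega_o}=\mathrm{diag}(1,\omega_o^{-1},\dots,\omega_o^{-(n-1)})$ satisfies $D_{\omega_o}^{-1}(A+K_{\omega_o}C)D_{\omega_o}=\omega_o(A+KC)$, so $\sigma(A+K_{\omega_o}C)=\omega_o\,\sigma(A+KC)$ lies in the open left half plane because $A+KC$ is Hurwitz by the design scheme. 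The same computation, using $E=[0\ \cdots\ 0\ 1]^{\top}$, shows $G+EP_{\omega_o}$ is the companion matrix with last row $[p_0\omega_o^{m+1}\ p_1\omega_o^{m}\ \cdots\ p_m\omega_o]$ and $\sigma(G+EP_{\omega_o})=\omega_o\,\sigma(G+EP)$, again in the open left half plane. As $\sigma(G)\subset\mathbb{C}_+$ by \dref{2020923735}, the spectra of $A+K_{\omega_o}C$ and $G$ are disjoint, so for each prescribed $Q$ the Sylvester equation in \dref{2020861532regualtor} has a unique solution (Rosenblum, cf. \cite{Rosenblum1956DMJ}).

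To solve \dref{2020861532regualtor} together with the constraint $CS=P_{\omega_o}$, I would exploit the companion structure directly rather than invoke abstract regulator theory, since the same computation yields \dref{20209241053}. Writing the rows of $S$ as $S^{(1)},\dots,S^{(n)}$, the constraint fixes $S^{(n)}=P_{\omega_o}$, rows $2,\dots,n$ of the Sylvester equation give the backward recursion, and row $1$ determines $Q$:
\[
S^{(n)}=P_{\omega_o},\qquad S^{(i-1)}=S^{(i)}G-k_i\omega_o^{\,n-i+1}S^{(n)}\ (2\le i\le n),\qquad Q=k_1\omega_o^{n}P_{\omega_o}-S^{(1)}G .
\]
This proves unique solvability. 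Moreover $B_d=[1\ 0\ \cdots\ 0]^{\top}$ is the eigenvector of $G$ for the eigenvalue $0$, so $GB_d=0$, hence $S^{(1)}GB_d=0$ and $QB_d=k_1\omega_o^{n}(P_{\omega_o}B_d)=k_1p_0\,\omega_o^{\,n+m+1}$, which is exactly \dref{20209241053}. For (i), if $Gh=\lambda h$ and $Qh=0$ with $\lambda\in\sigma(G)$, applying the Sylvester equation to $h$ gives $(A+K_{\omega_o}C-\lambda)Sh=BQh=0$; since $\lambda\in\mathbb{C}_+$ is not an eigenvalue of the Hurwitz $A+K_{\omega_o}C$, we get $Sh=0$, whence $CSh=P_{\omega_o}h=0$. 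Finally $(G,P_{\omega_o})$ is observable, for a common null vector of $G-\mu$ and $P_{\omega_o}$ would force $\mu\in\sigma(G+EP_{\omega_o})$ (left half plane) and $\mu\in\sigma(G)\subset\mathbb{C}_+$, a contradiction; so by the Hautus test \cite[p.~15, Remark 1.5.2]{Weissbook}, $h=0$ and $(G,Q)$ is observable.

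For the resolvent estimates (ii)--(iii), I would use the explicit rational forms that the companion structure forces, namely
\[
C[sI-(A+K_{\omega_o}C)]^{-1}=\frac{[\,1\ \ s\ \ \cdots\ \ s^{\,n-1}\,]}{p_{\omega_o}(s)},\qquad p_{\omega_o}(s)=\det\!\big(sI-(A+K_{\omega_o}C)\big)=\prod_{i=1}^{n}(s-\omega_o\lambda_i),
\]
with $\mathrm{Re}\,\lambda_i<0$, and the analogous $[sI-(A+K_{\omega_o}C)]^{-1}B$ and $[sI-(G+EP_{\omega_o})]^{-1}E=[1\ s\ \cdots\ s^{m}]^{\top}/q_{\omega_o}(s)$. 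The decay in $\omega_o$ comes from the denominator: a homogeneity/compactness argument gives $|s-\omega_o\lambda_i|\ge c_i(|s|+\omega_o)$ on $\mathbb{C}_+$ (because $s=\omega_o\lambda_i$ is impossible there), hence $|p_{\omega_o}(s)|\ge c\,(|s|+\omega_o)^{n}$ and likewise $|q_{\omega_o}(s)|\ge c\,(|s|+\omega_o)^{m+1}$. Comparing the polynomial degrees of numerator and denominator then yields \dref{20209231123}, \dref{2020925920} and \dref{202010031638}, with constants independent of $\omega_o$ and $s$; the scaling $D_{\omega_o}$ gives the same bounds via boundedness of the resolvent of the fixed Hurwitz matrices on $\mathbb{C}_+$, exactly as in the proofs of Lemma~\ref{Lm20209172011} and Theorem~\ref{Th202010021237}.

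The hard part will be part (iv), and it is where the diagonalizability of $G$ enters. For \dref{20201141033} I would track $\omega_o$-degrees through the backward recursion above: $S^{(n)}$ has degree $m+1$ in $\omega_o$, and each step raises the degree by one, so $S^{(1)}$ has degree $m+n$, giving $\|S\|\le C_S\,\omega_o^{m+n}$ with a constant controlled by the fixed powers $\|G^{j}\|$; diagonalizability lets these powers be bounded cleanly through a spectral representation of the Sylvester solution. For \dref{20201141436} the delicate point is a cancellation: the dominant entry of $Q$ is $QB_d$ of order $\omega_o^{n+m+1}$, but it pairs with the leading entry of $[sI-(G+EP_{\omega_o})]^{-1}B_d$, which is $O(\omega_o^{-1})$ by the denominator bound, so one power is lost and the product lands at $\omega_o^{m+n}$. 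Making this pairing uniform in $s\in\mathbb{C}_+$ while keeping all constants independent of $\omega_o$ is the main technical obstacle; this is precisely the bookkeeping of $\omega_o$-powers through the companion recursion and the diagonalization of $G$, and I would isolate it as the one place requiring careful, rather than routine, estimation.
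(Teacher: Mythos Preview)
Your proposal is correct and, for parts (i)--(iii) and the solvability, closely parallels the paper's argument, but you take a genuinely different route in two places worth noting.

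First, for solvability, the formula for $QB_d$, and the bound on $\|S\|$, you build the solution of \dref{2020861532regualtor} by the row recursion $S^{(n)}=P_{\omega_o}$, $S^{(i-1)}=S^{(i)}G-k_i\omega_o^{\,n-i+1}S^{(n)}$, $Q=k_1\omega_o^{n}P_{\omega_o}-S^{(1)}G$. The paper instead cites abstract regulator/Sylvester solvability (\cite{Natarajan2016TAC}, \cite{Rosenblum1956DMJ}) and then applies the Sylvester equation to the eigenvectors $\varepsilon_j$ of $G$, obtaining $S\varepsilon_j=-P_{\omega_o}\varepsilon_j\,\mathcal{K}_{\lambda_j}$ and $Q\varepsilon_j=P_{\omega_o}\varepsilon_j\,\rho_A(\lambda_j,\omega_o)$; diagonalizability is used precisely here to sum over the eigenbasis. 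Your recursion is more elementary and yields $QB_d=k_1p_0\omega_o^{\,n+m+1}$ in one line; it also gives $\|S\|\le C_S\omega_o^{\,m+n}$ \emph{without} assuming $G$ diagonalizable (the coefficients in the recursion involve only the fixed matrices $G^{j}$ and the scalars $k_i,p_i$), so on this point you actually obtain a slight strengthening over the paper's statement of \dref{20201141033}.

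Second, for \dref{20201141436} you propose to pair entries of $Q$ with those of $[s-(G+EP_{\omega_o})]^{-1}B_d$ and exploit a cancellation in the leading entry. This is the right intuition, but the entrywise route runs into difficulties because the later entries of $[s-(G+EP_{\omega_o})]^{-1}B_d$ carry increasing powers of $s$ that are not uniformly bounded on $\mathbb{C}_+$. The paper avoids this by left-multiplying the Sylvester equation by $C(A+K_{\omega_o}C)^{-1}$ to get the identity
\[
Qv=\frac{P_{\omega_o}v-C(A+K_{\omega_o}C)^{-1}SGv}{C(A+K_{\omega_o}C)^{-1}B},
\]
which expresses $Q$ through $P_{\omega_o}$ and $SG$; one then bounds $|P_{\omega_o}[s-(G+EP_{\omega_o})]^{-1}B_d|\le M\omega_o^{m}$ and $\|G[s-(G+EP_{\omega_o})]^{-1}B_d\|\le M$ directly from the explicit resolvent \dref{20201141456}, and combines with \dref{2020925920} and \dref{20201141033}. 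This identity packages exactly the cancellation you anticipate and handles the $s$-uniformity cleanly; it would be the natural ingredient to complete your argument for (iv).
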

\begin{proof}
Since $A+KC$ is Hurwitz and by  the choice of
$A$, $B$, $C$ and $K_{\omega_o}$,  a simple computation shows that $A+K_{\omega_o}C$ is Hurwitz as well and
   \begin{equation} \label{2020923905}
 \lambda \omega_o\in \sigma(A+K_{\omega_o}C)\ \ \mbox{if and only if }\ \lambda  \in \sigma(A+ KC) .
 \end{equation}
Noting that the matrix
 \begin{equation}\label{20209231022}
	 A+K_{\omega_o}C  =		 \begin{bmatrix}
0&0& \cdots&0& k_1\omega_o^n\\
1&0& \cdots&0& k_{2}\omega_o^{n-1}\\
 0&1& \cdots&0& k_{3}\omega_o^{n-2}\\
\vdots&\vdots&\ddots&\vdots&\vdots \\
		0&0& \cdots&1& k_n\omega_o \\
		\end{bmatrix}
\end{equation}
is Hurwitz,
  \dref{2020923735} implies  that
 \begin{equation} \label{2020923823}
 \sigma(A+K_{\omega_o}C)\cap\sigma(G)=\emptyset.
   \end{equation}
Moreover,
a simple computation shows  that
\begin{equation}\label{20209231109}
[\lambda-(A+K_{\omega_o}C)]^{-1}B
  =		\disp \frac{ \mathcal{K}_{\lambda}}{\rho_A (\lambda,\omega_o)},\ \ \lambda\in \mathbb{C}_{+},
\end{equation}
where
\begin{equation} \label{202029231453K}
\mathcal{K}_{\lambda}=\begin{bmatrix}
   k_n\lambda ^{n-2}\omega_{o} + k_{n-1}\lambda ^{n-3}\omega_{o}^2 +\cdots+ k_3\lambda \omega_{o}^{n-2} + k_2\omega_{o}^{n-1} -\lambda ^{n-1}  \\
k_n\lambda ^{n-3}\omega_{o} + k_{n-1}\lambda ^{n-4}\omega_{o}^2 + \cdots+k_3\omega_{o}^{n-2}-\lambda ^{n-2}\\
\vdots\\
  k_n\omega_{o}-\lambda  \\
		-1 \\
		\end{bmatrix}
\end{equation}
and
\begin{equation} \label{202029231453}
\rho_{A}(\lambda,\omega_o)=k_n\lambda ^{n-1}\omega_{o} + k_{n-1}\lambda ^{n-2}\omega_{o}^2 +\cdots +k_2 \lambda  \omega_{o}^{n-1}+   k_1\omega_{o}^n-\lambda ^n  .
\end{equation}
 Hence,  the following  transmission zeros condition holds:
   \begin{equation} \label{20202923823}
C[\lambda-(A+K_{\omega_o}C)]^{-1}B=\frac{-1}{ \rho_{A}(\lambda,\omega_o) }\neq 0  ,\ \ \forall\ \lambda\in  \mathbb{C}_{+}.
\end{equation}
By \cite{Natarajan2016TAC},  \dref{2020923823}  and  \dref{20202923823},
the equations \dref{2020861532regualtor}  are solvable.

(i).  Suppose that $ Gh=\lambda h$ and $Qh=0$ with $\lambda\in\sigma(G)$.
  Then,  the Sylvester equation in \dref{2020861532regualtor}
 turns out  to be $(A+K_{\omega_o}C -\lambda)Sh  =BQh=0$.
 By \dref{2020923823}, we conclude that    $\lambda\notin\sigma(A+K_{\omega_o}C)$ and hence $A+K_{\omega_o}C -\lambda$ is invertible. As a result, $Sh =0$ and $CSh=P_{\omega_{o}}h=0$.
   By \cite[p. Remark 1.5.2]{Weissbook}, we can conclude $h=0$ provided
    $(G,P_{\omega_{o}})$ is observable. Using   \cite[p. Remark 1.5.2]{Weissbook} again,
    $(G,Q)$ is observable
      if we can prove
    $(G,P_{\omega_{o}})$ is observable.
    Actually,  for any $ Gv=\lambda v$ and $P_{\omega_{o}}v=0$ with $\lambda\in\sigma(G)$,
  we have
  $(G+EP_{\omega_{o}})v=Gv=\lambda v$.   Since $G+EP$ is Hurwitz, it follows from
    \dref{20209923729} and \dref{2020927837}  that  $G+EP_{\omega_{o}}$ is Hurwitz
     as well and
     \begin{equation} \label{2020923904}
 \lambda \omega_o\in \sigma(G+EP_{\omega_{o}})  \mbox{ if and only if }  \lambda  \in \sigma(G+EP) .
 \end{equation}
 By \dref{2020923735} and the fact $\lambda\in\sigma(G)$, we obtain $\lambda\notin \sigma(G+EP_{\omega_{o}})$.
Hence,   $(G+EP_{\omega_{o}})v= \lambda v$   implies that  $v=0$.  By \cite[p. Remark 1.5.2]{Weissbook},
$(G,P_{\omega_{o}})$ is observable.

(ii). Since $A+K_{\omega_o}C$ is Hurwitz, it follows from \dref{20209231022}  that $k_1\neq0$.
Hence, \dref{20209231123} can be obtained by
\dref{20209231109}, \dref{202029231453K} and \dref{202029231453} easily.
 Noting the $GB_{d}=0$,   it follows from  \dref{2020861532regualtor} that
   \begin{equation}\label{2020923918}
   (A+K_{\omega_o}C  )SB_{d}  =BQB_{d}
 \end{equation}
  and hence
 \begin{equation}\label{2020923910}
P_{\omega_{o}}B_{d}=CSB_{d}=C(A+K_{\omega_o}C  )^{-1}BQB_{d}.
  \end{equation}
  By  \dref{20209923729} and \dref{2020927837},
  \begin{equation}\label{20209241846}
P_{\omega_{o}}B_{d}= p_0\omega_o^{m+1}.
  \end{equation}
  Since $(G,Q)$ is observable and $GB_{d}=0$,   the Hautus test
 \cite[p.15, Remark 1.5.2]{Weissbook} implies that $QB_{d}\neq0$.
  Hence, we combine   \dref{20202923823},  \dref{20209241846} and \dref{2020923910} to obtain $ p_0\omega_o^{m+1}\neq0$ and
\begin{equation}\label{20209231429}
\frac{1}{QB_{d}}=
  \frac{C(A+K_{\omega_o}C  )^{-1}B}{P_{\omega_{o}}B_{d}}=-\frac{1}
{   k_1  p_0  \omega_{o}^{n+m+1 }},
  \end{equation}
which leads to \dref{20209241053}  easily.
 In view of \dref{20209231022}, a  straightforward computation shows that
   \begin{equation} \label{20209251100}
C[s-(A+K_{\omega_o}C)]^{-1} =\frac{ -1}{ \rho_{A}(s,\omega_o) } [
 1\quad s\quad s^2\quad \cdots\quad s^{n-1}],\ \forall\; s\in \mathbb{C}_{+},
\end{equation}
which, together with \dref{202029231453}, leads to  \dref{2020925920}  easily.

(iii).
 By  a straightforward computation, it follows that
\begin{equation}\label{20209241057}
	G+EP_{\omega_{o}} =	\begin{bmatrix}
0&1&0&\cdots&0\\
		0&0&1&\cdots&0\\
		\vdots&\vdots&\vdots&\ddots&\vdots\\
		0&0&0&\cdots&1\\
		  p_0\omega_o^{m+1}& p_1\omega_o^{m}&  p_2\omega_o^{m-1}&\cdots&  p_m\omega_o
		\end{bmatrix}
\end{equation}
and hence
\begin{equation}\label{20209241100}
		[s-(G+EP_{\omega_{o}})]^{-1}E=\frac{-1}{ \rho_G(s,\omega_o)} \begin{bmatrix}
  1 \\
    s \\
    \vdots\\
  s^{m}
\end{bmatrix},\ \ s\in \mathbb{C}_{+},
\end{equation}
where
\begin{equation}\label{20209242002}
		\rho_G(s,\omega_o)=p_0 \omega_o^{m+1} +p_1\omega_o^{m} s + \cdots+ p_m\omega_o s^{m} - s^{m+1} .
\end{equation}
Since $G+EP_{\omega_{o}}$ is Hurwitz, we have $p_0\neq0$ and hence \dref{202010031638} follows from \dref{20209241100} and \dref{20209242002}.

(iv).
Since $G$ is diagonalizable, for any $v\in {\R^{m+1}}$, there exists a sequence  $v_0,v_1,v_2,\cdots,v_m$ such that $v=\sum_{j=0}^{m}v_j\varepsilon_j$, where
 $G\varepsilon_j=\lambda_j\varepsilon_j$ with $\lambda_j\in\sigma (G)$, $j=0,1,2,\cdots,m$.
By     \dref{2020861532regualtor},  \dref{20209231109} and \dref{20202923823}, we have
 \begin{equation}\label{20209231442}
S    \varepsilon_j=(A+K_{\omega_o}C -\lambda_j)^{-1}BQ\varepsilon_j= -\frac{ \mathcal{K}_{\lambda j}}{\rho_{A} (\lambda_j,\omega_o)}Q\varepsilon_j
  \end{equation}
and
\begin{equation}\label{20209231444}
P_{\omega_{o}}\varepsilon_j=CS    \varepsilon_j=C(A+K_{\omega_o}C -\lambda_j)^{-1}BQ\varepsilon_j= \frac{ Q\varepsilon_j }{ \rho_A (\lambda_j,\omega_o) } .
  \end{equation}
Consequently,
\begin{equation}\label{20209231446}
Q\varepsilon_j=     P_{\omega_{o}} \varepsilon_j \rho_A(\lambda_j,\omega_o) ,\ \ j=0,1,\cdots,m,
  \end{equation}
which, together with \dref{20209231442}, gives
 \begin{equation}\label{20209231449}
Sv=\sum_{j=0}^{m}v_jS\varepsilon_j=  - \sum_{j=0}^{m}v_j     P_{\omega_{o}}\varepsilon_j \mathcal{K}_{\lambda j}   .
  \end{equation}
Combining  \dref{202029231453K}, \dref{2020927837} and \dref{20209231449}, we obtain \dref{20201141033} easily.

%\begin{equation}\label{20201141048}
%Qv=\sum_{j=0}^{m}v_jQ\varepsilon_j=    \sum_{j=0}^{m}v_j
% P_{\omega_{o}} \varepsilon_j \rho_A(\lambda_j,\omega_o)
%=  -P_{\omega_{o}}\sum_{j=0}^{m}v_j\varepsilon_j \mathcal{K}_{\lambda j}.
%  \end{equation}

Taking \dref{20209241057}  and \dref{20209923729} into account, a simple computation shows that
\begin{equation}\label{20201141456}
 [s-(G+EP_{\omega_o})]^{-1}B_d=\frac{1}{\rho_G(s,\omega_o)}
 \begin{bmatrix}
  p_ms^{m-1}\omega_o+p_{m-1}s^{m-2}\omega_o^2+\cdots+p_1\omega_o^m-s^m \\
    -p_0\omega_o^{m+1} \\
    -p_0\omega_o^{m+1}s\\
    \vdots\\
 -p_0\omega_o^{m+1} s^{m-1}
\end{bmatrix}
\end{equation}
for any $ s\in \mathbb{C}_{+}$.
By \dref{20209242002} and the fact $p_0\neq0$,  there exists a positive constant $M_1$,   independent of $\omega_o$ and $s$,  such that
\begin{equation}\label{20201141516}
  \|G[s-(G+EP_{\omega_o})]^{-1}B_d\|_{\R^{m+1}}\leq M_1\|G\| ,\ \ \forall \ s\in \mathbb{C}_{+}.
  \end{equation}
Consequently, it follows from \dref{20201141033},  \dref{2020925920}  and \dref{20201141516} that
\begin{equation} \label{202011041521}
  |C[s-(A+K_{\omega_o}C  )]^{-1} S G[s-(G+EP_{\omega_o})]^{-1}B_d|\leq C_SM_1\|G\| {C_A}{\omega_o^m}.
\end{equation}
 Combing \dref{2020927837}, \dref{20209242002} and \dref{20201141456},
there exists a positive constant $M_2$,   independent of $\omega_o$ and $s$,  such that
\begin{equation}\label{20201141505}
  |P_{\omega_o}[s-(G+EP_{\omega_o})]^{-1}B_d|\leq M_2\omega^m_{o},\ \ \forall \ s\in \mathbb{C}_{+}.
  \end{equation}

For any $v\in \R^{m+1}$, it follows from \dref{2020861532regualtor}  and \dref{20202923823} that
\begin{equation}\label{20201141134}
Qv =\frac{P_{\omega_o}v- C(A+K_{\omega_o}C)^{-1}SGv}{C(A+K_{\omega_o}C)^{-1}B} =
-\rho_{A}(\lambda,\omega_o) [P_{\omega_o}v- C(A+K_{\omega_o}C)^{-1}SGv].
  \end{equation}
As a result, there exists an $M_3$,   independent of $\omega_o$,  such that
 \begin{equation}\label{20201141530}
|Qv| =M_3
 \omega_o^n [|P_{\omega_o}v|+|C(A+K_{\omega_o}C)^{-1}SGv|],\ \ \forall\ v\in \R^{m+1},
  \end{equation}
  which, together with  \dref{202011041521} and \dref{20201141505}, leads to
  \dref{20201141436} easily.
   \end{proof}

\begin{theorem}\label{Th20209182115}
Under  Assumption \ref{Assum1},
 the EDO    \dref{2020919191147} of \dref{2020151017}  is well-posed:  For any    $d\in  \mathbb{S}  $,
  $( \hat{x}(0),\hat{v} (0)) \in
  {\R^{n}}\times {\R^{m+1}}$   and   $u\in L^2_{\rm loc}[0,\infty) $,
     there exists a positive constant $M_1$, independent of $\omega_o$, such that
    \begin{equation} \label{2020261018**}
 \disp\lim_{t\to\infty}\| (x (t)-\hat{x} (t) ,v(t)-\hat{v} (t))\|_{{\R^{n}}\times {\R^{m+1}} }
  \leq \frac{M_1\|  e \|_{\mathbb{S}} }{\omega_o} ,
      \end{equation}
        where  $e=(I-\mathbb{P}_G)  {d}$  and $\mathbb{P}_G$ is given by \dref{20201211811}.
   In particular, there exists a positive constant $M_2$,  independent of $\omega_o$, such that
      \begin{equation} \label{2020623959}
 \disp  \lim_{t\to\infty} |d(t)- Q\hat{v} (t) |\leq \frac{M_2\|e\|_\mathbb{S}}{ \omega_o}.
      \end{equation}

 \end{theorem}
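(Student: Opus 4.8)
The plan is to reduce the assertion to a steady-state estimate for a block-triangular Hurwitz error system forced by $\dot e$, and then to convert uniform resolvent bounds on $\mathbb{C}_+$ into a time-domain bound in the spirit of Lemma \ref{Lm20209172011}. Well-posedness is immediate, since \dref{2020919191147} is a linear ODE driven by $y$ and by $u\in L^2_{\rm loc}[0,\infty)$. For the estimate I first invoke Lemma \ref{lm20201211809} to rewrite \dref{2020151017} in the cascade form with state $(x,v)$, $d=Qv$, and forcing $\frac{B_d}{QB_d}\dot e$ in the $v$-equation. Setting $\tilde x=x-\hat x$, $\tilde v=v-\hat v$ and subtracting \dref{2020919191147} (using $y=Cx$) gives $\dot{\tilde x}=[A+(K_{\omega_o}+SE)C]\tilde x+BQ\tilde v$ and $\dot{\tilde v}=-EC\tilde x+G\tilde v+\frac{B_d}{QB_d}\dot e$. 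Exactly as in Section \ref{KnownDynamics}, the change of variables $\xi=\tilde x+S\tilde v,\ \eta=\tilde v$ together with the regulator equations \dref{2020861532regualtor} decouples the drift into the block-triangular Hurwitz matrix $\begin{bmatrix}A+K_{\omega_o}C&0\\-EC&G+EP_{\omega_o}\end{bmatrix}$. The key simplification is $S\frac{B_d}{QB_d}=(A+K_{\omega_o}C)^{-1}B$, which follows from \dref{2020923918}; hence $\dot\xi=(A+K_{\omega_o}C)\xi+(A+K_{\omega_o}C)^{-1}B\,\dot e$ is forced through a vector of norm $O(1/\omega_o)$ by \dref{20209231123}.

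Next I would pass to the Laplace domain (both diagonal blocks being Hurwitz, all transfer functions are analytic on $\mathbb{C}_+$) and estimate, uniformly in $s\in\mathbb{C}_+$, the maps from $\dot e$ to $\tilde v$, to $Q\tilde v$, and to $\tilde x$. Writing $R_A=[s-(A+K_{\omega_o}C)]^{-1}$ and $R_G=[s-(G+EP_{\omega_o})]^{-1}$, the map to $\tilde v=\eta$ is $-R_GECR_A(A+K_{\omega_o}C)^{-1}B+R_G\frac{B_d}{QB_d}$; the first summand is $O(1/\omega_o)$ by \dref{2020925920}, \dref{202010031638}, \dref{20209231123}, and the second is controlled by combining $1/|QB_d|$ from \dref{20209241053} with the explicit formula \dref{20201141456}. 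For the disturbance error I would use $d-Q\hat v=Q\tilde v$ and bound $Q\tilde v$ \emph{directly} rather than through $\|Q\|\|\tilde v\|$ (which diverges, since $\|Q\|\sim\omega_o^{m+n}$), by invoking the dedicated estimate $|QR_GB_d|\le C_Q\omega_o^{m+n}$ from \dref{20201141436}; dividing by $|QB_d|=|k_1p_0|\omega_o^{n+m+1}$ produces the $O(1/\omega_o)$ bound \dref{2020623959}.

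The hard part is the state estimate $\tilde x=\xi-S\tilde v$. Since $\|S\|\sim\omega_o^{m+n}$ by \dref{20201141033}, the naive bound $\|S\tilde v\|\le\|S\|\|\tilde v\|$ blows up, so the offending term $SR_G\frac{B_d}{QB_d}$ must be handled by an exact cancellation rather than a product bound. Writing $w=R_GB_d$ and using the Sylvester equation $(A+K_{\omega_o}C)S=SG+BQ$ together with $Gw=sw-EP_{\omega_o}w-B_d$, I would derive the identity $Sw=R_A(SEP_{\omega_o}w+SB_d-BQw)$. Substituting $SB_d=(A+K_{\omega_o}C)^{-1}BQB_d$ makes the leading $R_A(A+K_{\omega_o}C)^{-1}B$ contribution cancel against the $\xi$-part of $\tilde x$, leaving only the terms $-SR_GECR_A(A+K_{\omega_o}C)^{-1}B$, $-\frac{P_{\omega_o}w}{QB_d}R_ASE$, and $\frac{Qw}{QB_d}R_AB$, each of which I expect to be $O(1/\omega_o)$ after applying \dref{20201141505}, \dref{20201141436} and \dref{20209241053}.

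Finally I would convert these uniform $\mathbb{C}_+$ bounds into the steady-state estimates \dref{2020261018**} and \dref{2020623959} exactly as in the proof of Lemma \ref{Lm20209172011}: the uniform vanishing of the resolvent forces the corresponding impulse responses $h$ to satisfy $\|h(t)\|\le Le^{-c\omega_o t}$ with $L$ of order $1/\omega_o$, and convolving with $\dot e$ gives the bound $\frac{M}{\omega_o}\|\dot e\|_\infty=\frac{M}{\omega_o}\|e\|_{\mathbb S}$ by \dref{2020251843}. The main technical obstacle will be precisely the cancellation controlling $SR_G\frac{B_d}{QB_d}$ and the attendant bound on $R_ASE$, where the large norm of $S$ must be absorbed by the decay of the resolvents; this is the step where the carefully scaled estimates of Lemma \ref{Lm2020923809} are indispensable.
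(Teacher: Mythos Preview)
Your overall strategy --- rewriting the plant via Lemma~\ref{lm20201211809}, forming the error system for $(\tilde x,\tilde v)$, block-triangularizing with $\mathbb{P}=\begin{bmatrix}I_n&S\\0&I_{m+1}\end{bmatrix}$, bounding the resolvent $(s-\A)^{-1}\mathcal{B}$ uniformly on $\mathbb{C}_+$, and then converting to a time-domain $L/\omega_o$ bound exactly as in Lemma~\ref{Lm20209172011} together with $\|e\|_{\mathbb S}=\|\dot e\|_\infty$ --- matches the paper's proof essentially step for step, including the separate resolvent estimate for $Q\tilde v$ via \dref{20201141436} and \dref{20209241053}.

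The one place you diverge is the $\tilde x$-bound, where you anticipate that the product estimate $\|S\|\cdot\|\tilde v\text{-resolvent}\|$ blows up and therefore derive an extra Sylvester-type identity for $SR_GB_d$. The paper avoids this detour. Writing $J(s)=R_GECR_ASB_d-R_GB_d$ so that the $\tilde v$-resolvent is $-J(s)/QB_d$, the $\tilde x$-resolvent equals $\frac{1}{QB_d}\bigl(R_ASB_d+SJ(s)\bigr)$; the first summand is $R_A(A+K_{\omega_o}C)^{-1}B=O(1/\omega_o^2)$ by \dref{20209231123}, and for the second the paper simply asserts $\|J(s)\|_{\R^{m+1}}\le C_J$ uniformly (from \dref{20209242002} and \dref{20201141456}) and applies the crude product bound
\[
\frac{\|SJ(s)\|_{\R^n}}{|QB_d|}\ \le\ \frac{C_S\,\omega_o^{m+n}\,C_J}{|k_1p_0|\,\omega_o^{n+m+1}}\ =\ O\!\left(\frac{1}{\omega_o}\right)
\]
using \dref{20201141033} and \dref{20209241053}. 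In other words, the scalings in Lemma~\ref{Lm2020923809} are arranged so that $|QB_d|$ outpaces $\|S\|$ by exactly one power of $\omega_o$, and no additional cancellation identity is required. Your route would presumably also close, but it manufactures the extra task of controlling $R_ASE$, which the paper's argument sidesteps entirely.
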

\begin{proof}
By Lemma \ref{lm20201211809}, system \dref{2020151017}
 can be written dynamically  as
 \begin{equation}\label{20201131610}
  \left\{\begin{array}{l}
  \disp \dot{x} (t) = A  x (t)+B [Qv (t)+ u(t)] ,\
\disp y(t)=Cx (t),\crr
\disp \dot{ v} (t) =  Gv  (t)+\frac{B_{d}}{ QB_{d} }  \dot{e}(t),
\end{array}\right.
\end{equation}
  where
$e=(I-\mathbb{P}_G)  {d}$  and $\mathbb{P}_G$ is given by \dref{20201211811}.
 Let
\begin{equation} \label{20209191511}
\tilde{x} (t)=x (t)-\hat{x} (t) \ \ \mbox{and}\ \ \tilde{v}(t)=v(t)-\hat{v} (t).
\end{equation}
Then, the error is governed by
\begin{equation} \label{20206231124}
 \left\{\begin{array}{l}
\disp \dot{\tilde{x}} (t) = [A +( K_{\omega_o} +SE)C ]\tilde{x} (t)+B Q \tilde{v} (t),\crr
\disp  \dot{\tilde{v}} (t) =  G \tilde{v} (t) -EC\tilde{x}(t)+  \frac{B_{d}}{QB_{d}}\dot{e}(t).
\end{array}\right.
\end{equation}
System \dref{20206231124} can be written as
\begin{equation} \label{20206242019}
  \frac{d}{dt}(\tilde{x}  (t),  \tilde{v}  (t) )^{\top}=\A(\tilde{x}  (t),  \tilde{v}  (t) )^{\top}+
  \mathcal{B} {\dot{e}(t)} ,
\end{equation}
where
\begin{equation}\label{20209242021}
 \A=\begin{bmatrix}
     A +( K_{\omega_o} +SE)C& BQ\\
    -EC&  G
    \end{bmatrix}\ \ \mbox{and}\ \ \mathcal{B}=
   \frac{1}{QB_{d}} \begin{bmatrix}
   \disp   0\crr
    \disp B_{d}
    \end{bmatrix} .
  \end{equation}
 In terms of the solution $S$ of the Sylvester equation of \dref{2020861532regualtor}, we introduce the transformation
 \begin{equation} \label{2020142058}
\begin{bmatrix}
\check{x}(t)\\
\check{v}(t)
\end{bmatrix}=
\mathbb{P}\begin{bmatrix}
\tilde{x}(t)  \\
\tilde{v}(t)
\end{bmatrix}, \ \ \mathbb{P}=\begin{bmatrix}
I_{ {n}} &S\\
0&I_{ {m+1}}
\end{bmatrix}.
 \end{equation}
Thanks to the choice of $ K_{\omega_o} ,E$ and $S$, system     \dref{20206231124}  can be converted into
   the following system:
\begin{equation} \label{20191161439}
 \left\{\begin{array}{l}
\disp \dot{\check{x}} (t) =( A+ K_{\omega_o} C)  \check{x}(t) + \frac{SB_{d}}{QB_{d}}\dot{e}(t),\crr
\disp  \dot{\check{v}} (t) =   (G+EP_{\omega_{o}} )  \check{v}(t)-EC\check{x}(t)+\frac{B_{d}}{QB_{d}}\dot{e}(t).
\end{array}\right.
\end{equation}
We denote the system matrix and the input matrix of \dref{20191161439}   by
  \begin{equation}\label{20209232209}
 \A_S=\begin{bmatrix}
      A+ K_{\omega_o} C& 0\\
    -EC&  G+EP_{\omega_{o}}
    \end{bmatrix}\ \ \mbox{and}\ \ \mathcal{B}_S=
  \frac{1}{QB_{d}}\begin{bmatrix}
   \disp    SB_{d}\crr
    \disp B_{d}
    \end{bmatrix} .
  \end{equation}
  By a simple computation, it follows that
  \begin{equation}\label{20209242027}
 \mathbb{P}\A \mathbb{P}^{-1}=\A_S\ \ \mbox{and}\ \ \mathcal{B}_S=\mathbb{P}\mathcal{B},
  \end{equation}
  where the Sylvester equation in \dref{2020861532regualtor}  has been used.
 For any $s\in  \mathbb{C}_{+}$, a simple computation shows that
\begin{equation} \label{2020922838}
\begin{array}{l}
\disp \mathbb{P}^{-1}(s-\A_S)^{-1}\mathcal{B}_S
\disp = \frac{1}{QB_{d}}
\begin{bmatrix}
    \disp   [s-(A+K_{\omega_o}C  )]^{-1} SB_{d}+S J(s)\crr
    \disp -J(s)
        \end{bmatrix},
 \end{array}
\end{equation}
where
 \begin{equation}\label{202010031547}
J(s)=[s-(G+EP_{\omega_{o}} )]^{-1}EC [s-(A+K_{\omega_o}C  )]^{-1} SB_{d}-
[s-(G+EP_{\omega_{o}} )]^{-1} B_{d}  .
\end{equation}

Noting the $GB_{d}=0$,   it follows from  \dref{2020861532regualtor} that
   \begin{equation}\label{20209242033}
   SB_{d}  =(A+K_{\omega_o}C  )^{-1}BQB_{d},
 \end{equation}
 and  hence
 % \begin{equation}\label{20209251109}
%  \frac{ C[s-(A+K_{\omega_o}C  )]^{-1} SB_{d} }{QB_{d}}  = C[s-(A+K_{\omega_o}C  )]^{-1}
% \end{equation}
%and
\begin{equation}\label{20209242034}
\begin{array}{rl}
 \disp  \frac{ [s-(A+K_{\omega_o}C  )]^{-1} SB_{d} }{QB_{d}} &\disp =[s-(A+K_{\omega_o}C  )]^{-1}   (A+K_{\omega_o}C  )^{-1}B \crr
  &\disp =   (A+K_{\omega_o}C  )^{-1}[s-(A+K_{\omega_o}C  )]^{-1}B.
  \end{array}
 \end{equation}
By  Lemma \ref{Lm2020923809}, there exist  two  positive constants  $ C_K $  and $C_S$ such that
\begin{equation}\label{20209242039}
  \left\|\frac{ [s-(A+K_{\omega_o}C  )]^{-1} SB_{d} }{QB_{d}}\right\|_{\R^{n}}  \leq \frac{C_K}{\omega_o^2}
 \end{equation}
 and
 \begin{equation}\label{20201031611}
\frac{\|SJ(s)\|_{\R^n}}{|QB_d|}\leq
  \frac{C_S\|J(s)\|_{\R^{m+1}}  \omega_{o}^{m+n}}{    \omega_{o}^{n+m+1}}=
  \frac{C_S\|J(s)\|_{\R^{m+1}}  }{    \omega_{o} },\ \ \forall\ s\in \mathbb{ C}_{+}.
\end{equation}
 By   \dref{20209242002}, \dref{20201141456}  and the fact $p_0\neq0$, there
 exists  a     $ C_J>0 $    such that
  \begin{equation}\label{20209251111}
  \|J(s)\|_{\R^{m+1}} <C_J,\ \ \forall\  s\in \mathbb{ C}_{+}.
 \end{equation}
 Combing \dref{20201031611},  \dref{20209251111}, \dref{202010031547},
 \dref{20209242039},  \dref{2020922838} and \dref{20209241053},    we arrive at
\begin{equation} \label{20209242046}
\|\mathbb{P}^{-1}(s-\A_S)^{-1}\mathcal{B}_S\|_{ \R^{n} }\leq \frac{C_\A}{\omega_o },\  \ \forall\ s\in \mathbb{C}_{+},
\end{equation}
 where $C_\A$ is a positive constant independent of $\omega_o$ and $s$.
 Furthermore, it follows from  \dref{20209242027} that
 \begin{equation} \label{20209242125}
\| (s-\A )^{-1}\mathcal{B} \|_{\R^{n}}=\|\mathbb{P}^{-1}(s-\A_S)^{-1}\mathcal{B}_S\|_{ \R^{n} }\leq \frac{C_\A}{\omega_o },\  \ \forall\ s \in \mathbb{C}_{+}.
\end{equation}
Since both $A+ K_{\omega_o} C$
and $ G+B_{d}P_{\omega_{o}}$ are Hurwitz and satisfy \dref{2020923904} and \dref{2020923905}, respectively,
the operator $\A$ is also Hurwitz.
By virtue of  Lemma \ref{Lm20209172011}, there exists an $L_{\mathcal{B}}>0$, independent of $\omega_o$,  such that
 \begin{equation} \label{20209242129}
 \|e^{\A t}\mathcal{B}\|_{{\R^{n}}\times {\R^{m+1}}}\leq L_{\mathcal{B}}e^{-\omega_o t},\ \ t\geq0.
 \end{equation}
We solve \dref{20206242019} to obtain
 \begin{equation} \label{20206242130}
 \begin{array}{l}
 \disp \|(\tilde{x}  (t),  \tilde{v}  (t) ) \|_{{\R^{n}}\times {\R^{m+1}}}=\left\|e^{\A t}(\tilde{x}  (0),  \tilde{v}  (0) )^{\top}+
  \int_0^te^{\A (t-s)}\mathcal{B} \dot{e}(s)  ds\right\|_{{\R^{n}}\times {\R^{m+1}}}\crr
  \hspace{1cm}\disp \leq L_{\A}e^{-\omega_o t}\left\|(\tilde{x}  (0),  \tilde{v}  (0) ) \right\|_{{\R^{n}}\times {\R^{m+1}}}+
  L_{\mathcal{B}}\int_0^te^{-\omega_o (t-s)} \frac{ \|\dot{e}\|_{\infty}}{\omega_o }
    ds \crr
     \hspace{1cm}\disp \leq L_{\A}e^{-\omega_o t}\left\|(\tilde{x}  (0),  \tilde{v}  (0) ) \right\|_{{\R^{n}}\times {\R^{m+1}}}+
    \frac{ \| {e}\|_{\mathbb{S}}L_{\mathcal{B}}}{\omega_o  },
       \end{array}
\end{equation}
where $L_{\A}$ is a positive constant. This leads to \dref{2020261018**} from  \dref{20209191511}.

Now, we prove \dref{2020623959}. For any $s\in \mathbb{C}_{+}$,
it follows from \dref{2020922838} that
\begin{equation} \label{20201031553}
\disp \mathcal{Q}\mathbb{P}^{-1}(s-\A_S)^{-1}\mathcal{B}_S
\disp = -\frac{QJ(s)}{QB_{d}}
 ,\ \ \mathcal{Q}=(0,Q).
\end{equation}
By \dref{202010031638},  \dref{2020925920} and \dref{20201141033},  there exists an $M_4>0$ such that
\begin{equation} \label{20201141620}
\|[s-(G+EP_{\omega_{o}} )]^{-1}EC [s-(A+K_{\omega_o}C  )]^{-1} SB_{d}\|_{\R^{m+1}}\leq \frac{M_4}{\omega_o},\ \ \forall\ s\in \mathbb{C}_{+}.
\end{equation}
By \dref{20201141530}, \dref{2020927837},  \dref{2020925920} and \dref{20201141033},  there exists an $M_5>0$ such that
 \begin{equation}\label{20201141628}
|Qv| =M_5
 \omega_o^{n+m+1}\|v\|_{\R^{ m+1}}  ,\ \ \forall\ v\in \R^{m+1}.
  \end{equation}
We combine \dref{20201141620} and \dref{20201141628} to get
\begin{equation} \label{20201141634}
 \left|Q[s-(G+EP_{\omega_{o}} )]^{-1}EC [s-(A+K_{\omega_o}C  )]^{-1} SB_{d}\right | \leq M_4M_5\omega_o^{n+m},\ \ \forall\ s\in \mathbb{C}_{+},
\end{equation}
which, together with \dref{20201141436},  \dref{202010031547},   \dref{20209241053} and \dref{20201031553}, leads to
\begin{equation} \label{2020925843}
\begin{array}{l}
 \disp  \left|\mathcal{Q}\mathbb{P}^{-1}(s-\A_S)^{-1}\mathcal{B}_S\right|=   \left|\frac{QJ(s)}{|QB_{d}|}\right|
    \disp\leq\frac{ M_6}{\omega_o },
        \end{array}
\end{equation}
where $M_6$  is a positive constant  independent of $\omega_o$ and $s$.
 Owing to \dref{20209242027},
 we arrive at
\begin{equation} \label{20209251146}
\begin{array}{l}
 \disp|\mathcal{Q} (s-\A )^{-1}\mathcal{B} |  \leq \frac{ M_6}{\omega_o}, \ \ \forall\ s\in \mathbb{C}_{+}.
        \end{array}
\end{equation}
Similarly to \dref{20209232121}, we
apply  the inverse Laplace transform on \dref{20209251146} to obtain
\begin{equation}\label{20209251949}
\begin{array}{ll}
\disp \lim_{\omega_o\to\infty}|\mathcal{Q}e^{\A t}\mathcal{B}|
%=\lim_{\omega\to\infty}
% \mathcal{L}^{-1}[\varepsilon_i^{\top}(s-A_{\omega})^{-1}B](t)\crr
  \disp &=\disp
 \frac{1}{2\pi i}\lim_{\omega_o\to\infty}\lim_{T\to\infty}\int_{\gamma-iT}^{\gamma+iT}
 e^{st}\mathcal{Q}(s-\A )^{-1}\mathcal{B}ds\crr
   \hspace{1cm}&=\disp
 \frac{1}{2\pi i}\lim_{T\to\infty}\int_{\gamma-iT}^{\gamma+iT}
 e^{st}\lim_{\omega_o\to\infty} \mathcal{Q}(s-\A)^{-1} \mathcal{B}ds\crr
  &=\disp
 \frac{1}{2\pi i}\lim_{T\to\infty}\int_{\gamma-iT}^{\gamma+iT}
 e^{st}0ds=0
  ,\ \  t\geq0,
  \end{array}
\end{equation}
where $\gamma$ is a real number so that the contour path of the integration is in the region of convergence of $\mathcal{Q}e^{\A t}\mathcal{B}$.
Since $\A$ is Hurwitz with $\Lambda_{\max}(\A)=-\omega_o$, \dref{20209251949} implies that
\begin{equation}\label{20209252020}
|\mathcal{Q}e^{\A t}\mathcal{B}|\leq L_Qe^{-\omega_ot},\ \ t\geq0,
\end{equation}
where $L_Q$ is a positive constant   which is independent of $\omega_o$.
As a result, the solution of system \dref{20206231124} satisfies
 \begin{equation} \label{2020923929}
 \begin{array}{ll}
\disp\lim_{t\to\infty}|Q\tilde{v}(t)|&\disp =\lim_{t\to\infty}\left|\mathcal{Q}e^{\A t}(\tilde{x}(0),\tilde{v}(0))^{\top}\right|+\lim_{t\to\infty}
\left|\mathcal{Q}\int_0^te^{\A s}\mathcal{B}\dot{e}(t-s)ds \right|\crr
&\disp \leq  \lim_{t\to\infty}
\left|\int_0^tL_Qe^{-\omega_os}  |\dot{e}(t-s)|ds \right|\leq \frac{L_Q\|\dot{e}\|_{\infty}}{\omega_o},
\end{array}
\end{equation}
which, together with \dref{2020251843}, \dref{20209191511} and \dref{20201212010},   leads to \dref{2020623959}.
 \end{proof}

\begin{remark}\label{Re20209252208}
Suppose that  $\{\varepsilon_j\}_{j=0}^{m}$ is a sequence of eigenvectors
corresponding to the eigenvalues $\lambda_j$ of $G$, which  forms a basis for ${\R^{m+1}}$.
Then, for
any $v=[v_0\  v_1\  \cdots\  v_{m}]\in {\R^{m+1}}$,
\begin{equation} \label{20209261058}
v=[\varepsilon_0\  \varepsilon_1\  \cdots\  \varepsilon_{m}][\varepsilon_0 \  \varepsilon_1\  \cdots\  \varepsilon_{m}]^{-1}\begin{bmatrix}
v_0\\v_1\\ \vdots\\v_m\end{bmatrix}.
\end{equation}
By \dref{20209231444},  we obtain the analytic  expression of $Q$ as
\begin{equation} \label{20209261105}
Qv \disp =[Q\varepsilon_0\  Q\varepsilon_1\  \cdots\  Q\varepsilon_{m}][\varepsilon_0\  \varepsilon_1\  \cdots\  \varepsilon_{m}]^{-1}
\begin{bmatrix}
  v_{0} \\
  v_{1} \\
  \vdots \\
  v_{ m   }
\end{bmatrix}
\end{equation}
with
\begin{equation} \label{202010031648}
Q\varepsilon_j=\frac{  P_{\omega_{o}}  \varepsilon_j  }{{C(A+K_{\omega_o}C -\lambda_j)^{-1}B}},\ \
j=0,1,2,\cdots,m.
\end{equation}
Therefore, we can obtain the
 parameter   of the observer \dref{2020919191147} explicitly via
 \dref{202010031648} and  \dref{20209231449}.
% \begin{equation}\label{20209261114}
%SE  = (A+K_{\omega_o}C  )^{-1}BQB_{d}.
% \end{equation}

 %
% In terms of \dref{20209261105} and \dref{20209261114}, the  parameter choice  of the observer \dref{2020919191147}   can be achieved easily.

%\begin{itemize}
%
%\item Choose $G$ in terms of the prior  information  about the disturbance;
%
%\item   Choose $K\in \mathcal{L}(\R,{\R^{n}})$ and $P\in \mathcal{L}({\R^{m+1}},\R)$
% such that $A +K C$  and $G+B_{d}P$ are  Hurwitz;
%
%
%
% \item    Choose $W_{\R^{n}}$ and $W_{\R^{m+1}}$ by \dref{2020923751} with ${\omega_o}>0$.
% Compute $Q$ and$SB_{d}$ by \dref{20209261111} and \dref{20209261114}, respectively,
%
% \end{itemize}
\end{remark}

\begin{remark}\label{Re20201222110}

    By \dref{2020261018**},  the accuracy of the   observer
     depends both on the optimal   approximation  of $d$ on $\Omega(G)$   and the decay rate $\omega_o $.
 From  this perspective, we need to choose
 $G$ such that $\Omega(G)$ is as large as possible so  that  the approximation error can be
 as small as possible. The choice of $G$ depends on the prior information about the disturbance.
 The more the prior information we have, the higher the steady-state error will be.
 In particular,  if we have known all the dynamics of the disturbance, i.e.,
 we have known $d\in \Omega(G)$ with known $G$,
   the    steady-state error of the   observer  \dref{2020919191147} becomes zero.
 Another way to improve the observer accuracy is to increase the gain $\omega_o$.   However,   the large $\omega_o   $ may lead to
peaking phenomenon  in    transient response and hence   it  may not be  feasible to  improve the accuracy
by increasing  $\omega_o $ only.
% By the optimal  approximation $\mathbb{P}_G$,  the disturbance
% is decomposed  into two parts: The dynamical part  $\mathbb{P}_Gd$ and the
% remaining part $e=(I-\mathbb{P}_G)d$.
% The
%  dynamical part is canceled  completely by its  dynamic information  and
% the  remaining part is absorbed by the high-gain $\omega_o$.
% The more the prior  dynamic information  we have,  the larger the
% the dynamical part  $\mathbb{P}_Gd$ is, or equivalently, the smaller the unknown part $e$ is.
% In this way,
Hence, one of the  contributions  of the present work is  giving  a new way to improve the
 accuracy of the observer without
 increasing the high-gain $\omega_o$.

\begin{remark}\label{Re20202131041}
When the  error of approximation $\mathbb{P}_G$ is zero,  the system matrix
of the  error system
\dref{20206231124} is similar to the matrix $\A_S$ in \dref{20209232209}. Owing to the block-trigonal structure of $\A_S$, the poles of the  error system
\dref{20206231124} can be  assigned arbitrarily by adjusting $K_{\omega_o}$ and $P_{\omega_o}$. This means that  the prior information about the
control plant and disturbance can be fully used.
 Moreover, the control plant considered in this paper is wider  than
  % it is easy to see that not all the control plant can be converted into
 the canonical form of
\cite{GAO2003} or \cite{KhalilTAC2008}.  Although this canonical form  can be
extended  technically by using  high-gain
  \cite{GUOZhaoSCL2011}, there still
  exists a waste of system prior  information.
  In fact,   only  some boundedness of the elements of
system matrix $A$ was used rather than the matrix itself. As a result of this,
the   poles of the observer error  system   without the external disturbance
  cannot  be  assigned arbitrarily.

% Different from these results, the proposed
%approach can work for a more general system and the more importantly,  all the prior information, not only about the disturbance but also about the control plant itself,
%can be used as much as possible.

\end{remark}

 % Theoretically,  there are at least, two ways to improve the accuracy of the observer:
%  Increase the gain constant $\omega_o$ and enlarge the extend the disturbance dynamics $\Omega(G)$.
%  The latter depends on the prior  information  we have.
%    However,   the large $\omega_o   $ may lead to
%peaking phenomenon  in    transient response and hence   it  may not be  feasible to  improve the accuracy
%by increasing  $\omega_o $.
\end{remark}

\section{Extended dynamic observer with constant dynamics  }\label{ESO}
 In this section, we consider  the EDO \dref{2020919191147} with constant dynamics $G=0$.
It  adapts to the   worst  situations where  we  have  nothing prior  information  about the disturbance dynamics excepted
some boundedness.  For simplicity,    we only consider,  without loss of the generality,  the
following second order control plant, i.e.,
\begin{equation}\label{2020918849n=2}
		A =\begin{bmatrix}
0 & a_1\\
1 & a_{2}\\
		\end{bmatrix},  \ \ B=
\begin{bmatrix}
 1\\0	\end{bmatrix}\ \  \mbox{and}\ \ C=[0\   1] ,
		\end{equation}
where $a_j  \in\R$, $j=1,2 $.
   If we choose $G=0$, then
 it follows from
  \dref{201912311058} that
\begin{equation}\label{20191232039}
\begin{array}{l}
\disp \Omega   (G)  = \Big{\{}  v(t)  \ \Big{|}\
 \dot{v}(t)=0,\ v(0)\in \R,\
t\in   \R \Big{\}}=
\Big{\{} v(t)\equiv v(0)  \ \Big{|}\
   v(0)\in \R,\
t\in   \R \Big{\}},
\end{array}
\end{equation}
% the approximation error $e$ and hence the steady-state error of   observer    \dref{20201121831} depends on the accuracy of frequency estimation.
 which implies that the optimal approximation $\mathbb{P}_Gd$  of $d$ on $\Omega   (G)$    satisfies
$ \|(I-\mathbb{P}_G)d \|_{\mathbb{S}} = \|\dot{d}(t)\|_{\infty}$.
  We choose
  \begin{equation} \label{20209271116}
  B_d=E=1,\ \ K_{\omega_0}=[ \alpha_1 \omega_o^2-a_1  \ \    \alpha_2  \omega_o-a_2 ]^{\top} \ \
  \mbox{and}\ \  P_{\omega_o}= \alpha_3\omega_o
  \end{equation}
 such that $\alpha_3<0$ and the following matrix is Hurwitz:
  \begin{equation} \label{20209271122}
  \begin{bmatrix}
 0&\alpha_1\\
1& \alpha_2
  \end{bmatrix} .
  \end{equation}
  We solve the   equations  \dref{2020861532regualtor} to get
  \begin{equation} \label{20201161056}
S =(A   +K_{\omega_o}C)^{-1}B= [ -\disp  \alpha_2 \alpha_3\omega_{o}^2  \quad \disp  \alpha_3\omega_o
				]^{\top}  \mbox{ and } Q=\alpha_1\alpha_3\omega_o^3.
\end{equation}
In view of \dref{2020919191147}, the observer of system \dref{2020151017} with setting \dref{2020918849n=2} is found to be
\begin{equation} \label{20209272056}
  \left\{\begin{array}{l}\disp
\disp \dot{ \hat{x}}_1 (t) =a_1\hat{x}_2(t)+\alpha_1\alpha_3\omega_o^3\hat{v}(t)
-[(\alpha_1-\alpha_2\alpha_3)\omega_o^2-a_1][y(t)- \hat{x}_2(t)]+u(t),\crr
 \disp \dot{ \hat{x}}_2 (t) =\hat{x}_1(t)+a_2\hat{x}_2(t)
-[(\alpha_2+\alpha_3)\omega_o -a_2][y(t)- \hat{x}_2(t)] ,\crr
\disp \dot{ \hat{v}} (t) =[y(t)- \hat{x}_2(t)]
 .
\end{array}\right.
 \end{equation}
%Since $P_{\omega_o}= \alpha_3\omega_o\neq0$, it follows from Remark \ref{Re20209271007} that
%  \begin{equation} \label{2020921751}
%  \lim_{t\to\infty}\|(x_1(t)-\hat{x}_1(t),x_2(t)-\hat{x}_2(t))\|_{\R^2}\leq \frac{M}{\omega_o^{4}},
%\end{equation}
%where $M$ is a positive constant that is independent of $\omega_o$. This is much  better than the
%method of  ADRC in \cite{GAO2003}  and the high-gain observer in \cite{KhalilTAC2008}.

 In order to make a comparison to ESO \cite{GAO2003}  and the high-gain observer \cite{KhalilTAC2008}, we consider system
 \begin{equation} \label{20209272135}
 \left\{\begin{array}{l}
 \disp \dot{z}(t)=A^{\top}z(t)+C^{\top}[d(t)+u(t)],\crr
 y(t)=B^{\top}z(t),
 \end{array}\right.\ \ z(t)=[z_1(t)\quad z_2(t)]^{\top},
\end{equation}
 where $A,B $ and $C$ are still given by \dref{2020918849n=2}.
 By virtue of the observer \dref{20209272056} and the   invertible transformation
 \begin{equation} \label{20209141545}
 UAU^{-1}=  A^{\top}, \  UB=C^{\top}, \ CU^{-1}=B^{\top},\ \ U=\begin{bmatrix}
   0&1 \\
   1&a_2
 \end{bmatrix}.
\end{equation}
the EDO of system \dref{20209272135}  becomes
\begin{equation} \label{20209272139}
  \left\{\begin{array}{l}\disp
\disp \dot{ \hat{z}}_1 (t) = \hat{z}_2(t)
- [(\alpha_2+\alpha_3)\omega_o-a_2][y(t)- \hat{z}_1(t)],\crr
 \disp \dot{ \hat{z}}_2 (t) = a_1\hat{z}_1(t)+a_2\hat{z}_2(t)+
  \alpha_1\alpha_3\omega_o^3\hat{v}(t)\\
- [(\alpha_1-\alpha_2\alpha_3)\omega_o^2 + a_2(\alpha_2+\alpha_3)\omega_o-a_2^2-a_1] [y(t)- \hat{z}_1(t)] +u(t),\crr
\disp \dot{ \hat{v}} (t) =[y(t)- \hat{z}_1(t)]
  ,
\end{array}\right.
 \end{equation}
where    $\alpha_1 $, $\alpha_2$ and $\alpha_3$ are constants such that  $\alpha_3<0$ and      the matrix
\dref{20209271122} is Hurwitz.
When  $a_1=a_2=0$, observer \dref{20209272139} is reduced to
\begin{equation} \label{20209272139a1a2}
  \left\{\begin{array}{l}\disp
\disp \dot{ \hat{z}}_1 (t) = \hat{z}_2(t)
-  (\alpha_2+\alpha_3)\omega_o [y(t)- \hat{z}_1(t)],\crr
 \disp \dot{ \hat{z}}_2 (t) =
  \alpha_1\alpha_3\omega_o^3\hat{v}(t)
-  (\alpha_1-\alpha_2\alpha_3)\omega_o^2   [y(t)- \hat{z}_1(t)] +u(t),\crr
\disp \dot{ \hat{v}} (t) =[y(t)- \hat{z}_1(t)]
 \end{array}\right.
 \end{equation}
and at the same time,
 system \dref{20209272135} turns to be
the canonical form of ESO in \cite{GAO2003} or high-gain observer in \cite{KhalilTAC2008}.
In this case, the  extended state   observer or high-gain observer   of system \dref{20209272135} is
\begin{equation} \label{20209272145}
  \left\{\begin{array}{l}\disp
\disp \dot{ \hat{z}}_1 (t) = \hat{z}_2(t)
- \beta_1\omega_o[y(t)- \hat{z}_1(t)],\crr
 \disp \dot{ \hat{z}}_2 (t) =  \ \hat{v}(t)
- \beta_2\omega_o^2 [y(t)- \hat{z}_1(t)] +u(t),\crr
\disp \dot{ \hat{v}} (t) =-\beta_3\omega_o^3[y(t)- \hat{z}_1(t)]
  ,
\end{array}\right.
 \end{equation}
where $\beta_1,\beta_2$ and $ \beta_3$ are constants such that the following matrix is Hurwitz
   \begin{equation} \label{20209272149}
 \begin{bmatrix}
 -\beta_1&1&0\\
 -\beta_2&0&1\\
 -\beta_2&0&0
\end{bmatrix} .
 \end{equation}
 By proper choices of $\beta_1,\beta_2$ and $\beta_3$, the observers \dref{20209272145} and \dref{20209272139a1a2} are
 equivalent  under  an invertible coordinate transformation.
 From this point,  the proposed EDO with constant dynamic $G=0$ covers the ESO as a special case and improves the ESO to the general observable linear system with input disturbance.

\section{Extended dynamic observer with harmonic dynamics}\label{Se.7}
This section devotes to a more general case than the constant dynamics discussed in Section \ref{ESO}.
 In most of engineering applications,
the disturbance is not completely ignorant.
 Some prior
 information about the disturbance usually has been known before the observer design.
When such a  prior
 information  is completely   known, i.e., the disturbance dynamics $G$ is known, the observer
can be designed    by
Theorem \ref{Th20201141911}.  When we only known a  roughly  prior information about the disturbance,
we then need both the high-gain and the known disturbance dynamics to deal  with the disturbance. %From  Theorem \ref{Th20209182115},
% the more the prior information we use, the better
%performance of observer would have.

%Before the observer design we need to choose, in terms of the prior  information  about  the disturbance $d$,   disturbance  dynamics dominated by $G$. This is nature in the engineering application.
%The process of observer design is actually a process of  information utilization.
%Generally speaking, an ideal observer need to make use the prior information as much as possible,
%not only for the information about control plant, but also for the disturbance.

%By Theorems \ref{Th20201141911} and \ref{th20201121836},
% both the observer and controller reply on  the choice of the
%extended dynamics determined by  $(G,Q)$. The choice of  the output matrix $Q$ is almost trivial provided $G$ is
%given. Namely, for given $G$, the only requirement for matrix $Q$ is to make  $(G,Q)$ observable.
%Therefore, the key step for the observer  and the observer based output feedback  design   is the choice of $G$
% which depends on the
%prior  information   about the disturbance.
%% In this section, we are going to give a
%%way to choose $G$.

To  make it more easier  to use, this section shows how to choose the dynamics of
disturbance by proper  choice of $G$.
By Remark \ref{Re20201222110},
 the  steady-state   error  of   the observer \dref{2020919191147}
  is   proportional to the error of the optimal  approximation $\|(I-\mathbb{P}_G)d\|_{\mathbb{S}}$  and is
 inversely proportional to    $\omega_{o}$.
 In order to decrease the steady-state   error  we should choose $G$ such that  $\Omega(G)$ is as large as possible.
 On the other hand,  the increment   of the order of $G$ may lead to overshoot in the
   transient response due to the high-gain and the extended order of disturbance dynamics. This,  in turn, makes us
    reduce the order of $G$   as much  as possible.
 Hence, we  need to find a trade-off between the observer accuracy and the response  performance.
 %In this section, we only consider the choice of disturbance dynamics for the observer
% \dref{2020921000}.

%\section{Periodic  dynamics}\label{Periodic}
Suppose that we have known that
$d\in \mathbb{S}$ is a continuous  periodic   signal
  with roughly known frequencies
  $\omega_j$, $j=1,2,\cdots,N$.
     In other words, the disturbance can be decomposed  into
   $d(t)=d_1(t)+d_2(t)$, where the non-constant dynamics of $d_1(\cdot) $ are completely unknown  and the dynamics of $d_2(\cdot)$ are   known., i.e.,
  \begin{equation}\label{2020261930}
d_2 (t)=   \sum\limits_{j=0}^{N}\left(a_j\cos\omega_jt+b_j\sin\omega_jt\right),
\end{equation}
 where $ a_j,b_j\in\R$,  $j=1,2,\cdots,N$   are unknown  amplitudes.
 By virtue of the  prior information  about the frequencies, we are able to
choose  $g_1,g_2,\cdots,g_{2N+1}$ such that
the matrix $G$ given by \dref{20209923729} with $m=2N+1$ satisfies
 $\sigma(G)=\{0,\pm \omega_j i\    |\ j=1,2,\cdots,N\}$.  Thanks to the Vieta theorem,  the choice of
  the parameters  $ g_{0 }, g_1,  g_{2}, \cdots,    g_{ 2N+1  }  $  is easy and  implementable.
Owing to  \dref{20195181211},  we have
 $d_2\in \Omega(G)$. By Theorem \ref{Th20209182115},       all the negative effects of $d_2(\cdot)$
 can be  eliminated
and the   steady-state error of   observer   \dref{2020919191147}  now
is   proportional to
 \begin{equation}\label{f20201282226hat}
  \|(I-\mathbb{P}_G)d \|_{\mathbb{S}}< \|\dot{d}_1 \|_{\infty}.
  \end{equation}
 If some frequencies  of $\omega_j$ are  large,
  then
$\|\dot{d}\|_{\infty}$ may be large as well. As the  result,
the   ESO or high-gain observer  may be  invalid  since the  observer
gain can not be arbitrarily large in engineering application.
  However, the EDO can still work  well because the high frequencies disturbance  has been removed completely
   by the extended dynamics.

   The main advantage of this approach lies in that we only need   rough prior  information  about the disturbance. All the unknown parts or the wrong   prior  information  can be treated automatically by the high-gain. In this way, we can make use of the prior  information  as much as possible and at the same time, the strong robustness to  the disturbance is possessed by the new proposed EDO.

To make this new methodology more understandable,
   we give another  example to show   how to utilize the prior periodic  information
of  the  disturbance. Suppose that $d\in \mathbb{S}$ is a periodic disturbance with known period $T$.
  By Fourier expansion,
  \begin{equation}\label{20201291011}
d(t)
 =  \sum\limits_{j=0}^{N} a_j\cos \frac{ j\pi t}{T} +\sum\limits_{j=N+1}^{\infty} a_j\cos\frac{ j\pi t}{T}:=d_1(t)+d_2(t),
  \end{equation}
  where $a_j$,  $j=0,1,\cdots$, are  the Fourier coefficients.
%   \begin{equation}\label{20201301001}
%a_j=\frac{2}{T}\int_0^T   d(t)\cos \frac{ j\pi t}{T}dt, \ j=0,1,\cdots.
%  \end{equation}
 Since $\dot{d}\in L^{\infty}[0,\infty)$,   we have
  \begin{equation}\label{20201301013}
\dot{d}(t)
 =  \sum\limits_{j=1}^{\infty} \tilde{a}_j\cos \frac{ j\pi t}{T},\ \ \ \  \tilde{a}_j=a_j
 \frac{ j\pi t}{T},
 \ \ j=0,1,\cdots,
  \end{equation}
  which implies that
  $\tilde{a}_j\to0$ as $j\to\infty$.
  Hence, we can choose $N$ large enough such that the remainder $\|\dot{d}_2\|_{\infty}$ is  sufficiently small.
   By   Vieta's  theorem,
  we can choose $g_{0 }, g_1,  g_{2}, \cdots,    g_{ 2N  }$   such that $G$ given by   \dref{20209923729} satisfies $ \sigma(G)=\left\{ \pm\frac{ j\pi t}{T}i \ |\
    j=0,1,2,\cdots,N\right\}$.
 % \begin{equation}\label{20201301056}
%  \sigma(G)=\left\{ \pm\frac{ j\pi t}{T}i \ |\ j=0,1,2,\cdots,N\right\}.
%  \end{equation}
 As a result, we have $d_1\in \Omega(G)$ and hence
 the    steady-state error of   observer    \dref{2020919191147}
 is   proportional to  $\|\dot{d}_2 \|_{\infty}  $  that may be  much smaller than
 $\|\dot{d}  \|_{\infty} $.
  In this way, we have improved the accuracy of the observer without using the high-gain.
  %and hence the    steady-state error of the closed-loop \dref{20201122004} is reduced.
If we have known the best $N$-terms approximation of the Fourier expansion, $d_1(\cdot)$ in \dref{20201291011} can be replaced by its best $N$-terms approximation. In this case,
 we may obtain the higher accuracy of the  observer  \dref{2020919191147}  by a smaller order $N$.
 % $N$ is much smaller and thus the higher accuracy of the  observer  \dref{2020919191147}  will
% be achieved.
  Due to nonlinear characteristics of the best  $N$-terms approximation
 \cite[Section 3.8]{Approximation2004},
   the observer is then actually a ``nonlinear observer" about the disturbance, although it is still
   a linear one to the control plant.

\begin{remark}\label{Re2020928802}
If we choose  $g_{0 }= g_1=   \cdots=   g_{ m  }=0$ in \dref{20209923729},
 then
 \begin{equation}\label{20209141747}
 \left\{ a_0t^m+a_1t^{m-1}+\cdots+a_{m-1}t+a_{m}\ |\ a_j\in \R, j=0,1,\cdots,m, \ t\in\R\right\}=\Omega(G).
 \end{equation}
 Therefore,  the  EDO  still   works for polynomial signals or polynomial
    piecewise    signal in some sense.
 Moreover, the exponential signals can still be treated by EDO.
  For example, if
the dynamics  $G$  satisfies   $\sigma(G )=\{ 0, {\lambda}\} $, $\lambda>0$ and   the algebraic multiplicity of the eigenvalue $ \lambda$  is ${n}_{\lambda}$, the signals of the type
$t^{n_{\lambda}}e^{\lambda t}$ belong to $\Omega(G)$.

\end{remark}

 \section{Feedback linearization  }\label{SectionFeedback}

In this section, we discuss  output feedback stabilization for  system \dref{2020151017}.
Without loss of  the generality, we suppose that
$A$, $B$ and $C$ are given by \dref{202010021047}
	 with $b_1=1$ and
  $b_2=b_3=\cdots=b_n=0$.
   In this case,
  system \dref{2020151017}  is  always observable for    $\mathbb{S}$ due to   Lemma \ref{Lm20201021046}.
 By Theorem \ref{Th202010021237},  for any $d\in \mathbb{{S}}$,
 system $\dot{x}(t)=Ax(t)+B[d(t)+u(t)]$ admits a feedback
 \begin{equation} \label{2020928841}
 u(t)=F_{\omega_c}x(t),\ \
F_{\omega_c}=[f_1(\omega_c)\  \ f_2(\omega_c)\ \ \cdots\ f_n(\omega_c)] ,\ \omega_c>0,
\end{equation}
 such that
  \begin{equation} \label{20201031324}
 \lim_{t\to\infty}\|x(t)\|_{\R^n}\leq \frac{M\|d\|_{\infty}}{\omega_c},\ \
 \end{equation}
 where $f_j\in C[0,\infty)$, $j=1,2,\cdots,n$ and  $M$ is a positive constant which  is independent of $\omega_c$.

  By Theorem \ref{Th20209182115},  $\hat{x}(\cdot)$ and $Q\hat{v}(\cdot)$ are   estimations of
   $x(\cdot)$ and $d(\cdot)$, respectively, where $\hat{x}(\cdot)$ and $\hat{v}(\cdot)$ come  from the observer \dref{2020919191147}.
 Similarly to \dref{20201261144},   the  output feedback  stabilizing control can be designed as
   %the estimation/cancellation  strategy \dref{20201261144}:
 \begin{equation} \label{20201122005}
   \begin{array}{ll}
\disp u(t)& =-Q \hat{v} (t)+F_{\omega_c}\hat{x} (t) ,
\end{array}
\end{equation}
where
   the first term is used to compensate for  the disturbance and the second term
   is  the stabilizer.
%   There   exist two positive constants
%$\omega_{c}$ and $L_{c}$ such that
%\begin{equation} \label{20201261545}
%\|e^{(A+BF_{\omega_c})t}\|\leq L_{c}e^{-\omega_{c} t},\ \ t\geq0.
%\end{equation}
    In view of the observer \dref{2020919191147}, the feedback law \dref{20201122005} leads to
 the closed-loop system of \dref{2020151017}:
  \begin{equation} \label{20201122004}
 \left\{\begin{array}{l}
\disp \dot{x} (t) = A  x (t)+B  d(t)   -BQ \hat{v} (t)+BF_{\omega_c}\hat{x} (t) , \crr
\dot{ \hat{x}} (t) =
[A+( K_{\omega_o} +SE )C]\hat{x}(t)  -( K_{\omega_o} +SE )Cx(t)+BF_{\omega_c}\hat{x}(t),\crr
\disp \dot{ \hat{v}} (t) =G\hat{v}(t)-E C \hat{x}(t)+
 E Cx(t),
\end{array}\right.
\end{equation}
where $G$, $E$,   $K_{\omega_{o}},  S $ and $Q$ are chosen by the  scheme of observer \dref{2020919191147}.

 \begin{theorem}\label{th20201121836}
Under the Assumption \ref{Assum1},    for any $d\in  \mathbb{S} $,
there exist
$F_{\omega_c}$,  $K_{\omega_{o}},  S $ and $Q$  such that
 the solution of  closed-loop system   \dref{20201122004} satisfies:
  \begin{equation} \label{20201261548A}
   \disp \lim_{t\to\infty}\|x(t)\|_{\R^n}
 \leq
     \frac{M_0 \|(I-\mathbb{P}_G)d\|_{\mathbb{S}}}{\omega_{c}\omega_{ o}},\ \   \ \ \forall\ t\geq0,
\end{equation}
 where $\mathbb{P}_G$ is defined by \dref{20201211811},     $\omega_{o} $,  $\omega_{c} $  are tuning gains and $M_0$ is a positive constant that is independent of $\omega_{o} $.
Moreover,   $F_{\omega_c}\in  \R^{1\times n}$ can be chosen by Theorem \ref{Th202010021237} and
the observer parameters $G, E, K_{\omega_{o}},  S $ and $Q$ can be
  chosen by the  scheme  of parameters choice of observer \dref{2020919191147}.

     \end{theorem}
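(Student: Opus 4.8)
The plan is to exploit the separation (cascade) structure of the closed loop \dref{20201122004}: the observer error dynamics decouple completely from the stabilizing feedback, and the plant state is then governed by the high-gain--stabilized system of Theorem \ref{Th202010021237} driven by the (small) observer errors.

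First I would introduce the errors $\tilde{x}=x-\hat{x}$ and $\tilde{v}=v-\hat{v}$ and, using the dynamical representation $\dot{v}=Gv+\frac{B_{d}}{QB_{d}}\dot{e}$, $d=Qv$ from Lemma \ref{lm20201211809}, subtract the observer equations in \dref{20201122004} from the plant. A direct computation shows the two occurrences of $BF_{\omega_c}\hat{x}$ cancel, so that $(\tilde{x},\tilde{v})$ obeys exactly the EDO error system \dref{20206231124}, which is independent of the feedback gain $\omega_c$. Hence Theorem \ref{Th20209182115} applies verbatim and gives $\lim_{t\to\infty}\|\tilde{x}(t)\|_{\R^n}\leq\frac{M_1\|e\|_{\mathbb{S}}}{\omega_o}$ together with $\lim_{t\to\infty}|Q\tilde{v}(t)|\leq\frac{M_2\|e\|_{\mathbb{S}}}{\omega_o}$, where $e=(I-\mathbb{P}_G)d$ and $M_1,M_2$ are independent of both $\omega_o$ and $\omega_c$.

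Next, substituting $\hat{x}=x-\tilde{x}$ and $\hat{v}=v-\tilde{v}$ into the first equation of \dref{20201122004} and using $Bd-BQv=0$, the plant reduces to $\dot{x}=(A+BF_{\omega_c})x+B[\,Q\tilde{v}-F_{\omega_c}\tilde{x}\,]$, i.e.\ precisely the closed loop of Theorem \ref{Th202010021237} driven through the input channel $B$ by the scalar signal $Q\tilde{v}-F_{\omega_c}\tilde{x}$. Since $F_{\omega_c}$ is the feedback furnished by Theorem \ref{Th202010021237}, the uniform resolvent estimate \dref{02010021526} and Lemma \ref{Lm20209172011} yield $\|e^{(A+BF_{\omega_c})t}B\|\leq L_B e^{-\omega_c t}$ with $L_B$ independent of $\omega_c$. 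Duhamel's formula and the ensuing convolution estimate then give $\lim_{t\to\infty}\|x(t)\|_{\R^n}\leq\frac{L_B}{\omega_c}\,\limsup_{t\to\infty}|Q\tilde{v}(t)-F_{\omega_c}\tilde{x}(t)|$, and the two bounds from the first step control the right-hand side by $\frac{L_B}{\omega_c}\big(M_2+\|F_{\omega_c}\|\,M_1\big)\frac{\|e\|_{\mathbb{S}}}{\omega_o}$, which is of the claimed form $\frac{M_0\|e\|_{\mathbb{S}}}{\omega_c\omega_o}$ with $M_0:=L_B(M_2+\|F_{\omega_c}\|M_1)$ independent of $\omega_o$.

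The main subtle point is the cross term $F_{\omega_c}\tilde{x}$: because $F_{\omega_c}$ is a high-gain row vector whose entries grow with $\omega_c$ while $\tilde{x}$ is $\omega_c$-independent, the estimate $|F_{\omega_c}\tilde{x}|\leq\|F_{\omega_c}\|\,\|\tilde{x}\|$ forces $M_0$ to carry the $\omega_c$-growth. This is exactly consistent with the statement, which asserts that $M_0$ is independent of $\omega_o$ only; the clean reading is that for any fixed stabilizer gain $\omega_c$, increasing the observer gain $\omega_o$ drives $\|x\|$ to zero at rate $1/\omega_o$, and the explicit factor $1/(\omega_c\omega_o)$ records the two convolution integrals (one for the observer semigroup at rate $\omega_o$, one for the feedback semigroup at rate $\omega_c$). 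I would close by remarking that $F_{\omega_c}$ is chosen by Theorem \ref{Th202010021237} and $G,E,K_{\omega_o},S,Q$ by the design scheme of observer \dref{2020919191147}, so the construction reuses objects already built and no new parameter choice is required.
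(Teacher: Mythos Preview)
Your proof is correct and follows essentially the same route as the paper: introduce the error variables $(\tilde{x},\tilde{v})$ to decouple the observer error dynamics (which coincide with \dref{20206231124} and are independent of $F_{\omega_c}$), invoke Theorem \ref{Th20209182115} for the $1/\omega_o$ bounds on $\tilde{x}$ and $Q\tilde{v}$, rewrite the $x$-equation as the high-gain closed loop of Theorem \ref{Th202010021237} driven through $B$ by $Q\tilde{v}-F_{\omega_c}\tilde{x}$, and finish with the $1/\omega_c$ convolution estimate from Lemma \ref{Lm20209172011}. Your discussion of the cross term $F_{\omega_c}\tilde{x}$ is in fact more careful than the paper's: the paper asserts in its estimate \dref{202093834} that the intermediate constant is independent of $\omega_c$, whereas you correctly observe that $\|F_{\omega_c}\|$ grows with $\omega_c$ and that the theorem only claims $M_0$ independent of $\omega_o$, so the $\omega_c$-growth is legitimately absorbed into $M_0$.
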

\begin{proof}
 Since  $d\in \mathbb{S}$, it can be represented dynamically as \dref{20201212010}.
% which satisfies
%   \begin{equation} \label{20202101601}
% \sup_{t\geq0}\| v(t) \|_{\R^{m+1}}<+\infty.
%\end{equation}
By Theorem \ref{Th20209182115}, the observer \dref{2020919191147} is well-posed.          Define   the invertible transformation
   \begin{equation} \label{20201281203}
\begin{bmatrix}
x\\
v\\
\tilde{x} \\
\tilde{v}
\end{bmatrix}=
\begin{bmatrix}
I_n &0&0&0\\
0&I_{m+1} &0&0\\
I_n&0 &-I_n&0\\
0&I_{m+1} &0&-I_{m+1}\\
\end{bmatrix}\begin{bmatrix}
x  \\
v\\
\hat{x} \\
\hat{v}
\end{bmatrix}.
 \end{equation}
 In view of   \dref{20201212010}, the transformation \dref{20201281203}  converts the closed-loop system \dref{20201122004} into
    \begin{equation} \label{20201281209}
 \left\{\begin{array}{l}
 \disp   \dot{x} (t)= (A+BF_{\omega_c})  x (t)+B F_{\omega_c} \tilde{x} (t) +BQ  \tilde{v} (t), \crr
 \disp    \dot{ v} (t) =  Gv  (t)+\frac{B_d}{  Q B_d } \dot{e}(t),\crr
 \disp \dot{\tilde{x}} (t) = [A +( K_{\omega_o} +SE)C ]\tilde{x} (t)+B Q \tilde{v} (t),\crr
\disp  \dot{\tilde{v}} (t) =  G \tilde{v} (t) -EC\tilde{x}(t)+  \frac{B_{d}}{QB_{d}}\dot{e}(t).
\end{array}\right.
\end{equation}
By Theorem \ref{Th20209182115},
      there exists a positive constant $M_1$, independent of $\omega_o$ and $\omega_c$, such that
\begin{equation} \label{202093834}
\disp  \lim_{t\to\infty} |  F_{\omega_c}\tilde{x} (t)+Q\tilde{v} (t)  |  \leq
\frac{M_1    \|{e}\|_{\mathbb{S}}  }{\omega_o }.
\end{equation}
By Theorem \ref{Th202010021237},  \dref{202093834} and
  \dref{20201031324},
   there exists an $M_2>0$ such that
  \begin{equation} \label{202093838}
\disp\|x(t)\|_{\R^n}  \leq  \frac{M_2M_1    \|{e}\|_{\mathbb{S}} }{\omega_c\omega_o},
\end{equation}
which leads to
\dref{20201261548A}.
 \end{proof}

%As a result, the higher accuracy of the observer is obtained even if we only choose the constant dynamics.

%For example,   the minimum phase system,
%  i.e.,  all the  transmission zeros of  system $(A,B,C)$  are on the open left-half complex plane.
%This implies that the ESO in \cite{GAO2003} cannot work for the
%the minimum phase system directly except losing

%\dref{2020213905} {\bf is still true for  $\sigma(G)$ not containing
%transmission zeros. In other words,  the EDO is also applicable   to   non-minimum phase
% systems provided we choose the extended dynamics $G$ properly.?}
 % whose t  transmission zeros are out of $\sigma(G)$.
%
%
% When the Assumption (iii) is not satisfied for the eigenvalue $\lambda_0$,
%  system $(A,B,C)$
%
%  is robust to the disturbance determined by the mode $e^{\lambda_0t}$ already.
%In this case,   such a  disturbance can be  ignored in the choice of $G$.

 \begin{remark}\label{Re202012131243}
 When the input disturbance is the nonlinear dynamics of the control plant,
 the EDO based feedback \dref{20201122005} actually
  achieves the feedback linearization of nonlinear system.
  After canceling the the nonlinear dynamics by its estimation,
   the transient performance
of the nonlinear system  behaves  like the
  nominal linear system.

 \end{remark}

\begin{remark}\label{Re20201251814}
 The disturbance with unknown dynamics is dealt with essentially by high-gain.
It is therefore  necessary to consider the sensitiveness to the  random measurement  noise. However,
  the strict theoretical analysis
   is not an   easy task. Here we  only give a simple numerical analysis in   Section \ref{Simulation}.
   A rigorous mathematical analysis is left in our next future  works.
Moreover, the ``peaking phenomenon" caused by high-gain and extended   dynamics  may take place in
the transient response. This drawback should be sufficiently taken into consideration in the practice.
\end{remark}

\begin{remark}\label{Re20201071100}
When $G$, $K$ and $P$ is given, the only tuning parameter of the observer \dref{2020919191147} is
$\omega_o$. Similarly, the only tuning parameter of the feedback \dref{2020928841}  is $\omega_c$ provided $f_j$ is given $j=1,2,\cdots,n$.
Therefore, the tuning parameters of the   closed-loop system \dref{20201122004}
can boil  down to to  $\omega_o$ and $\omega_c$
 %Suppose that  $K=[k_1\  k_2\  \cdots\ k_n]^{\top}$,  $P=[p_0\ p_1\ \cdots\ p_m]$  and
%  $F=[f_n\  f_{n-1}\  \cdots\ f_1] $ are vectors  such that
%  $ A + KC $, $G + B_{d} P$  and
%   $ A ^{\top}+C^{\top}F $ are Hurwitz.  Then,  for any positive constants  $\omega_o$ and $\omega_c$,  the tuning vectors $K_{\omega_o}$,   $F_{\omega_c}$ and $S,Q$
% in the closed-loop system \dref{20201122004}
%  can be obtained by  \dref{2020927830},   \dref{2020928841} and  \dref{2020861532regualtor}, respectively.   In this way, the only tuning parameters of the closed-loop system \dref{20201122004} are $\omega_o$ and $\omega_c$
  which
  are referred to as ``bandwidth"
 of the observer and   controller, respectively in ADRC \cite{GAO2003}.
\end{remark}

\section{Numerical simulations }\label{Simulation}

In order to validate the developed   fundamental principle  visually, we present  some  simulations for the closed-loop system \dref{20201122004}.
 The finite difference scheme is adopted in discretization. The numerical results are programmed
in Matlab. The time step is  taken as $0.0001$.
Suppose that  the control plant is known and is given by \dref{2020918849n=2} with $a_1=2$ and $a_2=1$. Let $(x_1(0),x_2(0))=(0,1)$ and let the initial state of  the observer be zero.
  The tuning parameters are chosen as  $\omega_o=10$ and $\omega_c=10$ and
  the disturbance is chosen as
 $ d(t)=  \sin \omega t  +10$ with $\omega=10$.
   In contrast  with the simulations in \cite{KhalilTAC2008}, the frequency of disturbance  here is  much larger    but the tuning gain $\omega_o$ is much smaller.

 We consider three cases:  a) The only prior  information  about the disturbance is   $d\in \mathbb{S}$;
 b)~ There is an  estimation $9.5$ for  $\omega=10$; c)~ The frequency $\omega=10$ is known.
 We choose the extended dynamics as  $G_1=0$, $\sigma (G_2)=\{0,\pm9.5 i\}$ and
 $\sigma (G_3)=\{0,\pm10i\}$, respectively.
 The state estimation, disturbance estimation and the controller with $G_1$ are
 are plotted in Figure \ref{F1}. The counterparts for $G_2$ and $G_3$ are plotted in Figures \ref{F2} and
   \ref{F3}, respectively. In order to look at  the sensitiveness    of the measurement noise,
   the state estimation, disturbance estimation and the controller with $G_2$ and
   corrupted measurement $y(t)=Cx(t)+0.01\xi(t)$ are
      plotted in Figure \ref{F4}, where    $\xi(t)$ is the  standard Gaussian noise generated by the Matlab program command ``randn".

   \begin{figure}[!htb]\centering
\subfigure[$x_1 $ and its estimation $\hat{x}_1$]
 {\includegraphics[width=0.31\textwidth]{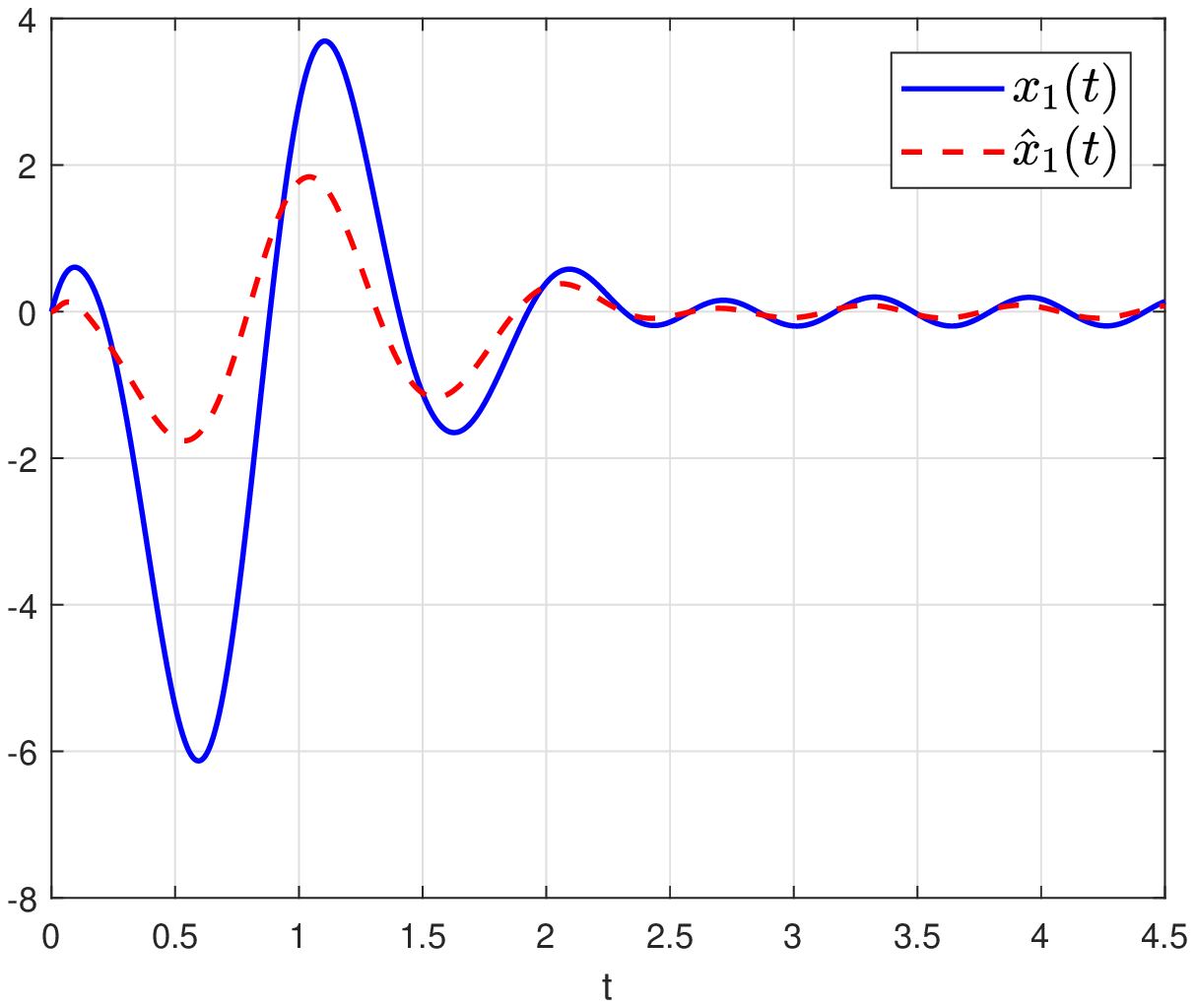}}
\subfigure[$x_2 $ and its estimation $\hat{x}_2$]
  {\includegraphics[width=0.31\textwidth]{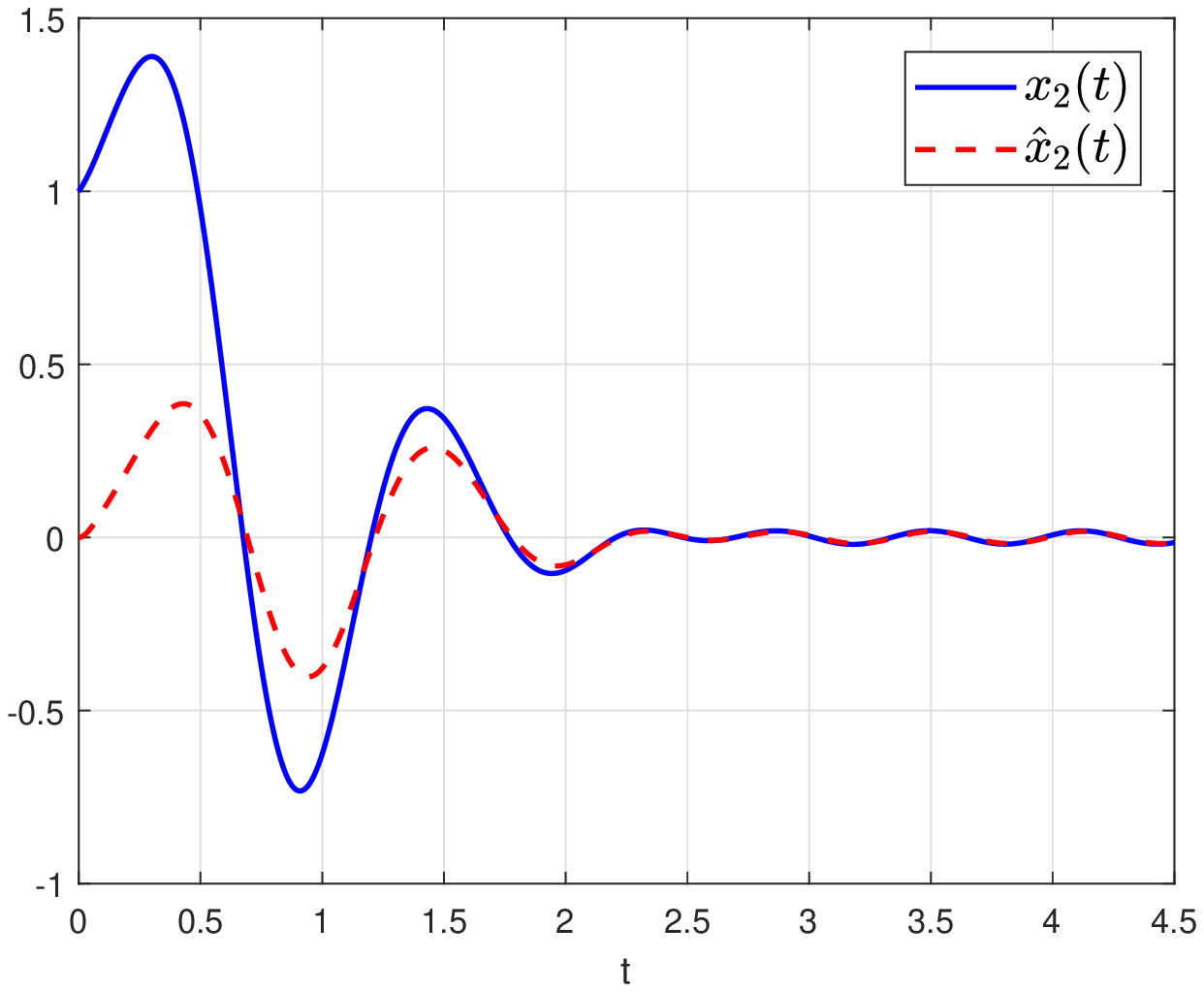}}
  \subfigure[$d -Q\hat{v}$ and  controller]
  {\includegraphics[width=0.31\textwidth]{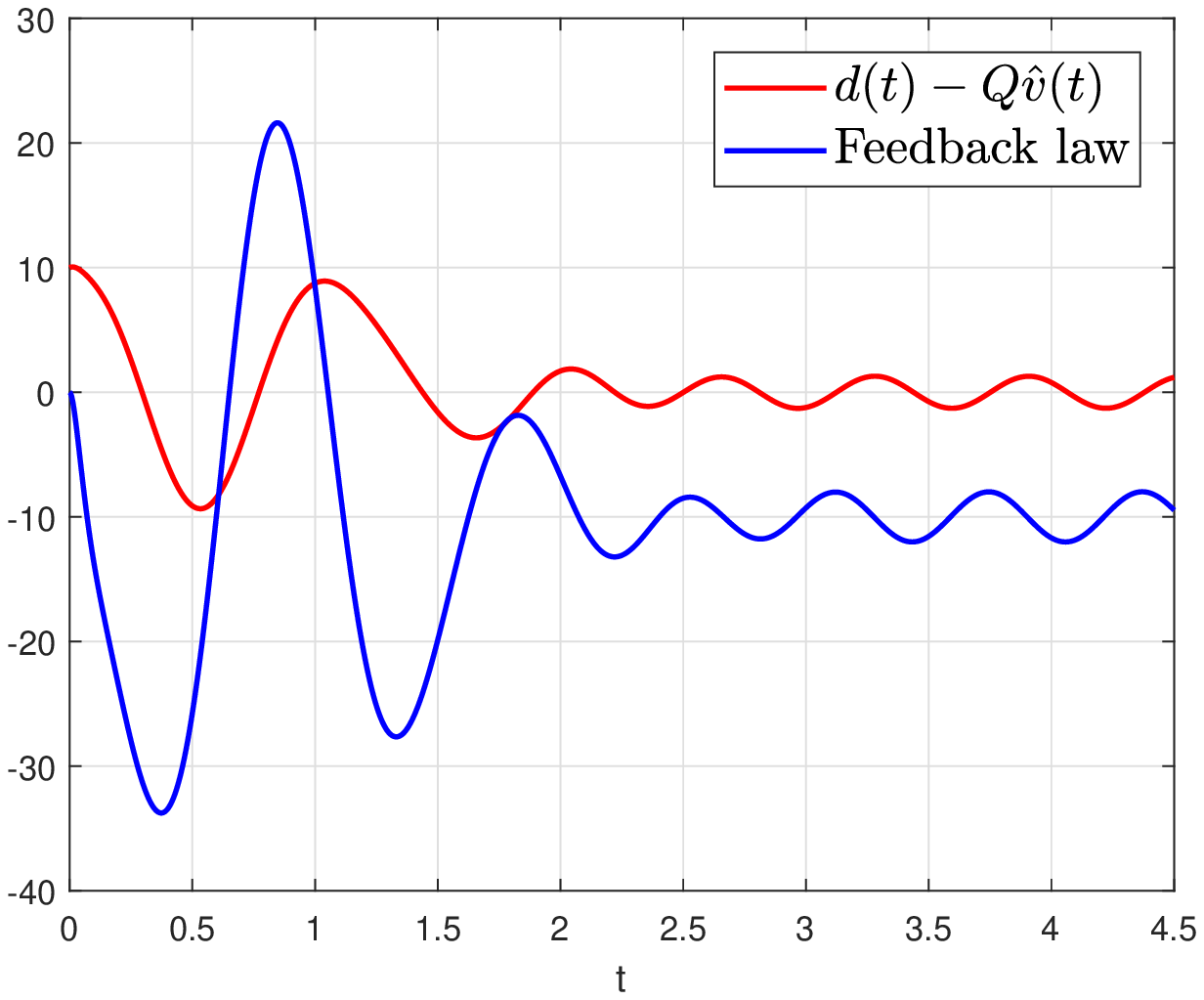}}
  \caption{The dynamics of disturbance is completely unknown except $d\in \mathbb{S}$.}
\label{F1}
\end{figure}

    \begin{figure}[!htb]\centering
\subfigure[$x_1 $ and its estimation $\hat{x}_1$]
 {\includegraphics[width=0.31\textwidth]{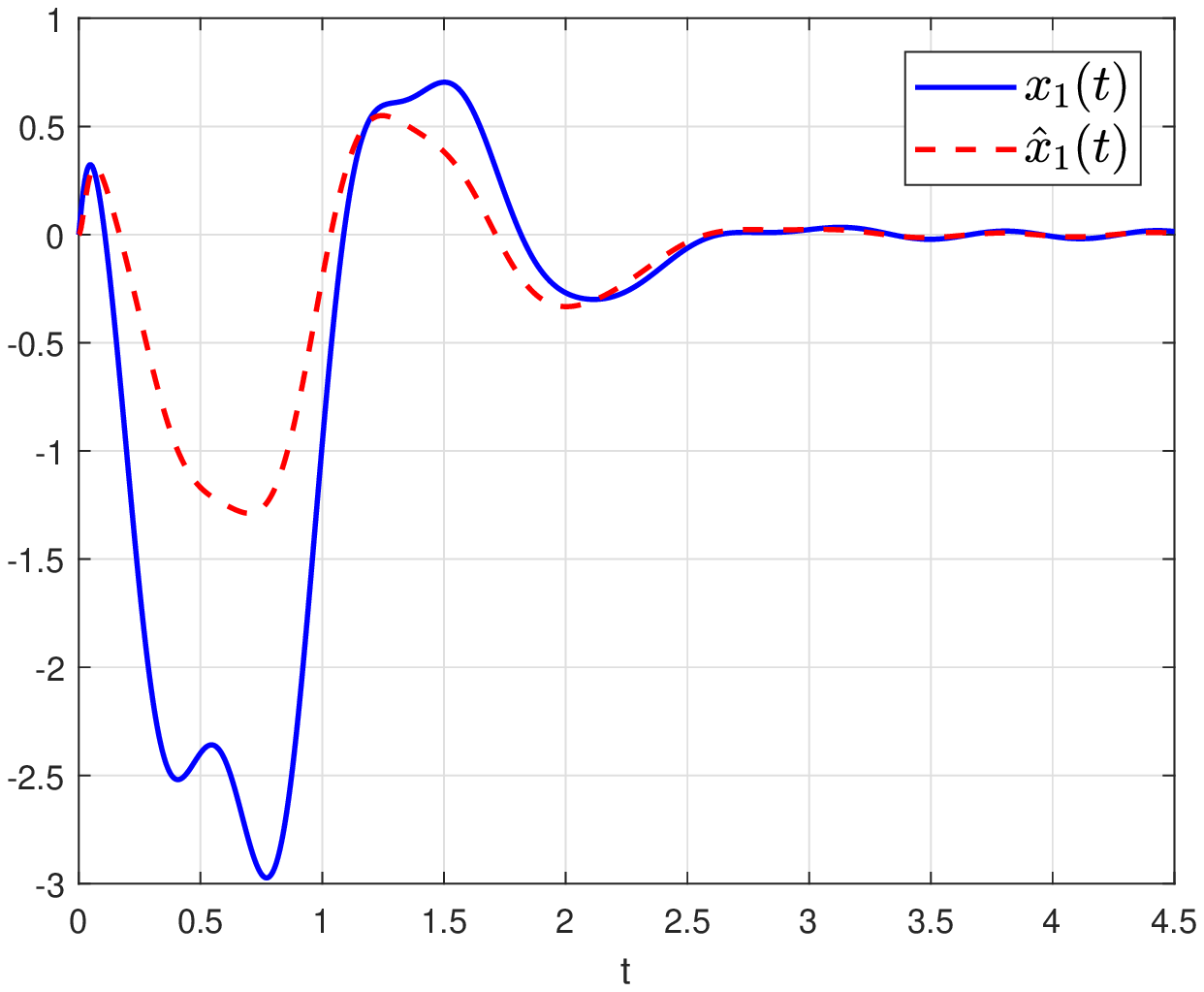}}
\subfigure[$x_2 $ and its estimation $\hat{x}_2$]
  {\includegraphics[width=0.31\textwidth]{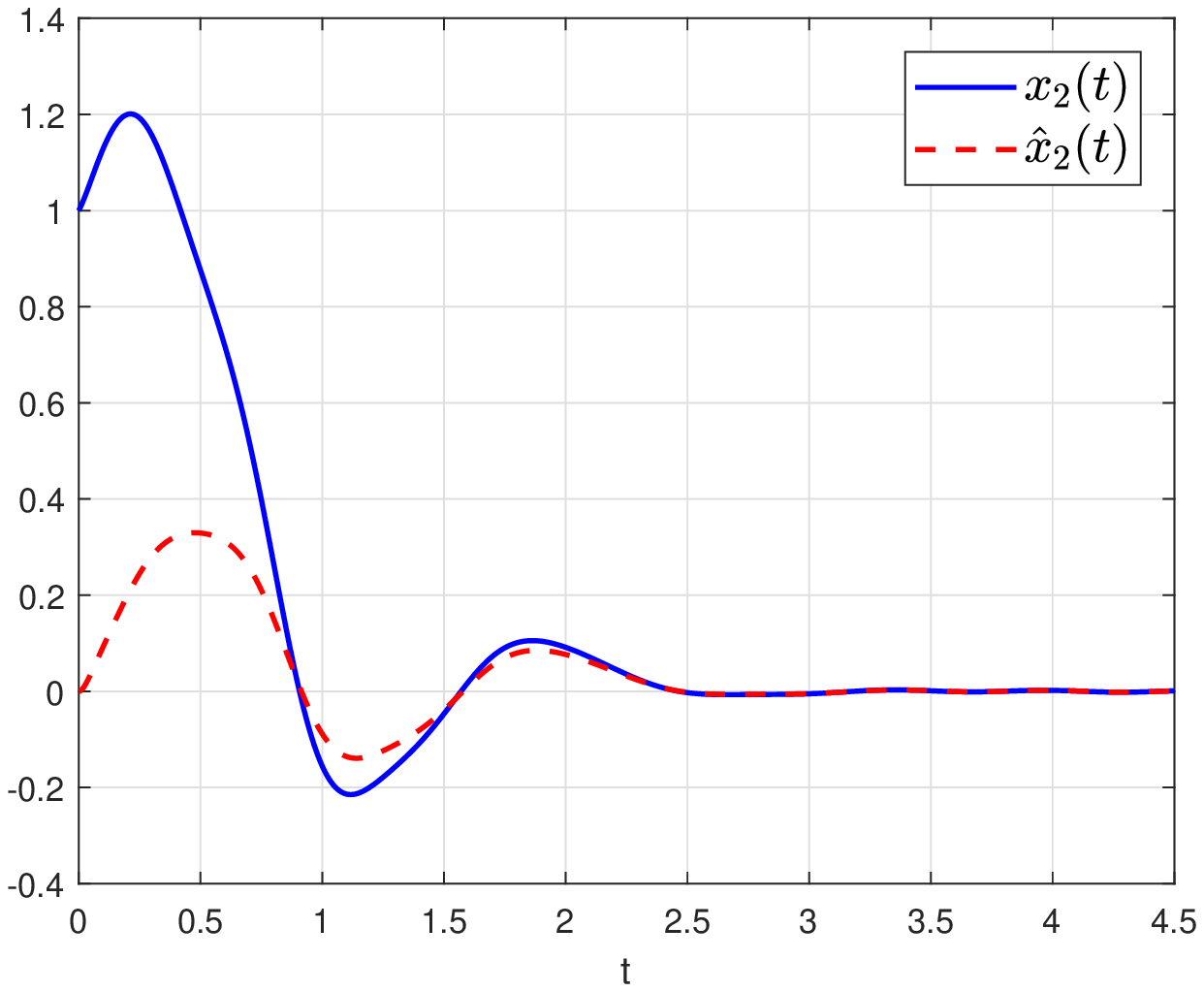}}
  \subfigure[$d -Q\hat{v}$ and  controller]
  {\includegraphics[width=0.31\textwidth]{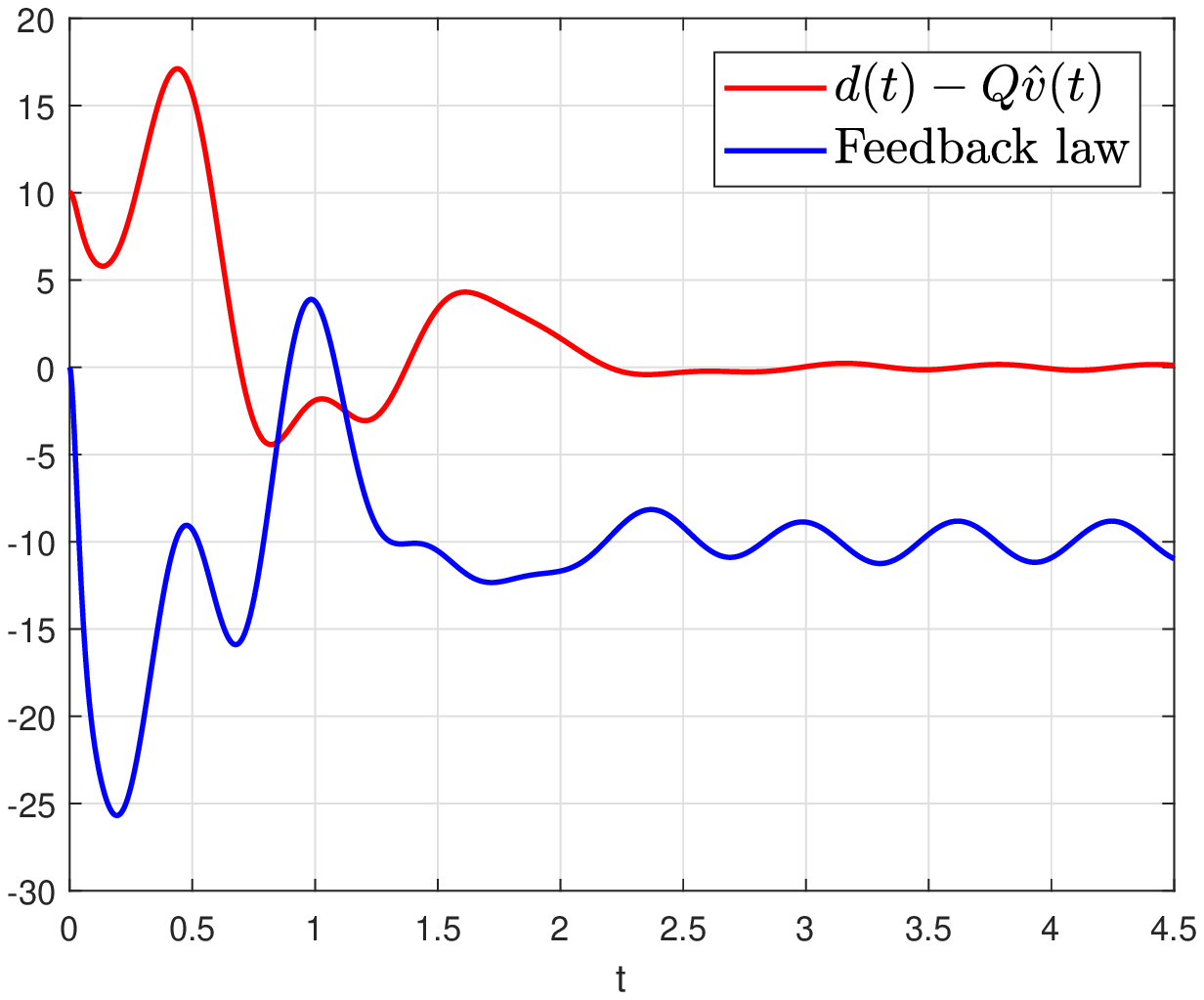}}
  \caption{The dynamics of disturbance is roughly known.}
\label{F2}
\end{figure}

    \begin{figure}[!htb]\centering
\subfigure[$x_1 $ and its estimation $\hat{x}_1$]
 {\includegraphics[width=0.31\textwidth]{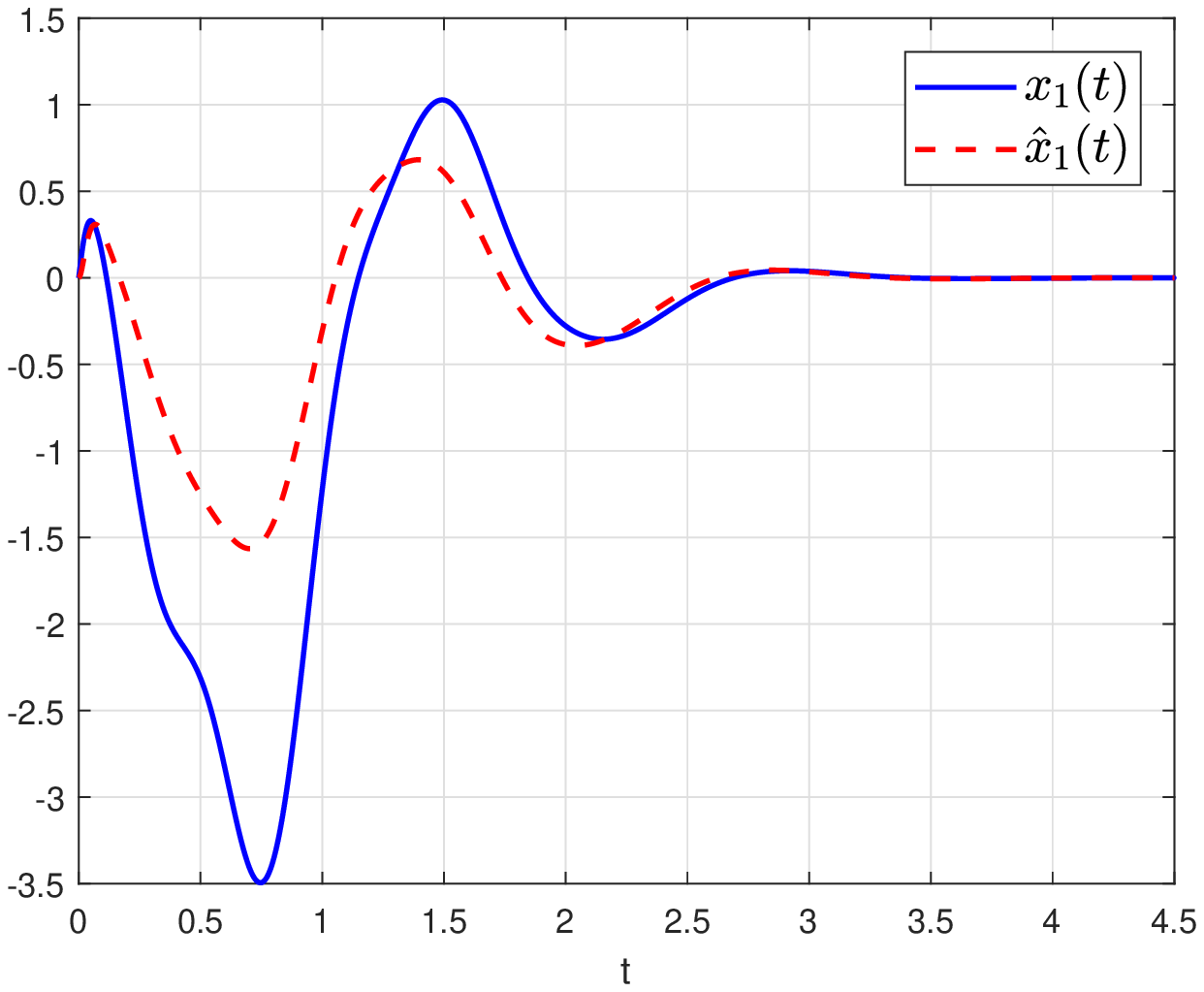}}
\subfigure[$x_2 $ and its estimation $\hat{x}_2$]
  {\includegraphics[width=0.31\textwidth]{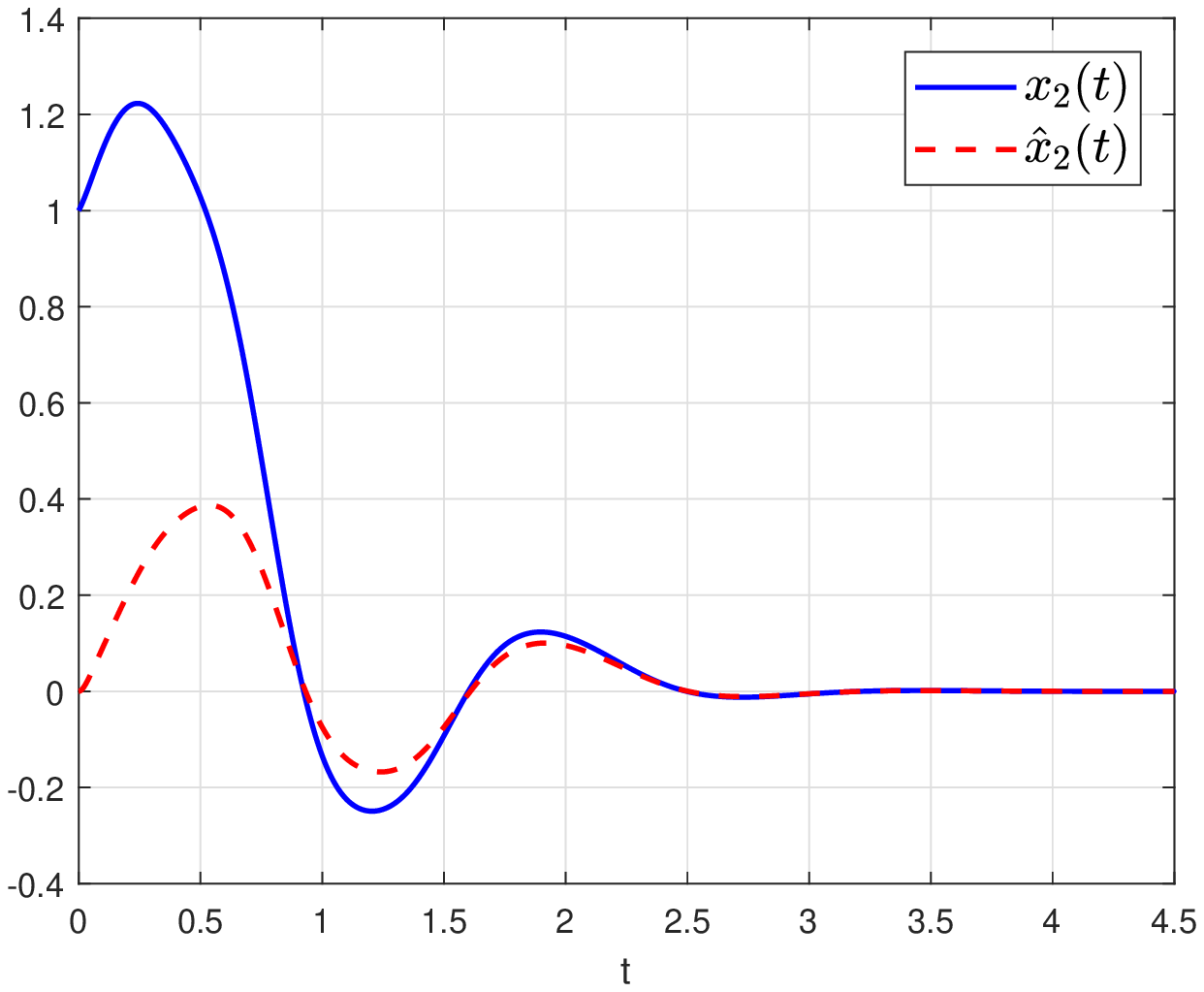}}
  \subfigure[$d -Q\hat{v}$ and  controller]
  {\includegraphics[width=0.31\textwidth]{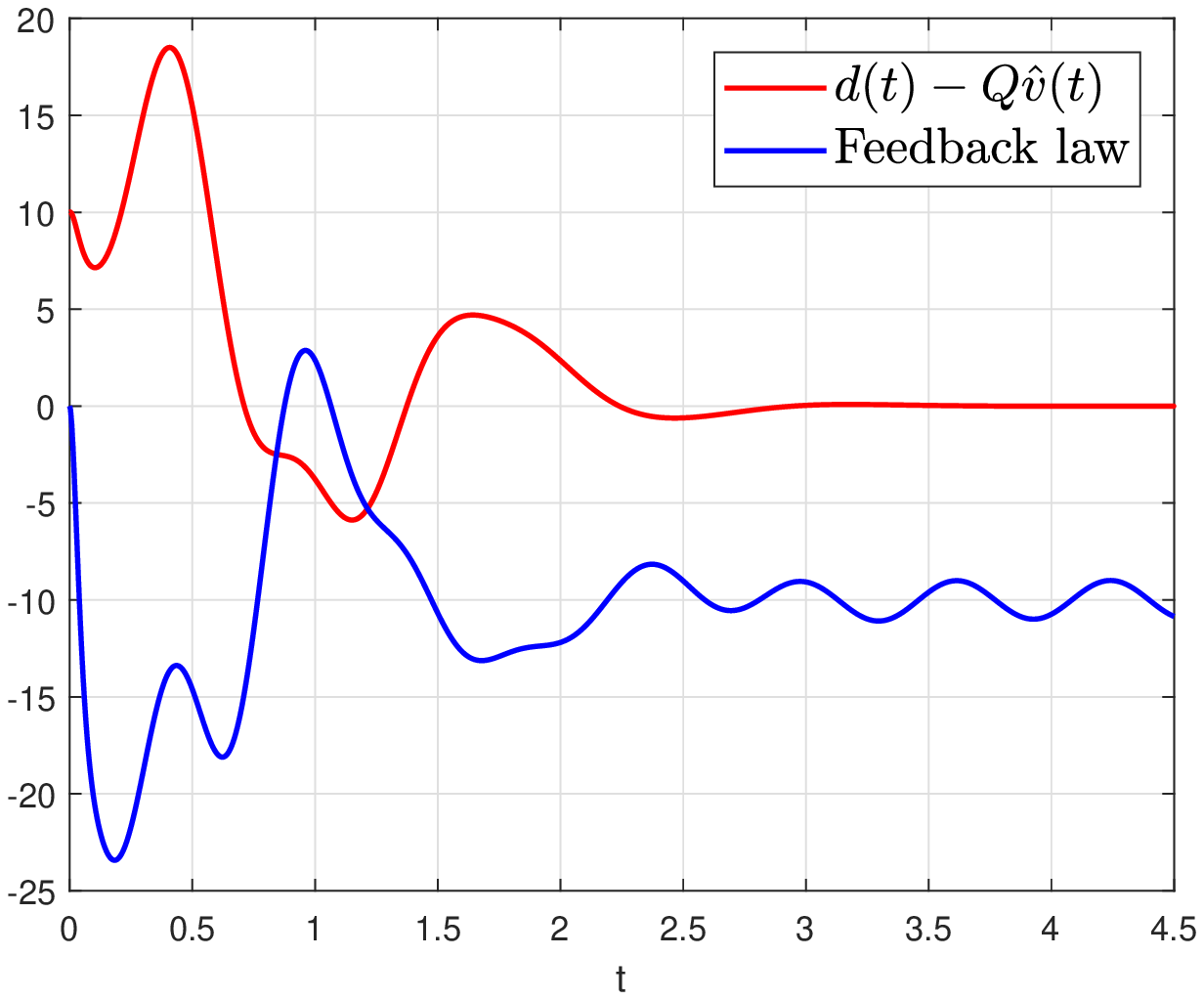}}
  \caption{The dynamics of disturbance is completely known.}
\label{F3}
\end{figure}

  \begin{figure}[!htb]\centering
\subfigure[$x_1 $ and its estimation $\hat{x}_1$]
 {\includegraphics[width=0.31\textwidth]{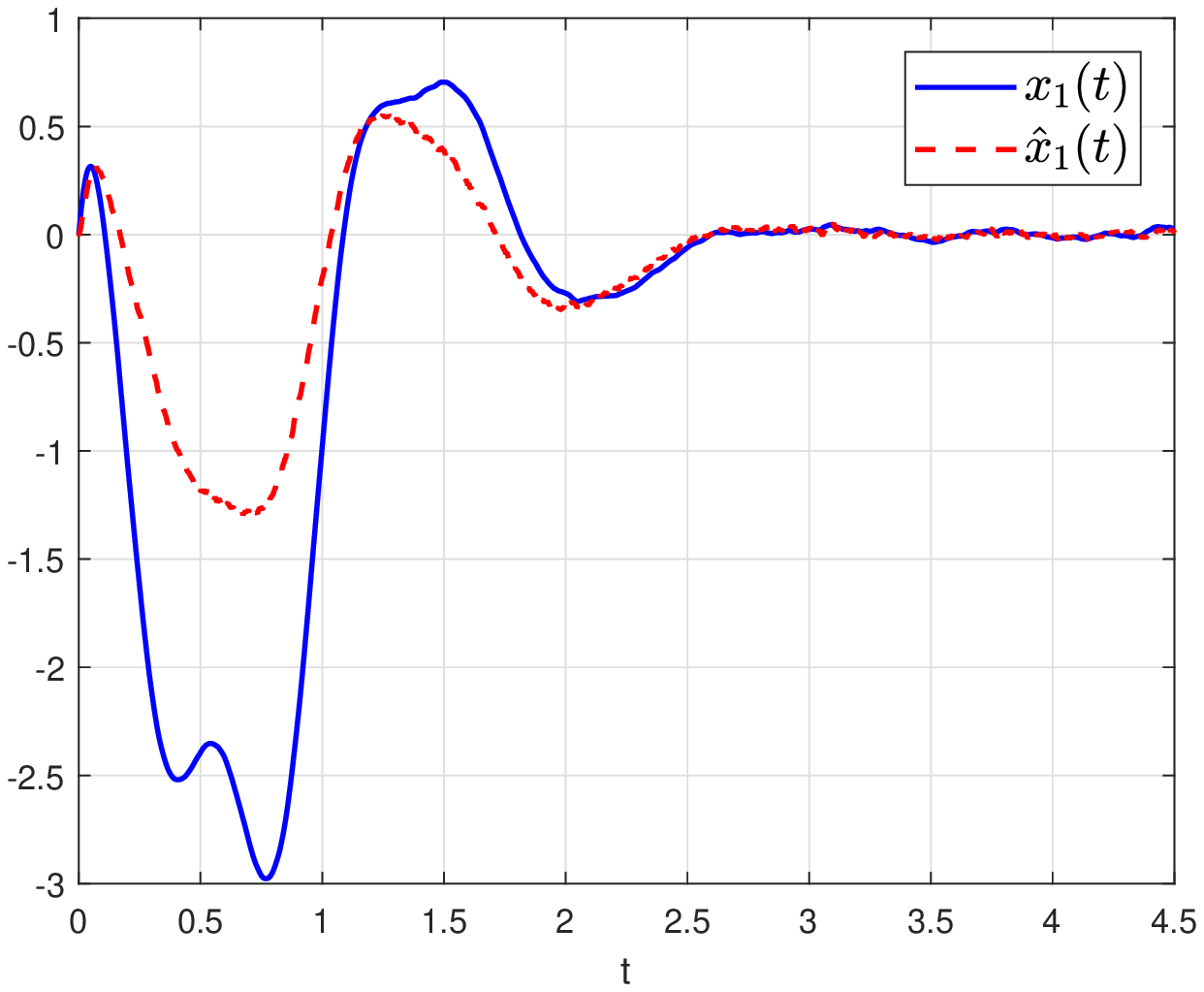}}
\subfigure[$x_2 $ and its estimation $\hat{x}_2$]
  {\includegraphics[width=0.31\textwidth]{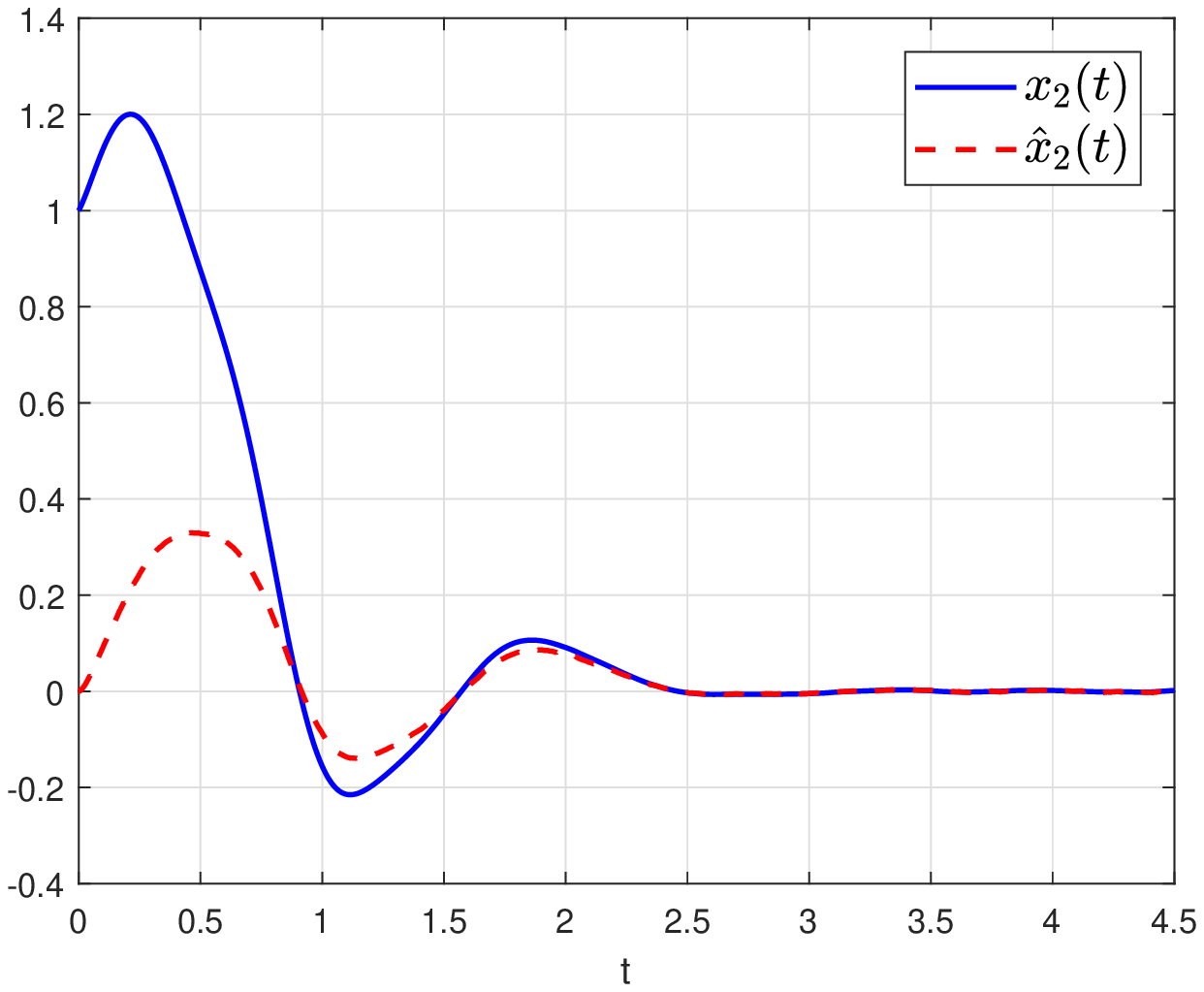}}
  \subfigure[$d -Q\hat{v}$ and  controller]
  {\includegraphics[width=0.31\textwidth]{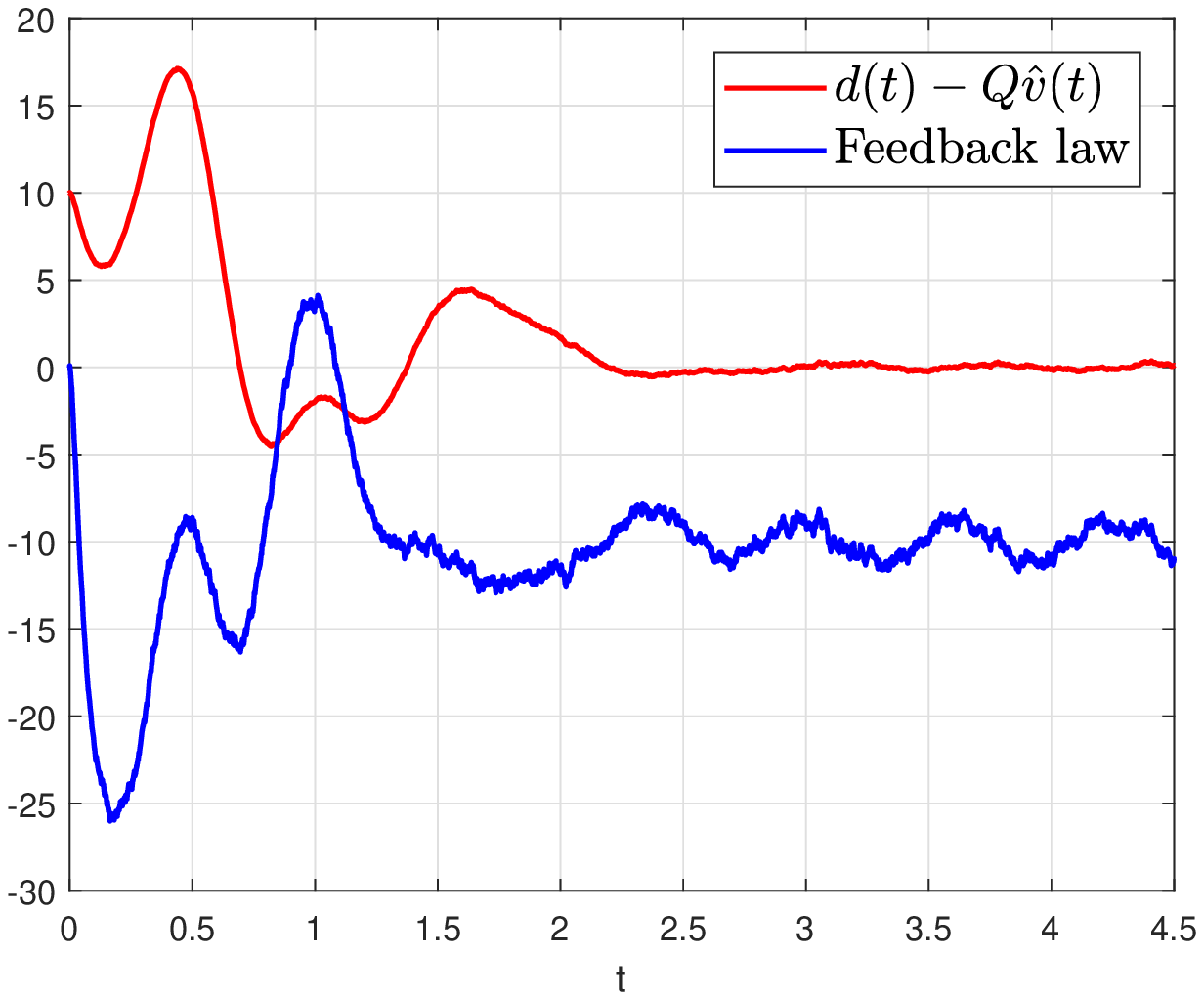}}
  \caption{The disturbance dynamics are $G_2$ and the measurement is corrupted by white noise.}
\label{F4}
\end{figure}

  Since the observer gain $\omega_o$  is relatively small, the error of the disturbance estimation
  is not very small for the case  $G_1=0$. However, if the have known the prior information
   $\sigma (G_2)=\{0,\pm9.5 i\}$, the accuracy of disturbance estimation is  improved significantly. When
   the disturbance dynamics are completely known, the  error of the disturbance estimation
   is convergent to   zero. Moreover, Figure \ref{F4} shows that the proposed EDO and its feedback are still insensitive to the measurement noise.

%In both cases, the disturbance can be estimated effectively and the stabilizations are achieved.
%The input random  disturbance  has been absorbed by the bandwidth $\omega_c$ of the controller  effectively.
%Since the more prior information are used, the performance of case 1 is better than case 2.
%In the case 2, which is actually a high-gain observer, the observer  bandwidth $\omega_o$ of the observer should be very large so that the high-frequency term $10\cos 30t$ can be suppressed.
%However, the  high-frequency term can be removed by the extended dynamics dominated by \dref{2020281800}. Small bandwidth $\omega_o$ is enough to suppress the remaining disturbance.

 Finally, we   point out that the peaking phenomenon takes place when  we improve more the convergent rate of the observer.
  This is caused by  the high-gain and the  order of  extended dynamics.
   In all  simulations, the output is technically  chosen as  $y(t)=(1-e^{- t})C[x_1(t)\ x_2(t)]^{\top}$ to avoid the peaking phenomenon.

%   Since   $\|\dot{d}\|_{\infty}=300$  is relatively large,
%  one needs  a  sufficiently  large gain  in conventional ESO to cope with this disturbance.
% However, thanks to the  the prior  dynamic information ,
% we can  obtain high accuracy of the observer with a relatively small gain  \dref{202029922}.

\section{Conclusions}\label{conclusion}

In this paper, a novel  dynamics compensation approach is developed to stabilize   linear systems
 with input disturbance.  An extended dynamic observer (EDO)  is designed, in terms of both the prior information and the online measurement information,   to estimate both the disturbance and the system state simultaneously.
 The EDO takes almost all advantages from ESO  and IMP. More specifically, it
 possesses strong robustness to the system and disturbance, as the ESO in ADRC, and at the same time, it proposes a feasible way to utilize  as much  the prior  information  of the disturbance and the control plant
 as possible.
When there is no  information  about disturbance dynamics, the EDO is reduced automatically to an extension of ESO in ADRC    which has
  achieved great success
 in many engineering applications.

 %An EDO based feedback law is proposed

%After  compensating the disturbance by its estimate, the control plant becomes
%almost a linear  one and thus a   feedback law can be  designed by pole assignment.
% Comparing with the conventional extended state observer (ESO), the EDO  is more efficient even if
% there is no  information  on disturbance dynamics.  This approach is a systematic extension of the active disturbance rejection control (ADRC) technology  which has
%  achieved great success
%in many engineering applications. The main advantage of the new  dynamics compensation approach
%lies in that it proposes a feasible way to utilize as much  the prior  information  of the disturbance and the control plant
% as possible. It is   almost a  model free control methodology because it still works, as the conventional ADRC technology,
%for systems where both system itself  and  disturbance are almost unknown.

We just present   a fundamental principle for the observer and controller design. The technical tunings  such as shaping the transient response  are still required in engineering applications.
From the theoretical point of view,
  this paper gives a systematic way to   utilize the  prior disturbance information  and the high-gain. The future works are the online computations of the disturbance dynamics.

%significant improvement of  the conventional ADRC theory,
%although  the room for improvement  of the performance remains. In this sense,
%   %For example, the problem of the  peaking value reduction.
%   the present  paper just presents  a fundamental principle for the observer and controller design.
%   Modifications to shape the transient response in engineering applications are required for further investigations.

%   {\color{red} some parameter adjustment trick can reduce the overshoot }

% Given the information about the disturbance, the performance of the disturbance   observer  depends on the utilization of these known information.

\end{document}